\definecolor{darkred}{rgb}{0.8,0.1,0.1}
\theoremstyle{plain}
\newtheorem{theo}{Theorem}[section]
\newtheorem{lem}[theo]{Lemma}
\newtheorem{propo}[theo]{Proposition}
\newtheorem{cor}[theo]{Corollary}
\theoremstyle{definition}
\newtheorem{defi}[theo]{Definition}
\newenvironment{ex}
  {\pushQED{\qed}\exx}
  {\popQED\endexx}
\newenvironment{rem}
  {\pushQED{\qed}\remm}
  {\popQED\endremm}
\numberwithin{equation}{section}
\def\nn{\nonumber}
\def\bbK{\mathbb{K}}
\def\bbR{\mathbb{R}}
\def\bbC{\mathbb{C}}
\def\bbN{\mathbb{N}}
\def\Hom{\mathrm{Hom}}
\def\id{\mathrm{id}}
\def\1{I}
\def\oone{\mathbbm{1}}
\def\op{\mathrm{op}}
\def\Loc{\mathbf{Loc}}
\def\Lan{\operatorname{Lan}}
\def\Ran{\operatorname{Ran}}
\def\Set{\mathbf{Set}}
\def\Alg{\mathbf{Alg}}
\def\CAlg{\mathbf{CAlg}}
\def\Vec{\mathbf{Vec}}
\def\CC{\mathbf{C}}
\def\DD{\mathbf{D}}
\def\EE{\mathbf{E}}
\def\MM{\mathbf{M}}
\def\Cat{\mathbf{Cat}}
\def\OCat{\mathbf{OrthCat}}
\def\Seq{\mathbf{Seq}}
\def\Op{\mathbf{Op}}
\def\CCC{\mathfrak{C}}
\def\DDD{\mathfrak{D}}
\def\O{\mathcal{O}}
\def\P{\mathcal{P}}
\def\As{\mathsf{As}}
\def\Com{\mathsf{Com}}
\def\colim{\mathrm{colim}}
\newcommand\und[1]{\underline{#1}}
\newcommand\ovr[1]{\overline{#1}}
\DeclareMathOperator*{\Motimes}{\text{\raisebox{0.25ex}{\scalebox{0.8}{$\bigotimes$}}}}
\def\sk{\vspace{2mm}}
\let\@fnsymbol\@alph
\title{%
Operads for algebraic quantum field theory
}
\author{%
Marco Benini$^{1,a}$, 
Alexander Schenkel$^{2,b}$\ and\
Lukas Woike$^{1,c}$\vspace{4mm}\\
{\small ${}^1$ Fachbereich Mathematik, Universit\"at Hamburg,}\\
{\small Bundesstr.~55, 20146 Hamburg, Germany.}\vspace{3mm}\\
{\small ${}^2$ School of Mathematical Sciences, University of Nottingham,}\\
{\small University Park, Nottingham NG7 2RD, United Kingdom.}\vspace{5mm}\\
{\small \begin{tabular}{ll}
Email: & ${}^a$~\texttt{benini@dima.unige.it}\\
& ${}^b$~\texttt{alexander.schenkel@nottingham.ac.uk}\\
& ${}^c$~\texttt{lukas.jannik.woike@uni-hamburg.de}\vspace{3mm}
\end{tabular}
}
}
\date{February 2020}
\begin{document}

\maketitle

\begin{abstract}
\noindent We construct a colored operad whose category of algebras is the category of algebraic quantum field theories. This is achieved by a construction that depends on the choice of a category, whose objects provide the operad colors, equipped with an additional structure that we call an orthogonality relation. This allows us to describe different types of quantum field theories, including theories on a fixed Lorentzian manifold, locally covariant theories and also chiral conformal and Euclidean theories. Moreover, because the colored operad depends functorially on the orthogonal category, we obtain adjunctions between categories of different types of quantum field theories. These include novel and interesting constructions, such as time-slicification and local-to-global extensions of quantum field theories. We compare the latter to Fredenhagen's universal algebra.
\end{abstract}


\paragraph*{Report no.:} ZMP-HH/17-26, Hamburger Beitr\"age zur Mathematik Nr.\ 682

\paragraph*{Keywords:} algebraic quantum field theory, locally covariant quantum field theory, colored operads, 
change of color adjunctions, Fredenhagen's universal algebra

\paragraph*{MSC 2010:} 81Txx, 18D50

\tableofcontents




\section{\label{sec:intro}Introduction and summary}
Algebraic quantum field theory \cite{HaagKastler} is a conceptually clear axiomatic framework
to define and investigate quantum field theories on Lorentzian spacetimes
from a model-independent perspective.
Combining the core principles of quantum theory and relativity, it describes
a quantum field theory on a spacetime $M$ in terms of a coherent assignment
$M\supseteq U \mapsto \mathfrak{A}(U)$ of associative and unital
algebras to suitable spacetime regions.  $\mathfrak{A}(U)$ is interpreted 
as the algebra of quantum observables of the theory 
that can be measured in the region $U\subseteq M$.
Given two spacetime regions $U$ and $V$ such that 
$U\subseteq V \subseteq M$, there is an algebra homomorphism
$ \mathfrak{A}(U) \to\mathfrak{A}(V)$ mapping 
observables in the smaller region $U$ to the bigger region $V$.
These maps are required to be coherent in the sense that
$\mathfrak{A}$ is a pre-cosheaf. Given two causally disjoint regions
$U_1$ and $U_2$ of some $V\subseteq M$,
i.e.\ no causal curve in $V$ links $U_1$ and $U_2$, the elements of
$\mathfrak{A}(U_1)$ and $\mathfrak{A}(U_2)$ are required to commute
within $\mathfrak{A}(V)$. This crucial property is called the {\em Einstein causality axiom}
and it formalizes the physical principle that information cannot propagate
faster than the speed of light. The traditional framework \cite{HaagKastler} of 
algebraic quantum field theory on a fixed Lorentzian 
spacetime $M$ may be generalized and adapted in order to capture
also other flavors of quantum field theory. For example, one may also consider
the category of {\em all} (globally hyperbolic) Lorentzian manifolds $\Loc$,
which leads to the concept of locally covariant quantum field 
theory \cite{Brunetti,FewsterVerch}. Moreover, there exist
algebraic approaches to chiral conformal quantum field theory
\cite{Kawahigashi,Rehren,BDHcft} and Euclidean quantum field theory
\cite{Schlingemann}, where Lorentzian spacetimes are replaced 
respectively by intervals in the circle $\mathbb{S}^1$ or by Riemannian manifolds.
The Einstein causality axiom is modified in 
such scenarios to the requirement that observables associated to disjoint regions 
$U_1\cap U_2=\emptyset$ commute.
\sk

From a more abstract point of view, one observes that all these flavors of algebraic
quantum field theory have the following common features: There is a category
$\CC$ describing the ``spacetimes'' of interest. In this category there exists a distinguished subset
${\perp} \subseteq \mathrm{Mor}\,\CC \, \times\,\mathrm{Mor}\,\CC$, 
which we call an {\em orthogonality relation} (cf.\ Definition \ref{def:orthcat}), 
that is formed by certain pairs of morphisms 
$c_1 \stackrel{f_1}{\longrightarrow} c \stackrel{f_2}{\longleftarrow} c_2$ with the same target. 
For Lorentzian theories $\perp$ is characterized by causal disjointness 
and for chiral conformal or Euclidean theories by disjointness.
We call the pair $\ovr{\CC} := (\CC,\perp)$ consisting of
a category $\CC$ and an orthogonality relation $\perp$ an {\em orthogonal category}. 
The role of the orthogonal category $\ovr{\CC}$ is thus to specify
the flavor or type of quantum field theory one would like to study.
A quantum field theory on $\ovr{\CC}$ is then described
by a functor $\mathfrak{A} : \CC\to \Alg$ to the category
of associative and unital algebras, which satisfies the
{\em $\perp$-commutativity axiom}: 
For every $(f_1, f_2)\in\perp$, the induced commutator
\begin{flalign}
\big[ \mathfrak{A}(f_1) (-) , \mathfrak{A}(f_2)(-) \big]_{\mathfrak{A}(c)}^{} 
~:~\mathfrak{A}(c_1)\otimes\mathfrak{A}(c_2)~\longrightarrow~\mathfrak{A}(c)
\end{flalign}
is equal to the zero map. Hence, the category of quantum field theories
on $\ovr{\CC}$ is the full subcategory of the functor category $\Alg^\CC$
consisting of all functors satisfying the $\perp$-commutativity axiom.
\sk

The aim of this paper is to develop a more elegant and powerful description
of the category of all quantum field theories on $\ovr{\CC}$ by using techniques
from operad theory. Loosely speaking, operads are mathematical structures
that encode $n$-ary operations and their composition properties on an abstract level.
We shall construct, for each orthogonal category $\ovr{\CC}= (\CC,\perp)$,
a colored operad $\O_{\ovr{\CC}}$ whose category of algebras $\Alg_{\O_{\ovr{\CC}}}$ 
is the category of quantum field theories on $\ovr{\CC}$.
Hence, we succeed in identifying and extracting the abstract algebraic structures 
underlying algebraic quantum field theory.
It is worth emphasizing very clearly the key advantage of our novel operadic
perspective in comparison to the traditional  perspective:
In the functor approach, $\perp$-commutativity is a
{\em \underline{property}} that a functor $\mathfrak{A} : \CC\to\Alg$ may or may not
satisfy. In contrast to that, in our operadic approach every algebra 
over the colored operad $\O_{\ovr{\CC}}$ satisfies the $\perp$-commutativity axiom
because it is part of the {\em \underline{structure}} that is encoded in the operad $\O_{\ovr{\CC}}$.
Below we shall comment more on the crucial difference between property and structure
and on the resulting advantages of our operadic approach to algebraic quantum field theory.
\sk

We will also prove that the assignment $\ovr{\CC}\mapsto \O_{\ovr{\CC}}$
of our colored operad to an orthogonal category is functorial. This means that
for every orthogonal functor $F : \ovr{\CC}\to\ovr{\DD}$,
i.e.\ a functor preserving the orthogonality relations, there is an associated
colored operad morphism $\O_{F} : \O_{\ovr{\CC}} \to\O_{\ovr{\DD}}$.
This morphism induces an adjunction
\begin{flalign}\label{eqn:introadjunction}
\xymatrix{
{\O_{F}}_! \,:\, \Alg_{\O_{\ovr{\CC} }} ~\ar@<0.5ex>[r]&\ar@<0.5ex>[l]  ~\Alg_{\O_{\ovr{\DD}}} \,:\, \O_{F}^\ast
}
\end{flalign}
between the corresponding categories of algebras, which allows us
to relate the quantum field theories on $\ovr{\CC}$ to
those on $\ovr{\DD}$. Hence, our operadic approach
not only provides us with powerful tools to describe
the categories of quantum field theories of a fixed kind, 
but also introduces novel techniques to connect and compare different types of theories.
We will show that these adjunctions include some novel and interesting
constructions. For example: (1)~Using localizations of orthogonal categories,
we obtain adjunctions that should be interpreted physically as {\em time-slicification}.
This means that we can assign to theories that do not necessarily satisfy
the time-slice axiom --  a kind of dynamical law in Lorentzian quantum field theories --
theories that do. (2)~Using full orthogonal subcategory embeddings
$j :\ovr{\CC}\to\ovr{\DD}$, where $\ovr{\DD}$ are the ``spacetimes'' of interest
and $\ovr{\CC}$ particularly ``nice spacetimes'' in $\ovr{\DD}$,
we obtain an adjunction that should be interpreted physically as a {\em local-to-global extension}
of quantum field theories from $\ovr{\CC}$ to $\ovr{\DD}$. In spirit, this is
similar to Fredenhagen's universal algebra construction 
\cite{Fre1,Fre2,Fre3}, which is formalized as a left Kan extension of the functor 
underlying a quantum field theory \cite{Lang}. There is however one major difference:
Left Kan extensions in general do not preserve the $\perp$-commutativity {\em property} of a functor,
i.e.\ it is a priori unclear whether the prescription of \cite{Fre1,Fre2,Fre3,Lang} succeeds 
in defining a quantum field theory on $\ovr{\DD}$. 
In stark contrast to that, our operadic version of the local-to-global extension 
does always define quantum field theories on $\ovr{\DD}$
because the $\perp$-commutativity axiom is encoded as a {\em structure} in our colored operad. 
We will study this particularly important example of an adjunction in detail
and hope that it convinces the reader that our operadic framework is very useful for 
applications.
\sk

Our original motivation for developing an operadic approach 
to algebraic quantum field theory came from
{\em homotopical algebraic quantum field theory} 
\cite{BSShomotopy,BSfiberedingroupoids,BSSstack}.
This is a longer-term research program of two of us (M.B.~and A.S.),
whose goal is to combine algebraic quantum field theory
with techniques from homotopical algebra in order to
capture the crucial higher categorical structures that are 
present in quantum gauge theories.
While studying toy-models of such theories in \cite{BSfiberedingroupoids}, 
we observed that both the functorial structure and $\perp$-commutativity
are in general only realized up to homotopy in homotopical
algebraic quantum field theory. Combining our operadic approach
of the present paper with homotopical algebra 
provides a precise framework to describe algebraic quantum 
field theories up to coherent homotopies. This has been now
achieved in our follow-up paper \cite{BSWhoRan} and
by Yau in \cite{YauQFT}.
\sk

We would like to comment briefly on the relationship between our
approach and factorization algebras \cite{CostelloGwilliam}. From a superficial point of view,
the two frameworks appear very similar as they both employ colored operads
to encode the algebraic structures underlying quantum field theories.
However, the factorization algebra operad is rather different
from our family of colored operads $\O_{\ovr{\CC}}$ as it captures
only the multiplication of those observables that are localized 
in disjoint spacetime regions. In the follow-up paper 
\cite{BPSPFAAQFT}, we proved that these differences
are inessential in a Lorentzian setting with suitable additional hypotheses
by constructing an equivalence between certain types of algebraic quantum field theories 
and certain types of factorization algebras. See also \cite{GwilliamRejzner} for a comparison 
result that is based on BV quantization techniques.
\sk

The outline of the remainder of this paper is as follows:
In Section \ref{sec:prelim} we briefly recall some 
definitions and constructions from the theory of colored operads 
and their algebras.
In Section \ref{sec:QFToperads} we construct and study
our family of colored operads $\O_{\ovr{\CC}}$, where $\ovr{\CC}$ is an orthogonal category. 
In particular, we provide two different constructions of $\O_{\ovr{\CC}}$,
a direct definition and a presentation by generators and relations.
We then prove that the category $\Alg_{\O_{\ovr{\CC}}}$ of algebras 
over $\O_{\ovr{\CC}}$ is the category
of quantum field theories on $\ovr{\CC}$. We shall also provide
concrete examples of orthogonal categories $\ovr{\CC}$ that 
are relevant for algebraic quantum field theory.
In Section \ref{sec:QFTadjunctions} we study in detail
the properties of the adjunctions \eqref{eqn:introadjunction} that are induced
by orthogonal functors $F : \ovr{\CC}\to\ovr{\DD}$.
We show that these include interesting constructions
such as time-slicification and local-to-global extensions.
In Section \ref{sec:QFTconstructions} we compare our operadic local-to-global
extension to Fredenhagen's universal algebra construction, which is given by left
Kan extension of the functor underlying a quantum field theory. The general result is that,
whenever the left Kan extension yields a $\perp$-commutative functor,
then it coincides with our operadic construction. We shall provide
examples when this is the case, but also counterexamples
for which Fredenhagen's construction yields a functor that is not
$\perp$-commutative.


\section{\label{sec:prelim}Preliminaries}
In this section we briefly recall the necessary aspects of the 
theory of colored operads and their algebras.
All colored operads in this paper will be $\Set$-valued,
however we consider operad algebras with values
in any bicomplete closed symmetric monoidal category $\MM$.
A detailed presentation of the theory of colored operads can be found in 
\cite{WhiteYau,Yau,BergerMoerdijk,Gambino}, see also
\cite{LodayVallette,Fresse,Rezk} for the uncolored case $\CCC=\{\ast\}$.

\subsection{Colored operads}
\begin{defi}\label{def:operad}
A {\it colored operad} $\O$ consists of the following data: 
\begin{itemize}
\item[a)] a set of {\it colors} $\CCC$, 

\item[b)] for all $t \in \CCC$, $n \geq 0$, 
$\und{c} := (c_1, \dots, c_n) \in \CCC^n$, 
a set of {\em  operations} $\O\big(\substack{t\\\und{c}}\big)$,

\item[c)] for all $t \in \CCC$, $n \geq 0$, 
$\und{c} \in \CCC^n$, $\sigma \in \Sigma_n$, 
a map $\O(\sigma): \O\big(\substack{t\\\und{c}}\big) 
\to \O\big(\substack{t\\\und{c}\sigma}\big)$, 
where $\Sigma_n$ denotes the permutation group on $n$ letters 
and $\und{c}\sigma := (c_{\sigma(1)},\dots,c_{\sigma(n)})$, 

\item[d)] for all $t \in \CCC$, $m \geq 1$, $\und{a} \in \CCC^m$, 
$k_i \geq 0$, $\und{b}_i \in \CCC^{k_i}$, for $i = 1, \dots, m$, 
a {\em composition} map $\gamma: \O\big(\substack{t\\\und{a}}\big) \times 
\prod_{i=1}^m \O\big(\substack{a_i\\\und{b}_i}\big) 
\to \O\big(\substack{t\\(\und{b}_1,\dots,\und{b}_m)}\big)$, 

\item[e)] for all $t \in \CCC$, a {\em unit} element $\oone \in \O\big(\substack{t\\t}\big)$. 
\end{itemize}
These data are subject to the obvious permutation action, 
equivariance, associativity and unitality axioms, see e.g.\ \cite[Definition 11.2.1]{Yau}. 
A morphism of colored operads $\phi: \O \to \P$ consists of
\begin{itemize}
\item[a)] a map of the underlying sets of colors $\widetilde{\phi} : \CCC \to \DDD$, 

\item[b)] for all $t \in \CCC$, $n \geq 0$, $\und{c} \in \CCC^n$, 
a map $\phi: \O\big(\substack{t\\\und{c}}\big) 
\to \P\big(\substack{\widetilde{\phi}(t)\\\widetilde{\phi}(\und{c})}\big)$, 
where $\widetilde{\phi}(\und{c}) := (\widetilde{\phi}(c_1), \dots, \widetilde{\phi}(c_n))$, 
\end{itemize}
such that the permutation actions, compositions and units are preserved. 
We denote by $\Op$ the category of colored operads and
by $\Op_\CCC$ the category of colored operads over a fixed non-empty 
set of colors $\CCC$ (also called {\em $\CCC$-colored operads}), 
with morphisms acting as the identity on colors. 
\end{defi}

\begin{ex}\label{ex:Op}
The simplest example of an uncolored operad, i.e.\ $\CCC = \{\ast\}$ is a singleton,
is the {\em commutative operad} $\Com \in \Op$. 
The operations in arity $n$ are given by a singleton 
$\Com(n) := \{\ast\}$, 
for all $n \geq 0$. 
This already determines the operad structure. 
The single operation in $\Com(n)$ corresponds to the unique way 
of multiplying $n$ elements of a commutative, associative and unital 
algebra, cf.\ Example \ref{ex:Alg} below. 
\sk

Another basic example of an uncolored operad 
is the associative operad $\As \in \Op$. 
For $n \geq 0$, its set of $n$-ary operations 
$\As(n) := \Sigma_n$ 
is given by the permutations on $n$ letters. 
Each operation corresponds to one of the $\Sigma_n$-many (a priori
different) ways of multiplying $n$ elements of an associative and unital algebra, 
cf.\ Example \ref{ex:Alg} below. 
The $\Sigma_n$-action on $\As(n)$ is given 
by the right action of $\Sigma_n$ on itself. 
For $m \geq 1$ and $k_i \geq 0$, for $i=1,\dots,m$, 
the composition 
\begin{flalign}
\sigma(\sigma_1,\dots,\sigma_n) ~:=~ \sigma\langle k_{\sigma^{-1}(1)},\dots, k_{\sigma^{-1}(m)} \rangle~
(\sigma_1\oplus\cdots\oplus\sigma_m)
\end{flalign} 
of $\sigma \in \As(m)$ with $\sigma_i \in \As(k_i)$, for $i=1,\dots,m$, 
is the group multiplication in $\Sigma_{k_1 + \dots + k_m}$ 
of the block permutation $\sigma\langle k_{\sigma^{-1}(1)},\dots, k_{\sigma^{-1}(m)} \rangle$
induced by $\sigma$ and the sum permutation $\sigma_1\oplus\cdots\oplus\sigma_m$ induced by 
the $\sigma_i$. The unit $e \in \As(1)$ is the unique element of $\Sigma_1$. 
The combinatorics of permutations implies the
equivariance, associativity and unitality axioms, i.e.\ $\As$ is an operad. 
\end{ex}

Forgetting the permutation action, 
composition and unit in Definition \ref{def:operad},
one obtains the category of colored (non-symmetric) sequences $\Seq$
and the categories $\Seq_\CCC$ of colored sequences over a fixed non-empty 
set of colors $\CCC$ (also called {\em $\CCC$-colored sequences}).
We have an evident forgetful functor $U: \Op_\CCC \to \Seq_\CCC$. 
\begin{theo}\label{theo:freeoperad}
Let $\CCC$ be a non-empty set of colors. There exists an adjunction 
\begin{flalign}
\xymatrix{
F \,:\, \Seq_\CCC ~\ar@<0.5ex>[r]&\ar@<0.5ex>[l] ~\Op_\CCC \,:\, U\quad.
}
\end{flalign}
The left adjoint $F : \Seq_\CCC \to \Op_\CCC$ is
called the {\em free $\CCC$-colored operad functor}.
\end{theo}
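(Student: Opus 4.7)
The plan is to give an explicit construction of the free $\CCC$-colored operad $F(X)$ on a symmetric sequence $X \in \SymSeq_\CCC(\MM)$ in terms of $\CCC$-colored rooted trees. For each pair $(\und{c},t)\in \Sigma_\CCC\times\CCC$, I would introduce the groupoid $\mathsf{Tree}_\CCC(\und{c},t)$ whose objects are rooted trees $T$ whose outgoing root edge carries the color $t$, whose leaves form an ordered profile equal to $\und{c}$, and whose internal edges are equipped with a $\CCC$-coloring; morphisms are isomorphisms of trees preserving root, leaf ordering and all colorings. Each internal vertex $v\in V(T)$ then inherits a profile $(\und{c}_v,t_v)$ from the colors of its incoming edges and its outgoing edge, and I would define
\[
F(X)\big(\substack{t\\\und{c}}\big) \;:=\; \int^{T\,\in\,\mathsf{Tree}_\CCC(\und{c},t)} \;\bigotimes_{v\in V(T)} X\big(\substack{t_v\\\und{c}_v}\big),
\]
where the tensor product over an empty vertex set is declared to be $\1\in\MM$. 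This coend exists because $\MM$ is cocomplete. The right action of $\Sigma_{|\und{c}|}$ needed to make $F(X)$ a symmetric sequence is induced by the reordering of leaf labels, which is a functor on the tree groupoid.

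Next I would equip $F(X)$ with the structure of a $\CCC$-colored operad in the sense of Remark \ref{rem:unpackoperad}. The unit $\oone\colon \1_\circ\to F(X)$ picks out, at each color $t$, the summand corresponding to the unique tree with no internal vertices (a bare edge colored by $t$), which by convention contributes a copy of $\1$. The composition $\gamma\colon F(X)\circ F(X)\to F(X)$ is induced by tree grafting: a tree $T$ with leaves labeled by $\und{a}=(a_1,\dots,a_m)$ together with trees $T_i$ of root color $a_i$ produces a tree $T(T_1,\dots,T_m)$ whose internal vertices are the disjoint union of those of $T$ and of the $T_i$, and whose leaves form the concatenated profile $\und{b}_1\otimes\cdots\otimes \und{b}_m$. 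Using the universal property of coends, this assembles into a morphism matching the shape \eqref{eqn:mucomponent}. I would then verify the equivariance axioms \eqref{eqn:Opeqv1}--\eqref{eqn:Opeqv2} by tracking how leaf and vertex reorderings interact with grafting, and the associativity and unitality axioms \eqref{eqn:Opass}--\eqref{eqn:Opunit} by exhibiting a canonical isomorphism of trees $T(T_1(T_{1,\bullet}),\dots,T_m(T_{m,\bullet})) \cong T(T_1,\dots,T_m)(T_{\bullet,\bullet})$ in the tree groupoid.

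To establish the adjunction, given an operad $\O\in \Op_\CCC(\MM)$ and a symmetric sequence morphism $\eta\colon X\to U(\O)$, I would define $\hat\eta\colon F(X)\to\O$ on each summand indexed by a tree $T$ as the iterated operadic composition in $\O$ of the labels $\eta\big(\substack{t_v\\\und{c}_v}\big)$ arranged according to the combinatorics of $T$; associativity of $\gamma_\O$ makes this independent of the order in which the compositions are performed and the equivariance of $\O$ makes it descend across tree isomorphisms, hence across the coend. Uniqueness is then automatic because any extension of $\eta$ to an operad morphism out of $F(X)$ must agree with $\eta$ on the single-vertex (corolla) trees, and corollas generate all trees under grafting.

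The main obstacle is the combinatorial bookkeeping for the tree groupoid $\mathsf{Tree}_\CCC(\und{c},t)$ and the precise verification that grafting descends to a well-defined and coherent operation on the coends, in particular that it is compatible with the symmetric group actions on leaves and with tree automorphisms at internal vertices with non-trivial symmetry. A more abstract alternative, which circumvents trees but delivers less explicit information, is to apply Kelly's transfinite construction of free monoids in a monoidal category: this applies here because $\SymSeq_\CCC(\MM)$ is cocomplete and, by Proposition \ref{propo:symseqrightclosed}, the functor $(-)\circ Y$ preserves all colimits, so the standard iterative construction of the free monoid on $X$ converges and yields the desired left adjoint to $U$.
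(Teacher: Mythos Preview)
The paper does not actually prove this theorem: the remark immediately following it explicitly says ``We do not have to explain any of these models in detail because existence of the free $\CCC$-colored operad functor is sufficient for our work,'' and simply points to three sources --- Rezk's free monoid construction, the White--Yau description of colored operads as algebras over another colored operad, and the Berger--Moerdijk/Yau tree model. Your proposal sketches the third of these and mentions the first as an alternative, so you are squarely within the approaches the paper has in mind; in fact you are supplying more detail than the paper does.

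Your tree construction is the standard one and is correct in outline. The one place to tighten is your abstract alternative via Kelly. Right closedness (Proposition~\ref{propo:symseqrightclosed}) tells you that $(-)\circ Y$ preserves colimits, but that is not the hypothesis Kelly's transfinite construction uses: the iteration that builds the free monoid on $X$ proceeds via the endofunctor $\1_\circ \sqcup X\circ(-)$, so what you need is that $X\circ(-)$ preserves the relevant chain colimits. The paper itself notes (just after Proposition~\ref{propo:symseqrightclosed}) that $X\circ(-)$ is \emph{not} a left adjoint, so this requires a separate argument --- typically that $X\circ(-)$ preserves filtered colimits, which does hold here because the formula \eqref{eqn:circleproductexplicit} involves only coends and finite tensor products in $\MM$, and $\MM$ is closed. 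Rezk's appendix (the paper's first citation) handles exactly this point. So your alternative route is valid, but the justification you gave invokes the wrong side of the circle product.
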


\begin{rem}\label{rem:freeoperad}
There exist various (equivalent) models for the free $\CCC$-colored operad functor,
see e.g.\  \cite{BergerMoerdijk,Rezk,WhiteYau,Yau}. We do not have to describe 
any particular model in full detail, because existence of the free $\CCC$-colored operad 
functor is sufficient for proving our results.
In order to introduce the reader to the graphical notations used 
in Section \ref{subsec:gensrels}, it is useful to sketch very briefly the 
construction of free colored operads by rooted trees, 
see e.g.\ \cite[Chapters 19 and 20]{Yau} for the details.
Consider a $\CCC$-colored sequence $X\in \Seq_\CCC$,
whose elements $o\in X\big(\substack{t \\ \und{c}}\big)$ are interpreted
as the generating operations from $\und{c} = (c_1,\dots,c_n)$ to $t$.
We may visualize these generating operations by $\CCC$-colored corollas
\begin{flalign}
\begin{tikzpicture}[cir/.style={circle,draw=black,inner sep=0pt,minimum size=2mm},
        poin/.style={circle, inner sep=0pt,minimum size=0mm}]
\node[poin] (Mout) [label=above:{\small $t$}] at (0,0.8) {};
\node[poin] (Min1) [label=below:{\small $c_1$}] at (-0.5,0) {};
\node[poin] (Min2) [label=below:{\small $\, c_n$}] at (0.5,0) {};
\node[poin] (Min0) [label=below:{\small $\cdots$}] at (0,0) {};
\node[poin] (V)  [label=left:{\small $o$}] at (0,0.4) {};
\draw[thick] (Min1) -- (V);
\draw[thick] (Min2) -- (V);
\draw[thick] (V) -- (Mout);
\end{tikzpicture}
\end{flalign}
The operations of the free $\CCC$-colored operad $F(X)\in \Op_\CCC$
are then given by $\CCC$-colored planar rooted trees, together with
a decoration of the inner vertices by generating operations and a 
permutation of the input profile. The full definition is 
given in \cite[Definitions 19.1.1 and 20.1.2]{Yau}. 
The operations of $F(X)$ admit a convenient
graphical representation, for example 
\begin{flalign}
\begin{tikzpicture}[cir/.style={circle,draw=black,inner sep=0pt,minimum size=2mm},
        poin/.style={circle, inner sep=0pt,minimum size=0mm}]
\node[poin] (Mout) [label=above:{\small $t$}] at (0,0.8) {};
\node[poin] (V1)  [label=left:{\small $o_2$}] at (0,0.4) {};
\node[poin] (Mmid1)  at (-0.3,0) {};
\node[poin] (Mmid2) [label=right:{\small $o_1$}]  at (0.3,0) {};
\node[poin] (Min1)   at (-0.6,-0.4) {};
\node[poin] (Min2)  at (0,-0.4) {};
\node[poin] (Min3)  at (0.6,-0.4) {};
\node[poin] (Mperm1) [label=below:{\small $c_1$}]  at (-0.6,-0.8) {};
\node[poin] (Mperm2) [label=below:{\small $c_2$}]   at (0,-0.8) {};
\node[poin] (Mperm3) [label=below:{\small $c_3$}]   at (0.6,-0.8) {};
\draw[thick] (Min1) -- (V1);
\draw[thick] (Mmid2) -- (V1);
\draw[thick] (V1) -- (Mout);
\draw[thick] (Min2) -- (Mmid2);
\draw[thick] (Min3) -- (Mmid2);
\draw[thick] (Mperm1) -- (Min3);
\draw[thick] (Mperm2) -- (Min1);
\draw[thick] (Mperm3) -- (Min2);
\end{tikzpicture}
\end{flalign}
with $o_1 \in X\big(\substack{b \\ (c_3,c_1)}\big) $
and $o_2\in X\big(\substack{t \\ (c_2,b)}\big)$ defines an operation in 
$F(X)\big(\substack{ t \\ (c_1,c_2,c_3)}\big)$. This intuitive graphical notation 
will be used in Section \ref{subsec:gensrels} to obtain
a presentation of our algebraic quantum field theory operad by generators and relations.
\end{rem}

The category $\Seq_\CCC$ is obviously complete and cocomplete. 
By \cite[Theorem 3.8]{PavlovScholbach}, the same is true 
for the category $\Op_\CCC$. 
\begin{propo}\label{propo:colimitsinOp}
The category $\Op_\CCC$ is complete and cocomplete.
Moreover, the forgetful functor $U: \Op_\CCC \to \Seq_\CCC$
creates all small limits, filtered colimits and reflexive coequalizers.
\end{propo}

\begin{rem}\label{rem:genrel}
We will later construct $\CCC$-colored operads from generators and relations. 
This is possible due to the free $\CCC$-colored operad functor 
and existence of coequalizers in $\Op_\CCC$. 
Let us illustrate this construction:
Given $X \in \Seq_\CCC$, consider
the free $\CCC$-colored operad $F(X) \in \Op_\CCC$.
Relations in $F(X)$ are implemented via a pair of parallel 
$\Seq_\CCC$-morphisms $r_1,r_2 : R \rightrightarrows F(X)$.
(We suppress here and in the following the forgetful functor 
$U: \Op_\CCC \to \Seq_\CCC$ from our notation.)
The adjunction in Theorem \ref{theo:freeoperad} defines a pair of 
parallel $\Op_\CCC$-morphisms $r_1,r_2 : F(R)\rightrightarrows F(X)$ 
(still denoted by the same symbols), 
whose coequalizer is the $\CCC$-colored operad generated by $X\in \Seq_\CCC$
subject to the relations $r_1,r_2 : R \rightrightarrows F(X)$. 
\end{rem}

\subsection{\label{subsec:Alg}Algebras over colored operads}
Algebras over a colored operad $\O$
should be interpreted as concrete realizations of the abstract 
operations encoded by $\O$. 
For the definition, let us fix once and for all a bicomplete 
closed symmetric monoidal category 
$\MM$. The symmetric monoidal structure 
of $\MM$ will be denoted by $(\otimes, I, \tau)$, 
where $\otimes: \MM \times \MM \to \MM $ is the monoidal product, 
$I \in \MM$ the monoidal unit and $\tau$ the symmetric braiding. 
Because $\MM$ is by assumption bicomplete, it is tensored and cotensored over $\Set$. 
With a slight abuse of notation, we shall denote the $\Set$-tensoring by $S\otimes m :=\coprod_{s\in S} m $,
for all $S\in \Set$ and $m\in\MM$. 
Relevant examples are the category of sets $\Set$ (with Cartesian product) and 
the category of $\bbK$-vector spaces $\Vec_\bbK$ (with the usual tensor product). 
 
\begin{defi}\label{def:Alg}
Let $\O$ be a $\CCC$-colored operad. An {\em algebra $A$ over $\O$ with values in $\MM$}
(or briefly, an {\em $\O$-algebra in $\MM$}) consists of the following data: 
\begin{itemize}
\item[a)] for all $t \in \CCC$, an object $A_t \in \MM$, 

\item[b)] for all $t \in \CCC$, $n \geq 0$, $\und{c} \in \CCC^n$, 
an $\MM$-morphism $\alpha: \O\big(\substack{t \\ \und{c}}\big) \otimes A_{\und{c}} 
\to A_t$ (called {\em $\O$-action}), 
where $A_{\und{c}} := \bigotimes_{i=1}^n A_{c_i}$. 
We write $\alpha(o): A_{\und{c}} \to A_t$ for 
the component of $\alpha$ at $o \in \O\big(\substack{t \\ \und{c}}\big)$. 
\end{itemize}
These data are subject to the obvious equivariance, associativity 
and unitality axioms with respect to the permutation action, 
composition and unit of $\O$, see e.g.\ \cite[Definition 13.2.3]{Yau}. 
A morphism of $\O$-algebras $\kappa : A \to B$ consists of a family of $\MM$-morphisms 
$\kappa : A_t \to B_t$, for all $t\in \CCC$, which is compatible with the
$\O$-actions, i.e.\ $\kappa \,\alpha^A = \alpha^B\, (\id\otimes \Motimes_{i=1}^n\kappa )$.
We denote the category of $\O$-algebras in $\MM$ by $\Alg_\O(\MM)$. 
\end{defi}

\begin{ex}\label{ex:Alg}
With reference to Example \ref{ex:Op}, we note that
$\Alg_{\Com}(\MM) = \CAlg(\MM)$ is the category of commutative, associative
and unital algebras in $\MM$ and that
$\Alg_{\As}(\MM) = \Alg(\MM)$ is the category of associative
and unital algebras in $\MM$.
\end{ex}

Let us recall some relevant categorical properties of $\Alg_\O(\MM)$, 
for $\O\in\Op_\CCC$ any $\CCC$-colored operad. Denoting
by $\MM^\CCC$ the category of functors from $\CCC$ (seen as a discrete category)
to $\MM$, there is an evident functor $U_\O: \Alg_\O(\MM) \to \MM^\CCC$
which forgets the $\O$-action. This functor has a left adjoint, which may
be computed in terms of coends, see e.g.\ \cite{Fosco,MacLane} for a detailed discussion
of (co)ends and also \cite{FuchsSchweigert}  for their applications to quantum field theory. 
\begin{theo}\label{theo:freealgebraoperad}
Let $\O$ be a $\CCC$-colored operad. There exists an adjunction 
\begin{flalign}
\xymatrix{
F_\O \,:\, \MM^\CCC ~\ar@<0.5ex>[r]&\ar@<0.5ex>[l] ~\Alg_{\O}(\MM) \,:\, U_\O \quad.
}
\end{flalign}
Explicitly, for $X\in\MM^\CCC$, the free $\O$-algebra 
$F_\O(X)$ consists of the following data: 
\begin{itemize}
\item for all $t\in\CCC$, the $\MM$-object
\begin{flalign}\label{eqn:freealgformula}
F_\O(X)_t ~:=~ \int^{\und{a}\in\Sigma_\CCC} \O\big(\substack{t\\\und{a}}\big) \otimes X_{\und{a}} 
\end{flalign} 
defined by the coend over the permutation groupoid\footnote{The objects
of $\Sigma_\CCC$ are all tuples $\und{c} = (c_1,\dots,c_n)$ of colors (of arbitrary finite 
length $n\geq 0$) and the morphisms are all right permutations 
$\sigma : \und{c}\to \und{c}\sigma$.} 
$\Sigma_\CCC$ of the set of colors $\CCC$,

\item for all $t\in\CCC$, $n \geq 0$, $\und{c}\in\CCC^n$, 
the $\MM$-morphism 
\begin{flalign}
\gamma_X\, :\,  \O\big(\substack{t\\ \und{c}}\big) \otimes F_\O(X)_{\und{c}} 
~\longrightarrow~ F_\O(X)_t 
\end{flalign}
induced by the composition $\gamma$ of $\O$ and the universal property coends. 
\end{itemize}
The unit $\oone_X: X \to F_\O(X)$ of the adjunction is defined by the unit $\oone$ of $\O$ 
and the counit $\alpha: F_\O(A) \to A$ is given by the $\O$-action of $A \in \Alg_\O(\MM)$. 
\end{theo}
The following statement is proven in \cite[Proposition 4.2.1]{WhiteYau},
see also \cite[Proposition 1.3.6]{Fresse} for a more concrete proof 
for the case of uncolored operads, which generalizes 
to colored operads in a straightforward way. 
\begin{propo}\label{propo:colimitsinAlg}
Let $\O$ be a $\CCC$-colored operad.
The category $\Alg_\O(\MM)$ is complete and cocomplete.
Moreover, the forgetful functor $U_\O : \Alg_\O(\MM) \to \MM^\CCC$
creates all small limits, filtered colimits 
and coequalizers which are reflexive in the category $\MM^\CCC$.
\end{propo}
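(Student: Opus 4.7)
The plan is to lift the whole statement to the monadic picture: by Proposition \ref{propo:operads2monads} and Definition \ref{def:algebraoperad}, $\Alg(\O)$ is the category of algebras over the monad $T := \O \circ (-)$ on $\MM^\CCC$, and this category is complete and cocomplete since $\MM$ is. What remains is a (by now standard) package of three results about monad-algebra categories, which I would verify in sequence.

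First, I would treat limits and creation of limits. Given a diagram $D : \mathcal{J} \to \Alg(\O)$, take $L := \lim_j U_\O D(j)$ in $\MM^\CCC$ with projections $\pi_j : L \to D(j)$. The composites $T(\pi_j) : T(L) \to T(D(j)) \to D(j)$ (using the algebra structure maps of $D(j)$) form a cone on $D$ in $\MM^\CCC$, hence factor uniquely through a morphism $\alpha_L : T(L) \to L$. A direct diagram chase, using the monad axioms of $T$ and the compatibility of the $\pi_j$ with the algebra structures, shows $(L,\alpha_L)$ is an $\O$-algebra and is the limit of $D$ in $\Alg(\O)$, with $U_\O$ creating it. This is the familiar fact that the forgetful functor from $T$-algebras creates all limits that exist in the base.

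Second, and this is the main technical step, I would show that $T = \O \circ (-)$ preserves filtered colimits and reflexive coequalizers when restricted to $\MM^\CCC$. Using the explicit description in Lemma \ref{lem:coloredobjectcirc}, one has $(\O \circ X)_t \cong \int^{\und{a}} \O(\substack{t\\ \und{a}})\otimes X_{a_1}\otimes\cdots\otimes X_{a_m}$; since $\Sigma_\CCC$ is a groupoid, this coend rewrites as a coproduct, over isomorphism classes of profiles, of coinvariants of finite tensor powers of $X$. Because $\MM$ is closed symmetric monoidal, the tensor product preserves colimits in each variable separately, and the classical fact that in such a category iterated tensor products jointly preserve filtered colimits and reflexive coequalizers (the former via the final diagonal $\mathcal{J}\to\mathcal{J}^n$ for filtered $\mathcal{J}$, the latter by the standard reflexive-coequalizer argument) then implies the required joint preservation. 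Taking coproducts and coinvariants by finite group actions is harmless for these two classes of colimits. This is where I expect the real work to sit: the jointness of the preservation across multiple tensor factors, while standard, is the one point that deserves explicit justification. Granted this, the usual monadic argument gives creation: for a reflexive pair (or filtered diagram) in $\Alg(\O)$, form the coequalizer (or colimit) $C$ in $\MM^\CCC$; preservation by $T$ yields $T(C) \cong \colim T(D)$, and the algebra structure maps of $D$ descend uniquely to a map $T(C) \to C$ which one checks is a $T$-algebra structure, making $C$ the colimit in $\Alg(\O)$.

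Third, for full cocompleteness one bootstraps from the previous step. The free/forgetful adjunction of Corollary \ref{cor:freealgebraoperad} ensures that the free functor $F = \O\circ (-) : \MM^\CCC \to \Alg(\O)$ preserves all colimits, and Lemma \ref{lem:freeresolutionofA} presents every $\O$-algebra as a canonical reflexive coequalizer of free $\O$-algebras, naturally in the algebra. For an arbitrary small diagram $D : \mathcal{J} \to \Alg(\O)$, I would then construct $\colim D$ as the reflexive coequalizer (in $\Alg(\O)$, which exists by the previous paragraph) of
\begin{equation*}
F\bigl(\colim_\mathcal{J}\, T\, U_\O D\bigr) \rightrightarrows F\bigl(\colim_\mathcal{J}\, U_\O D\bigr),
\end{equation*}
with the two parallel morphisms induced respectively by the multiplication $\mu$ of $T$ (precomposed with $FT \cong F\,U_\O\, F$) and by the structure maps of the $D(j)$; a reflector is provided by the unit of $T$ applied levelwise. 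A universal-property check identifies this coequalizer with $\colim D$, completing cocompleteness.
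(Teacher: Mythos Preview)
The paper does not actually prove this proposition: it simply cites \cite[Proposition~4.2.1]{WhiteYau} and points to \cite[Proposition~1.3.6]{Fresse} for the uncolored case. Your argument is correct and is precisely the standard monadic proof that those references contain. The three steps you outline --- creation of limits is generic for algebras over any monad; creation of filtered colimits and reflexive coequalizers reduces to showing that $T=\O\circ(-)$ preserves them, which you verify via the explicit formula in Lemma~\ref{lem:coloredobjectcirc} together with the joint preservation of sifted colimits by iterated tensor products in a closed symmetric monoidal category; general cocompleteness then bootstraps from the free presentation in Lemma~\ref{lem:freeresolutionofA} --- are exactly Fresse's argument, adapted to the colored setting. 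Your identification of the ``real work'' is also accurate: the only non-formal input is that $X\mapsto X_{a_1}\otimes\cdots\otimes X_{a_m}$ preserves reflexive coequalizers, which follows from closedness of $\MM$ and the $3{\times}3$ lemma for reflexive coequalizers (equivalently, siftedness of the walking reflexive pair). Nothing is missing.
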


Given any $\Op$-morphism $\phi: \O \to \P$, with underlying map
of sets of colors denoted by $\widetilde{\phi}:\CCC\to\DDD$,
there exists a pullback functor of algebras  
\begin{flalign}\label{eqn:fphiastAlg}
\phi^\ast \,:\, \Alg_\P(\MM)~\longrightarrow~\Alg_\O(\MM)\quad.
\end{flalign}
Concretely, for a $\P$-algebra $B\in\Alg_\P(\MM)$, define the $\O$-algebra 
$\phi^\ast(B)\in\Alg_\O(\MM)$ by $\phi^\ast(B)_t := B_{\widetilde{\phi}(t)} \in \MM$, 
for all $t \in \CCC$, and $\O$-action $\alpha^{\phi^\ast(B)}$ given by the composition of
\begin{flalign}
\xymatrix@C=3.5em{
\O\big(\substack{t\\\und{c}}\big) \otimes B_{\widetilde{\phi}(\und{c})}  \ar[r]^-{\phi \otimes \id}&
 \P\big(\substack{\widetilde{\phi}(t)\\\widetilde{\phi}(\und{c})}\big) \otimes B_{\widetilde{\phi}(\und{c})} \ar[r]^-{\alpha^B} & B_{\widetilde{\phi}(t)}
}\quad,
\end{flalign}
for all $t \in \CCC$, $n \geq 0$ and $\und{c} \in \CCC^n$. 
\begin{theo}\label{theo:coloradjunctionalgebras}
For any $\Op$-morphism $\phi: \O \to \P$, there exists an adjunction
\begin{flalign}\label{eqn:coloradjunctionalgebras}
\xymatrix{
\phi_! \,:\, \Alg_\O(\MM) ~\ar@<0.5ex>[r]&\ar@<0.5ex>[l] ~\Alg_\P(\MM) \,:\, \phi^\ast\quad.
}
\end{flalign}
\end{theo}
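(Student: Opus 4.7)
The plan is to obtain $(f,\phi)_!$ by an application of the adjoint lifting theorem, with a parallel explicit description via reflexive coequalizers. The starting point is the commutative square of functors
\begin{equation*}
\xymatrix{
\Alg(\P) \ar[r]^-{(f,\phi)^\ast} \ar[d]_-{U_\P} & \Alg(\O) \ar[d]^-{U_\O}\\
\MM^\DDD \ar[r]_-{f^\ast} & \MM^\CCC
}
\end{equation*}
whose commutativity is immediate from \eqref{eqn:fphipullalgebra}: on underlying colored objects $(f,\phi)^\ast$ is simply $f^\ast$. The vertical arrows are the monadic forgetful functors of Corollary \ref{cor:freealgebraoperad}, the bottom arrow admits a left adjoint $f_!$ by Proposition \ref{propo:colobadjunction}, and both algebra categories have all reflexive coequalizers by Proposition \ref{propo:colimitsinAlg}. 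These are precisely the hypotheses of the classical adjoint lifting theorem (see e.g.\ \cite{handbook2}), which immediately produces the desired left adjoint $(f,\phi)_!$.

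To describe $(f,\phi)_!$ more explicitly, let $L\colon \MM^\CCC\to\Alg(\P)$ be the functor $L(X)=\P\circ f_!(X)$, which by composing the adjunctions of Proposition \ref{propo:colobadjunction} and Corollary \ref{cor:freealgebraoperad} is left adjoint to the composite $f^\ast\circ U_\P = U_\O\circ(f,\phi)^\ast$. For $(A,\alpha)\in\Alg(\O)$, I would define $(f,\phi)_!(A,\alpha)$ as the coequalizer in $\Alg(\P)$ of a reflexive pair
\begin{equation*}
L(\O\circ A)\rightrightarrows L(A)\quad,
\end{equation*}
which exists by Proposition \ref{propo:colimitsinAlg}. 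The first morphism is $L(\alpha)$. The second is the unique morphism of free $\P$-algebras induced, via the free-forgetful adjunction $\P\circ(-)\dashv U_\P$, by the $\MM^\DDD$-morphism $f_!(\O\circ A)\to\P\circ f_!(A)$ whose transpose under $f_!\dashv f^\ast$ is the composite
\begin{equation*}
\O\circ A\xrightarrow{\phi\circ\id} f^\ast(\P)\circ A\xrightarrow{\id\circ\eta_A} f^\ast(\P)\circ f^\ast f_!(A)\xrightarrow{f^\ast_2} f^\ast(\P\circ f_!(A))\quad,
\end{equation*}
combining the operad morphism $\phi$, the unit $\eta$ of $f_!\dashv f^\ast$, and the lax monoidal structure of Proposition \ref{propo:laxmonoidal}. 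Reflexivity of the pair is witnessed by applying $L$ to the operadic unit $\oone\colon A\to\O\circ A$ of $\O$, using the unit axiom of the algebra $(A,\alpha)$ and the compatibility of $\phi$ with the operadic units.

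The main obstacle is the verification that this construction genuinely satisfies the universal property of a left adjoint, i.e.\ that there is a natural bijection $\Alg(\P)\big((f,\phi)_!(A,\alpha),(B,\beta)\big)\cong \Alg(\O)\big((A,\alpha),(f,\phi)^\ast(B,\beta)\big)$. This reduces to a diagram chase: one uses Lemma \ref{lem:freeresolutionofA} to present $(A,\alpha)$ as a reflexive coequalizer of free $\O$-algebras $\O\circ\O\circ A\rightrightarrows\O\circ A$, applies the composite adjunction $L\dashv f^\ast\circ U_\P$ to this resolution, and matches the resulting presentation with the defining coequalizer of $(f,\phi)_!(A,\alpha)$. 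Naturality in $(A,\alpha)$ and $(B,\beta)$, as well as functoriality of $(f,\phi)_!$, then follow automatically from the universal properties involved.
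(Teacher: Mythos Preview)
Your proof is correct and follows essentially the same approach as the paper: both set up the commutative square $U_\O\,(f,\phi)^\ast = f^\ast\,U_\P$ and invoke the adjoint lifting theorem from \cite{handbook2}, using Corollary~\ref{cor:freealgebraoperad}, Proposition~\ref{propo:colobadjunction}, and Proposition~\ref{propo:colimitsinAlg} to verify the hypotheses. Your explicit reflexive coequalizer description also matches the one the paper records immediately after the theorem (in Proposition~\ref{propo:changeofcolorcoeq}); the only organizational difference is that the paper separates this into a follow-up proposition rather than including it in the proof, and does not reprove the universal property since the adjoint lifting theorem already guarantees it.
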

\begin{proof}
This is an application of the Adjoint Lifting Theorem, see e.g.\ \cite[Chapter 4.5]{handbook2}.
Concretely, we have the following diagram of categories and functors
\begin{flalign}\label{eqn:adjointliftingdiagram}
\xymatrix{
\ar@<0.5ex>[d]^-{U_\O} \Alg_\O(\MM) ~&&~ \ar[ll]_-{\phi^\ast}\Alg_\P(\MM) \ar@<0.5ex>[d]^-{U_\P} \\
\ar@<0.5ex>[u]^-{F_\O} \MM^\CCC \ar@<0.5ex>[rr]^-{\widetilde{\phi}_!}~&&~\ar@<0.5ex>[ll]^-{\widetilde{\phi}^\ast} \MM^\DDD\ar@<0.5ex>[u]^-{F_\P}
}
\end{flalign}
The vertical adjunctions have been established 
in Theorem \ref{theo:freealgebraoperad}. The bottom horizontal adjunction 
is given by pullback $\widetilde{\phi}^\ast: \MM^\DDD \to \MM^\CCC$ and left Kan extension
$\widetilde{\phi}_!: \MM^\CCC \to \MM^\DDD$ along the underlying map of sets of 
colors $\widetilde{\phi} : \CCC \to \DDD$.
By definition of the pullback functor \eqref{eqn:fphiastAlg}, one has 
$U_\O \, \phi^\ast = \widetilde{\phi}^\ast  \, U_\P$, hence the desired 
left adjoint $\phi_!$ exists by \cite[Theorem 4.5.6]{handbook2}.
\end{proof}
The proof of the adjoint lifting theorem in \cite[Chapter 4.5]{handbook2}
provides an explicit construction of $\phi_!(A) \in \Alg_\P(\MM)$ in terms 
of a reflexive coequalizer, for any $A \in \Alg_\O(\MM)$.
Suppressing again all forgetful functors, the relevant pair of parallel morphisms 
in $\Alg_\P(\MM)$ is of the form
\begin{flalign}\label{eqn:colcoequal}
\xymatrix{
F_\P \, \widetilde{\phi}_! \, F_\O(A) \ar@<0.5ex>[r]^-{\partial_0}  \ar@<-0.5ex>[r]_-{\partial_1} ~&~  
F_\P \, \widetilde{\phi}_!(A)\quad,
}
\end{flalign}
where  $\partial_0$ and $\partial_1$ are described as follows.
The $\Alg_\P(\MM)$-morphism $\partial_0$ is defined 
by the counit $\alpha : F_\O(A)\to A$ of the adjunction $F_\O \dashv U_\O$ (cf.\ 
Theorem \ref{theo:freealgebraoperad}), i.e.\
\begin{flalign}\label{eqn:partial0}
\xymatrix@C=4em{
F_\P \, \widetilde{\phi}_! \, F_\O(A) ~\ar[rr]^-{\partial_0 \,:= \, F_\P \, \widetilde{\phi}_!(\alpha)}&&~
F_\P \, \widetilde{\phi}_!(A)
}\quad.
\end{flalign}
The $\Alg_{\P}(\MM)$-morphism $\partial_1$ is defined by the units and counits of the adjunctions in
\eqref{eqn:adjointliftingdiagram} via
\begin{flalign}\label{eqn:partial1}
\xymatrix@C=4em{
\ar[d]_-{\text{unit of $\widetilde{\phi}_!\dashv \widetilde{\phi}^\ast$}} F_\P \, \widetilde{\phi}_! \, F_\O (A) 
\ar[rr]^-{\partial_1} ~&&~ F_\P \, \widetilde{\phi}_!(A)\\
\ar[d]_-{\text{unit of $F_\P \dashv U_\P$}} F_\P \, \widetilde{\phi}_! \, F_\O \, \widetilde{\phi}^\ast\, \widetilde{\phi}_!(A)
 ~&&~ F_\P \, F_\P \, \widetilde{\phi}_!(A) \ar[u]_-{\text{counit of $F_\P \dashv U_\P$}} \\
F_\P \, \widetilde{\phi}_! \, F_\O \, \widetilde{\phi}^\ast \, F_\P \, \widetilde{\phi}_!(A) \ar[rr]_-{\text{counit of $F_\O \dashv U_\O$}} 
~&&~ F_\P \, \widetilde{\phi}_! \, \widetilde{\phi}^\ast \, F_\P \, \widetilde{\phi}_!(A) \ar[u]_-{\text{counit of $\widetilde{\phi}_!\dashv \widetilde{\phi}^\ast$}}
}
\end{flalign}
Notice that \eqref{eqn:colcoequal} is reflexive with reflector 
defined by the unit $\oone_A: A \to F_\O(A)$ of $F_\O \dashv U_\O$  (cf.\ 
Theorem \ref{theo:freealgebraoperad}), i.e.\ 
\begin{flalign}\label{eqn:reflectorrrrr}
\xymatrix@C=4em{
F_\P \, \widetilde{\phi}_!(A) \ar[rr]^-{F_\P \, \widetilde{\phi}_!(\oone_A) } ~&&~ F_\P \, \widetilde{\phi}_! \, F_\O(A)
}\quad.
\end{flalign}
Proposition~\ref{propo:colimitsinAlg} implies that the colimit of the reflexive
pair of  parallel morphisms \eqref{eqn:colcoequal} is created by 
the forgetful functor. Summing up, we obtain
\begin{propo}\label{propo:changeofcolorcoeq}
The left adjoint functor $\phi_!$ of Theorem \ref{theo:coloradjunctionalgebras}
maps an $\O$-algebra $A \in \Alg_\O(\MM)$ to the $\P$-algebra
given by the reflexive coequalizer
\begin{flalign}
\phi_!(A) ~=~ \colim\Big(\xymatrix{
F_\P \, \widetilde{\phi}_! \, F_\O (A) \ar@<0.5ex>[r]^-{\partial_0}  \ar@<-0.5ex>[r]_-{\partial_1} ~&~ 
F_\P \, \widetilde{\phi}_!(A)
}\Big)\quad
\end{flalign}
in $\Alg_\P(\MM)$, where $\partial_0$ and $\partial_1$ are given in \eqref{eqn:partial0} and \eqref{eqn:partial1},
and the reflector is \eqref{eqn:reflectorrrrr}.
\end{propo}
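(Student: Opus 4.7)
The plan is to obtain $(f,\phi)_!(A,\alpha)$ by the standard construction underlying the adjoint lifting theorem \cite[Chapter 4.5]{handbook2}, applied to the square \eqref{eqn:adjointliftingdiagram}, and then to identify the two resulting parallel morphisms with $\partial_0$ and $\partial_1$ as defined in \eqref{eqn:partial0} and \eqref{eqn:partial1}.

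First, I would recall from Lemma \ref{lem:freeresolutionofA} that any $\O$-algebra $(A,\alpha)$ fits into a canonical reflexive coequalizer
\begin{flalign*}
\xymatrix@C=3em{
\big(\O\circ(\O\circ A),\gamma_{\O\circ A}\big) \ar@<0.5ex>[r]^-{\O\circ\alpha}\ar@<-0.5ex>[r]_-{\gamma_A} & \big(\O\circ A,\gamma_A\big) \ar[r]^-{\alpha} & (A,\alpha)
}
\end{flalign*}
in $\Alg(\O)$, with reflector given by $\oone_{\O\circ A}:\O\circ A\to\O\circ(\O\circ A)$. Since $(f,\phi)_!$ must preserve colimits as a left adjoint, $(f,\phi)_!(A,\alpha)$ is the reflexive coequalizer in $\Alg(\P)$ of the image of this pair under $(f,\phi)_!$. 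On free $\O$-algebras $\O\circ X$ for $X\in\MM^\CCC$, the value of $(f,\phi)_!$ is forced by the adjunctions to be the free $\P$-algebra $\P\circ f_!(X)$; this follows from the identity $U_\O(f,\phi)^\ast=f^\ast U_\P$ and the uniqueness of left adjoints (cf.\ the proof of \cite[Theorem 4.5.6]{handbook2}). Hence the coequalizer to compute is exactly of the form
\begin{flalign*}
\xymatrix{
\P\circ f_!(\O\circ A) \ar@<0.5ex>[r] \ar@<-0.5ex>[r] & \P\circ f_!(A)
}
\end{flalign*}
in $\Alg(\P)$, and it remains to identify the two arrows.

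Next, I would trace the two arrows through the adjoint lifting construction. The top arrow is the image of $\O\circ\alpha$, i.e.\ $\id\circ f_!(\alpha):\P\circ f_!(\O\circ A)\to\P\circ f_!(A)$, which is precisely $\partial_0$ in \eqref{eqn:partial0}. The bottom arrow comes from $\gamma_A:\O\circ(\O\circ A)\to \O\circ A$, which in the adjoint lifting picture is transported across the square by repeatedly using the units and counits of the vertical and horizontal adjunctions in \eqref{eqn:adjointliftingdiagram}. Spelling this out using Corollary \ref{cor:freealgebraoperad} (whose counits are the structure maps, and whose units are the $\oone$'s) and using Proposition \ref{propo:colobadjunction} for the unit and counit of $f_!\dashv f^\ast$, together with the description \eqref{eqn:fphipullalgebra} of the structure map on $(f,\phi)^\ast B$ in terms of $\phi$ and $f^\ast_2$, one recovers exactly the composition \eqref{eqn:partial1tmp}, whose simplified form is $\partial_1$ as in \eqref{eqn:partial1}. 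Reflexivity of the pair is inherited from reflexivity of the free resolution: the reflector $\oone_{\O\circ A}$ goes over to $\id\circ f_!(\oone_A)$, which is precisely \eqref{eqn:reflectorrrrr}; checking $\partial_0\,(\id\circ f_!(\oone_A))=\id$ is immediate from the unit-counit identity for $\O\circ(-)\dashv U_\O$, and $\partial_1\,(\id\circ f_!(\oone_A))=\id$ follows by unwinding the naturality of the units together with the compatibility of $\phi$ with $\oone$ encoded in $\phi$ being an operad morphism.

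Finally, I would verify that the coequalizer exists and lives in $\Alg(\P)$ by appealing to Proposition \ref{propo:colimitsinAlg}: reflexive pairs in $\Alg(\P)$ have coequalizers created by the forgetful functor $U_\P:\Alg(\P)\to\MM^\DDD$. The universal property of $(f,\phi)_!(A,\alpha)$ as displayed is then a formal consequence: applying $\Alg(\P)(-,(B,\beta))$ yields an equalizer of sets that, by the adjunctions $\P\circ(-)\dashv U_\P$ and $f_!\dashv f^\ast$ and by the construction \eqref{eqn:fphipullalgebra} of $(f,\phi)^\ast$, is canonically isomorphic to $\Alg(\O)((A,\alpha),(f,\phi)^\ast(B,\beta))$. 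The main obstacle I anticipate is not the existence or universality — both of which are formal once the lifting theorem is invoked — but rather the bookkeeping needed to match the abstract composite arising from \cite[Theorem 4.5.6]{handbook2} with the concrete $\partial_1$ in \eqref{eqn:partial1}; this amounts to a careful triangle-identity chase through the three adjunctions in \eqref{eqn:adjointliftingdiagram} and a single application of the lax monoidal structure morphism $f^\ast_2$.
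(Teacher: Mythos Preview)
Your proposal is correct and follows essentially the same approach as the paper: both unpack the adjoint lifting theorem of \cite[Chapter 4.5]{handbook2} applied to the square \eqref{eqn:adjointliftingdiagram}, identify the value on free algebras, trace the two parallel morphisms to obtain $\partial_0$ and $\partial_1$, observe reflexivity via \eqref{eqn:reflectorrrrr}, and invoke Proposition~\ref{propo:colimitsinAlg} for creation of the reflexive coequalizer. Your explicit use of Lemma~\ref{lem:freeresolutionofA} together with preservation of colimits by $(f,\phi)_!$ is exactly the mechanism behind the adjoint lifting construction, so there is no substantive difference.
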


\begin{rem}\label{rem:coloradjunctioncomposition}
Forming the pullback functor according to \eqref{eqn:fphiastAlg} preserves identities and 
compositions, i.e.\ $\id_\O^\ast = \id_{\Alg_\O(\MM)}$ 
and $(\psi \, \phi)^\ast = \phi^\ast \, \psi^\ast$, 
for all $\O\in\Op$ and all $\Op$-morphisms 
$\phi: \O \to \P$ and $\psi: \P \to \mathcal{Q}$.
Hence, from uniqueness (up to a unique natural isomorphism) 
of adjoint functors, we obtain also natural isomorphisms 
${\id_\O}_! \cong \id_{\Alg_\O(\MM)}$ 
and $(\psi \, \phi)_! \cong \psi_! \, \phi_!$. 
\end{rem}

\begin{rem}\label{rem:rightadjointofchangeofcolorpreservesreflexivecoeq}
We record for later use that the pullback functor
$\phi^\ast: \Alg_\P(\MM) \to \Alg_\O(\MM)$ 
along any $\Op$-morphism $\phi: \O \to \P$
preserves reflexive coequalizers:
Let us denote by $C\in \Alg_\P(\MM)$ the coequalizer for a reflexive pair of 
parallel $\Alg_\P(\MM)$-morphisms $A \rightrightarrows B$.
By Proposition \ref{propo:colimitsinAlg},
$\phi^\ast(C)$ is a reflexive coequalizer 
for $\phi^\ast(A) \rightrightarrows \phi^\ast(B)$ 
in $\Alg_\O(\MM)$ if and only if 
$U_\O \, \phi^\ast(C)$ is a reflexive coequalizer for $U_\O \, \phi^\ast(A) \rightrightarrows 
U_\O \, \phi^\ast(B)$ in $\MM^\CCC$. Using also
$U_\O \, \phi^\ast = \widetilde{\phi}^\ast \, U_\P$ and that $U_\P$ 
preserves reflexive coequalizers (cf.\ Proposition \ref{propo:colimitsinAlg}),
our claim follows from the fact that $\widetilde{\phi}^\ast: \MM^\DDD \to \MM^\CCC$ 
preserves colimits because it has a right adjoint given by {\em right} Kan extension
along the underlying map of sets of colors $\widetilde{\phi}: \CCC\to\DDD$.
\end{rem}


\section{\label{sec:QFToperads}Quantum field theory operads}
The aim of this section is to construct colored operads whose algebras describe quantum field theories.
Abstractly, we assign to any small category $\CC$ with orthogonality relation $\perp$
(cf.\ Definition \ref{def:orthcat} below) a colored 
operad that is inspired by the structures underlying algebraic quantum field theory.
The category $\CC$ should be interpreted as the category of ``spacetimes'' of interest, 
and the orthogonality relation $\perp$ encodes the commutative behavior of certain observables.
Our construction is very flexible and in particular it reveals an operadic structure underlying various
kinds of quantum field theories, including Haag-Kastler theories on Minkowski spacetime 
\cite{HaagKastler}, locally covariant theories on all Lorentzian spacetimes \cite{Brunetti,FewsterVerch},
chiral conformal theories \cite{Kawahigashi,Rehren,BDHcft} and also Euclidean theories \cite{Schlingemann}.
Each of these models is obtained by a different choice of category $\CC$ and orthogonality relation $\perp$, 
however the operadic structure is formally the same.

\subsection{\label{subsec:motivationQFT}Motivation}
Let $\MM$ be a bicomplete closed symmetric monoidal category
and $\CC$ a small category, which we interpret as the category of spacetimes of interest.
The basic idea of algebraic quantum field theory is to assign coherently to each spacetime $c\in\CC$
an associative and unital algebra $\mathfrak{A}(c) \in\Alg(\MM)$ in $\MM$ (cf.\ Example \ref{ex:Alg}), 
i.e.\ we consider a functor
\begin{flalign}\label{eqn:funCCAlg}
\mathfrak{A}\,:\,\CC~\longrightarrow~\Alg(\MM)\quad.
\end{flalign}
The algebra $\mathfrak{A}(c)$ is interpreted as the ``algebra of quantum observables'' 
that can be measured in the spacetime $c\in \CC$. The  algebra morphism
$\mathfrak{A}(f) : \mathfrak{A}(c)\to\mathfrak{A}(c^\prime)$ is the pushforward
of observables along the $\CC$-morphism $f:c\to c^\prime$ which is
interpreted as a spacetime embedding. 
\begin{rem}\label{rem:involutions}
For applications to physics, it is often necessary to consider associative and unital 
{\em $\ast$-algebras} over the complex numbers $\bbC$. We however decided to ignore
in the present paper the concept of $\ast$-involutions, because this makes our
presentation of the operadic structure underlying algebraic quantum field theories
more transparent. 
Using the technically more involved concept of {\em involutive categories} 
as target categories, one can generalize the results of the present paper 
to the case including $\ast$-involutions. 
We refer to our follow-up paper \cite{BSWinvolutions} where this has been
carried out in detail.
\end{rem}

It is crucial to demand that the functor 
$\mathfrak{A}$ satisfies a collection of physically motivated axioms. Inspired by 
the time-slice axiom in Lorentzian theories \cite{HaagKastler,Brunetti,FewsterVerch},
we propose the following formalization and generalization.
\begin{defi}\label{def:Wconstant}
Let $W\subseteq \mathrm{Mor}\,\CC$ be a subset of the set of  morphisms in $\CC$.
A functor $\mathfrak{A} : \CC\to\Alg(\MM)$ is called {\em $W$-constant}
if it sends every morphism $f :c\to c^\prime$ in $W$ to an isomorphism 
$\mathfrak{A}(f) : \mathfrak{A}(c) \stackrel{\cong}{\longrightarrow} \mathfrak{A}(c^\prime)$ in $\Alg(\MM)$.
\end{defi}
\begin{rem}
Our notion of $W$-constancy is similar to and generalizes the concept of {\em local constancy}
of factorization algebras \cite{CostelloGwilliam}. This motivates our choice of terminology.
\end{rem}
By \cite[Proposition~5.2.2]{handbook1}, the localization $\CC[W^{-1}]$ 
of the small category $\CC$ at $W$ exists as a small category.
Hence, we may equivalently describe $W$-constant 
functors as functors  $\mathfrak{A} : \CC[W^{-1}]\to \Alg(\MM)$
on the localized category. This means that the $W$-constancy axiom may 
be implemented formally by choosing $\CC[W^{-1}]$ instead of $\CC$ 
as the underlying category. 
\sk

Another important property of quantum field theories is that certain pairs of observables
behave commutatively. For example, in Lorentzian theories \cite{HaagKastler,Brunetti,FewsterVerch}
any two observables which are localized in spacelike separated regions commute with each other.
Similarly, in Euclidean \cite{Schlingemann} and chiral conformal \cite{Kawahigashi,Rehren,BDHcft} theories
any two observables localized in disjoint regions commute. We propose the following formalization 
and generalization of this concept:
Let us denote by $\mathrm{Mor}\,\CC \, {}_\mathrm{t}^{}\!\times\!{}^{}_\mathrm{t} \,\mathrm{Mor}\,\CC$
the set of pairs of $\CC$-morphisms whose targets coincide. Its elements $(f_1,f_2)$ may be visualized
as $\CC$-diagrams of the form $c_1 \stackrel{f_1}{\longrightarrow} c \stackrel{f_2}{\longleftarrow} c_2 $.
\begin{defi}\label{def:orthcat}
Let $\CC$ be a small category.
An {\em orthogonality relation} $\perp$ on $\CC$ is a subset 
${\perp} \subseteq \mathrm{Mor}\,\CC \, {}_\mathrm{t}^{}\!\times\!{}^{}_\mathrm{t} \,\mathrm{Mor}\,\CC$
satisfying the following properties:
\begin{itemize}
\item[(1)] {\em Symmetry:} $(f_1,f_2)\in {\perp} ~\Longrightarrow~(f_2,f_1)\in{\perp}$.
\item[(2)] {\em Stability under post-composition:} $(f_1,f_2)\in{\perp} ~\Longrightarrow~
(g\,f_1, g\, f_2)\in{\perp}$, for all composable $\CC$-morphisms $g$.
\item[(3)] {\em Stability under pre-composition:} $(f_1,f_2)\in{\perp} ~\Longrightarrow~
(f_1\,h_1,  f_2\,h_2)\in{\perp}$, for all composable $\CC$-morphisms $h_1$ and $h_2$.
\end{itemize}
We call elements $(f_1,f_2) \in {\perp}$ {\em orthogonal pairs} and write also $f_1 \perp f_2$.
A pair $\ovr{\CC}:= (\CC,\perp)$ consisting of a small category and an orthogonality relation
is called an {\em orthogonal category}. 
A morphism $F : \ovr{\CC}=(\CC,\perp_\CC)\to \ovr{\DD} = (\DD,\perp_\DD)$ 
of orthogonal categories is a functor $F : \CC\to\DD$ 
that preserves the orthogonality relations, 
i.e.\ $f_1 \perp_\CC f_2 ~\Longrightarrow~ F(f_1) \perp_\DD F(f_2)$. 
We call such morphisms {\em orthogonal functors} and denote by 
$\OCat$ the {\em category of orthogonal categories}. 
\end{defi}

\begin{defi}\label{def:perpcommutative}
Let $\ovr{\CC}=(\CC,\perp)$ be an orthogonal category.
\begin{itemize}
\item[a)] A functor $\mathfrak{A}: \CC \to \Alg(\MM)$ is called {\em $\perp$-commutative over $c\in\CC$} 
if for every $f_1\perp f_2$ with target $c$ the diagram
\begin{flalign}\label{eqn:perpcomdiagram}
\xymatrix{
\ar[d]_-{\mathfrak{A}(f_1)\otimes\mathfrak{A}(f_2)}\mathfrak{A}(c_1) \otimes \mathfrak{A}(c_2) \ar[rr]^-{\mathfrak{A}(f_1)\otimes\mathfrak{A}(f_2)} ~&& ~
\mathfrak{A}(c)\otimes \mathfrak{A}(c) \ar[d]^-{\mu_c^\op}\\
\mathfrak{A}(c)\otimes \mathfrak{A}(c) \ar[rr]_-{\mu_c} ~&&~\mathfrak{A}(c)
}
\end{flalign}
in $\MM$ commutes. Here $\mu_c$ denotes the 
multiplication on $\mathfrak{A}(c)$ and $\mu_c^\op:= \mu_c\, \tau$
the opposite multiplication on $\mathfrak{A}(c)$, 
with $\tau$ the symmetric braiding of $\MM$.

\item[b)] A functor $\mathfrak{A}: \CC \to \Alg(\MM)$ 
is called {\em $\perp$-commutative} if it is
$\perp$-commutative over every $c\in\CC$.

\item[c)] The full subcategory of the functor category $\Alg(\MM)^\CC$ 
whose objects are all $\perp$-commutative functors 
is denoted by $\Alg(\MM)^{\ovr{\CC}}$. 
There exists a fully faithful forgetful functor
\begin{flalign}
U\,:\, \Alg(\MM)^{\ovr{\CC}} ~\longrightarrow~ \Alg(\MM)^{\CC} \quad,
\end{flalign}
which forgets $\perp$-commutativity.
\end{itemize}
\end{defi}

Notice that, in contrast to $W$-constancy, $\perp$-commutativity can not be implemented easily
by adjusting the categories $\CC$ and $\Alg(\MM)$: It imposes a non-trivial condition on
functors $\mathfrak{A} : \CC \to \Alg(\MM)$ that relates orthogonal pairs of morphisms 
$f_1\perp f_2$ in $\CC$ to a certain commutative behavior of the algebraic structures 
on the objects of $\MM$ underlying $\mathfrak{A}$.
The colored operads we develop in this section solve this problem in the sense that they allow us
to encode $\perp$-commutativity as part of the structure 
instead of enforcing it as a property. This is very useful
for the study of universal constructions for quantum field theories, see Sections
\ref{sec:QFTadjunctions} and \ref{sec:QFTconstructions}.

\subsection{\label{subsec:SetOperads}Definition and properties}
We assign to each orthogonal category 
$\ovr{\CC}=(\CC,\perp)$ (cf.\ Definition \ref{def:orthcat}) 
a colored operad $\O_{\ovr{\CC}} \in \Op$ inspired 
by the algebraic structures underlying quantum field theories. 
For this we first construct an auxiliary 
colored operad $\O_\CC \in \Op$, 
neglecting the orthogonality relation, 
and then define $\O_{\ovr{\CC}} \in \Op$
in terms of an equivalence relation determined by $\perp$. 
The auxiliary colored operad $\O_\CC$ governs $\CC$-diagrams 
of associative and unital algebras, see Remark \ref{rem:AlgoverOCC} below. 
For its definition, recall the associative operad $\As\in\Op$ from Example \ref{ex:Op}.
\begin{defi}\label{def:OCC}
Let $\CC$ be a small category. The colored operad $\O_\CC \in \Op$ 
consists of the following data: 
\begin{itemize}
\item[a)] the set of colors $\CC_0$ is the set of objects of $\CC$, 

\item[b)] for all $t \in \CC_0$, $n \geq 0$, $\und{c} \in \CC_0^n$, 
the set of operations is $\O_\CC\big(\substack{t\\\und{c}}\big) 
:= \As(n) \times \CC(\und{c},t)$, where $ \CC(\und{c},t) := \prod_{i=1}^n \CC(c_i,t)$ and $\CC(c_i,t) 
:= \Hom_{\CC}(c_i,t)$ is the set of morphisms from $c_i$ to $t$,
whose elements are denoted by $(\sigma,\und{f}) := (\sigma,(f_1,\dots,f_n))\in \O_\CC \big(\substack{t\\\und{c}}\big) $,

\item[c)] for all $t \in \CC_0$, $n \geq 0$, $\und{c} \in \CC_0^n$, 
$\sigma^\prime \in \Sigma_n$, the map 
\begin{flalign}
\O_\CC(\sigma^\prime) \,:\, \O_\CC\big(\substack{t\\\und{c}}\big)~&\longrightarrow~ \O_\CC\big(\substack{t\\\und{c}\sigma^\prime}\big)~~,\quad
(\sigma,\und{f})~\longmapsto~\big(\sigma\sigma^\prime,\und{f}\sigma^\prime\big)\quad,\label{eqn:OCCperm}
\end{flalign}
where we use the permutation action of $\As$ and define $\und{f}\sigma^\prime := (f_{\sigma^\prime(1)},\dots,f_{\sigma^\prime(n)})$, 

\item[d)] for all $t \in \CC_0$, $m \geq 1$, $\und{a} \in \CC_0^m$, 
$k_i \geq 0$, $\und{b}_i \in \CC_0^{k_i}$, for $i = 1, \dots, m$, 
the composition map 
\begin{flalign}
\nn \gamma \,:\, \O_\CC\big(\substack{t\\\und{a}}\big) \times \prod_{i=1}^m \O_\CC\big(\substack{a_i\\\und{b}_i}\big) ~&\longrightarrow~ \O_\CC\big(\substack{t\\(\und{b}_1,\dots,\und{b}_m)}\big) \quad, \\
\big( (\sigma,\und{f}), \big((\sigma_1,\und{g}_1),\dots,(\sigma_m,\und{g}_m)\big) \big) ~&\longmapsto~
\big(\sigma(\sigma_1, \dots, \sigma_m) , \und{f}(\und{g}_1,\dots,\und{g}_m)\big)\quad, \label{eqn:OCCcomp}
\end{flalign}
where we use the composition of $\As$ and define
\begin{flalign}\label{eqn:CCcomp}
\und{f}(\und{g}_1,\dots,\und{g}_m) \,:=\, \big(f_1\,g_{11},\dots, f_{1}\,g_{1k_1},\dots, f_{m}\,g_{m 1},\dots, f_{m}\,g_{m k_m}\big) 
\end{flalign}
by composition in the category $\CC$,

\item[e)] for all $t\in\CC_0$, the unit 
$(e, \id_t) \in \O_\CC\big(\substack{t\\t}\big)$, 
where we use the unit of $\As$ and the identity 
morphism $\id_t \in \CC(t,t)$. 
\end{itemize}
\end{defi}
\begin{rem}
The operad $\O_\CC$ has a natural physical interpretation
coming from algebraic quantum field theory.
We may graphically visualize an element 
$(\sigma,\und{f})\in \O_\CC\big(\substack{t\\\und{c}}\big)$ by
\begin{flalign}\label{eqn:basicoperationpic}
\begin{tikzpicture}[cir/.style={circle,draw=black,fill=black,inner sep=0pt,minimum size=2mm},
        poin/.style={circle, inner sep=0pt,minimum size=0mm}]
%
%
  \node[poin] (V) at (0,0) {};
  \node[poin] (outV) [label=above:{\small $t$}] at (0,0.75) {}; 
  \node[poin] (inVl) at (-0.75,-0.75) {};
  \node[poin] (inVr)  at (0.75,-0.75) {}; 
  \node[poin] (dotsinV) at (0,-0.5) {$\cdots$};
%
%
\draw[thick] (V) -- (outV);
\draw[thick] (inVl) -- (V);
\draw[thick] (inVr) -- (V);
%
%
  \node[poin] (PVl) at (-1.25,-0.75) {};
  \node[poin] (PVm) at (0,-0.85) {}; 
  \node[poin] (PVr)  [label=right:{\small $t^{n}\sigma^{-1}$}] at (1.25,-0.75) {};
%
%
\draw[thick] (PVl) -- (PVr);
%
%
  \node[poin] (mapl) at (-1.25,-1.5) {};
  \node[poin] (mapm) at (0,-1.4) {}; 
  \node[poin] (mapr) [label=right:{\small $t^{n}$}]  at (1.25,-1.5) {};
  \node[poin] (mapml) at (-0.75,-1.5) {};
  \node[poin] (mapmr) at (0.75,-1.5) {}; 
  \node[poin] (dotsmap) at (0,-1.875) {$\cdots$};  
%
%
\draw[thick] (mapl) -- (mapr);
\draw[thick, ->] (PVm) -- (mapm) node[midway,left] {{\small $\sigma$}};
%
%
  \node[poin] (inl) at (-1.25,-2.25) {};
  \node[poin] (inm) at (0,-2.15) {}; 
  \node[poin] (inr) [label=right:{\small $\und{c}$}] at (1.25,-2.25) {};
  \node[poin] (inml) at (-0.75,-2.25) {};
  \node[poin] (inmr) at (0.75,-2.25) {}; 
%
%
\draw[thick] (inl) -- (inr);
\draw[thick] (inml) -- (mapml) node[midway,left] {{\small $f_1$}};
\draw[thick] (inmr) -- (mapmr) node[midway,right] {{\small $f_n$}};
\end{tikzpicture}
\end{flalign}
This picture should be read from bottom to top and
it should be understood as the following 
$3$-step operation in algebraic quantum field theory.
(1)~Apply the morphisms $\und{f}$ to observables on $\und{c}=(c_1,\dots,c_n)$; 
(2)~Permute the resulting observables on $t^{n}$
by acting with $\sigma^{-1}$ from the right;
(3)~Multiply the resulting observables on $t^{n}\sigma^{-1}$ 
according to the order in which they appear.
Concretely, given observables $\Phi_i$ on $c_i$, for 
$i=1,\dots,n$, the $3$-step operation formally looks like
\begin{flalign}
\nn\Phi_1\otimes \cdots \otimes \Phi_n ~~&\stackrel{\text{(1)}}{\longmapsto}~~f_1(\Phi_1) \otimes \cdots\otimes f_n(\Phi_n) \\
\nn~~&\stackrel{\text{(2)}}{\longmapsto}~~f_{\sigma^{-1}(1)}(\Phi_{\sigma^{-1}(1)})\otimes \cdots\otimes 
f_{\sigma^{-1}(n)}(\Phi_{\sigma^{-1}(n)})\\
~~&\stackrel{\text{(3)}}{\longmapsto}~~ f_{\sigma^{-1}(1)}(\Phi_{\sigma^{-1}(1)})\, \cdots\,
f_{\sigma^{-1}(n)}(\Phi_{\sigma^{-1}(n)})\quad.\label{eqn:operationordering}
\end{flalign}
The operad structure on $\O_\CC$ presented in Definition \ref{def:OCC}
captures how such $3$-step operations 
in algebraic quantum field theory compose. 
\end{rem}

The colored operad $\O_\CC$ from Definition \ref{def:OCC} 
does not encode the orthogonality relation $\perp$ 
on the category $\CC$ yet. We impose relations on $\O_{\CC}$
that are inspired by the $\perp$-commutativity axiom 
(cf.\ Definition \ref{def:perpcommutative}) 
and our quantum field theoretic interpretation
given in \eqref{eqn:operationordering}.
\begin{defi}\label{defi:Rperp}
Let $\ovr{\CC}=(\CC,\perp)$ be an orthogonal category.
We define a $\CC_0$-colored sequence $R_\perp \in \Seq_{\CC_0}$ 
by setting, for all $t \in \CC_0$, $n \geq 0$, $\und{c} \in \CC_0^n$, 
\begin{flalign}\label{eqn:Rperp}
R_\perp\big( \substack{t \\ \und{c}} \big) ~:=~ 
\begin{cases}
{\perp} \cap \CC (\und{c},t) &~,~~ \text{if } n = 2 \quad, \\
\emptyset &~,~~ \text{else} \quad.
\end{cases}
\end{flalign}
We further define a pair of parallel $\Seq_{\CC_0}$-morphisms
$s_{\perp,1}, s_{\perp,2}: R_\perp\rightrightarrows \O_{\CC}$ by setting 
\begin{subequations}\label{eqn:rperp}
\begin{flalign}
s_{\perp,1}\,:\, R_\perp \big( \substack{t \\ \und{c}} \big) 
& ~\longrightarrow~ \O_{\CC} \big( \substack{t \\ \und{c}} \big) ~~,\quad 
(f_1,f_2) ~\longmapsto~ \big( e, (f_1,f_2) \big) \quad, \\
s_{\perp,2}\,:\, R_\perp\big( \substack{t \\ \und{c}} \big) 
& ~\longrightarrow~ \O_{\CC} \big( \substack{t \\ \und{c}} \big) ~~, \quad
(f_1,f_2) ~\longmapsto~ \big( \tau, (f_1,f_2) \big) \quad, 
\end{flalign}
\end{subequations}
for all $t \in \CC_0$ and $\und{c} \in \CC_0^2$,
where $e, \tau \in \Sigma_2$ are the group identity and the transposition. 
The $\CC_0$-colored operad $\O_{\ovr\CC} \in \Op_{\CC_0}$ 
is defined as the coequalizer 
\begin{flalign}
\O_{\ovr{\CC}} ~:=~ \colim \Big( 
\xymatrix{
F(R_\perp) \ar@<0.5ex>[r]^-{s_{\perp,1}} 
\ar@<-0.5ex>[r]_-{s_{\perp,2}} ~&~ \O_{\CC}
}
\Big) 
\end{flalign}
in $\Op_{\CC_0}$, where $F$ denotes the free $\CC_0$-colored operad functor, cf.\
Theorem \ref{theo:freeoperad}. By construction, there exists a 
canonical $\Op_{\CC_0}$-morphism $p_{\ovr{\CC}}: \O_\CC\to \O_{\ovr{\CC}}$. 
\end{defi}

Note that $\O_{\ovr{\CC}} \in \Op_{\CC_0}$ exists 
because $\Op_{\CC_0}$ is cocomplete (cf.\ Proposition \ref{propo:colimitsinOp}), 
however its definition as a coequalizer is quite implicit. 
The next proposition establishes a fully explicit presentation 
of $\O_{\ovr{\CC}}$ in terms of an equivalence relation 
on the sets of operations of $\O_{\CC}$. 
\begin{propo}\label{propo:OCCbarexplicit}
Let $\ovr{\CC}=(\CC,\perp)$ be an orthogonal category.
For all $t \in \CC_0$, $n \geq 0$, $\und{c} \in \CC_0^n$, 
equip the set $\O_\CC\big(\substack{t\\\und{c}}\big)$
from Definition \ref{def:OCC} with the following equivalence relation: 
$(\sigma,\und{f}) \sim_\perp (\sigma^\prime,\und{f^\prime})$ if and only if
\begin{enumerate}
\item[(1)] $\und{f} = \und{f}^\prime$ as elements in $\CC(\und{c},t)$,
\item[(2)] the right permutation $\sigma\sigma^{\prime -1} : \und{f}\sigma^{-1}\to \und{f}\sigma^{\prime -1}$
is generated by transpositions of adjacent orthogonal pairs, i.e.\
$\sigma\sigma^{\prime -1} = \tau_1\cdots\tau_N$, where $N\in\bbN$
is a positive integer and $\tau_1,\dots,\tau_N \in\Sigma_{n}$
are transpositions such that, for all $k=1,\dots,N$, the right permutation
\begin{flalign}
\tau_k \,: \, \und{f}\sigma^{-1}\tau_1\cdots\tau_{k-1} ~\longrightarrow~ \und{f}\sigma^{-1}\tau_1\cdots\tau_k
\end{flalign}
is a transposition of a pair of $\CC$-morphisms that 
are orthogonal and adjacent in the sequence $\und{f}\sigma^{-1}\tau_1\cdots\tau_{k-1}$.
\end{enumerate}
The colored operad structure on $\O_\CC\in\Op_{\CC_0}$ descends to the quotient sets
$\O_\CC\big(\substack{t\\\und{c}}\big)/{\sim_\perp}$,
which defines a model $\O_{\ovr{\CC}} := \O_\CC/{\sim_\perp}\in\Op_{\CC_0}$ 
for the coequalizer in Definition \ref{defi:Rperp}. The canonical
$\Op_{\CC_0}$-morphism $p_{\ovr{\CC}}: \O_\CC \to \O_{\ovr{\CC}}$ 
is given by the quotient map $\pi : \O_\CC \to \O_\CC/{\sim_\perp}$. 
\end{propo}
\begin{ex}
Before we prove Proposition \ref{propo:OCCbarexplicit}, 
let us illustrate the equivalence relation $\sim_\perp$ with a simple example. 
For $t\in\CC_0$ and $\und{c} = (c_1,c_2) \in \CC_0^2$,
elements in $\O_\CC\big(\substack{t\\\und{c}}\big)$ 
are either of the form $(e,(f_1,f_2))$ or of the form $(\tau,(f_1,f_2))$,
where $e\in\Sigma_2$ denotes the identity permutation and $\tau\in\Sigma_2$
the transposition. It follows that $(e, (f_1,f_2)) \sim_\perp (\tau,(f_1,f_2))$ if and only if $f_1 \perp f_2$. 
Indeed, using $\tau^{-1} = \tau \in \Sigma_2$,
the right permutation 
\begin{flalign}
e \tau^{-1} = \tau ~ : ~ (f_1,f_2) e^{-1} = (f_1,f_2) ~~\longrightarrow ~~  (f_1,f_2) \tau^{-1} = (f_2,f_1) 
\end{flalign}
is a transposition of an adjacent orthogonal pair if and only if  $f_1\perp f_2$. 
\end{ex}
\begin{proof}[Proof of Proposition \ref{propo:OCCbarexplicit}]
Let us first show that the colored operad structure on $\O_\CC$ (cf.\ Definition \ref{def:OCC}) descends to 
the quotient $\O_\CC/{\sim_\perp}$. The parts concerning the permutation action 
and unit are straightforward, while composition requires a calculation 
using some basic properties of permutations. 
Let $t\in\CC_0$, $m \geq 1$, $\und{a} \in \CC_0^m$, 
$k_i \geq 0$, $\und{b}_i \in \CC_0^{k_i}$, for $i=1,\dots,m$, 
and consider $(\sigma, \und{f}) \sim_\perp (\widetilde\sigma, \und{f})$ 
in $\O_\CC\big(\substack{t\\\und{a}}\big)$, 
$(\sigma_i, \und{g}_i) \in \O_\CC\big(\substack{a_i\\\und{b}_i}\big)$, 
for $i=1,\dots,m$. We have to show that 
\begin{subequations}\label{eqn:comprelation1}
\begin{flalign}
\gamma \Big( (\sigma,\und{f}), \big((\sigma_1,\und{g}_1),\dots, (\sigma_m,\und{g}_m)\big) \Big) 
& = \Big( \sigma(\sigma_1,\dots,\sigma_m), \und{f}(\und{g}_{1},\dots,\und{g}_m) \Big) \quad, \\
\gamma \Big( (\widetilde\sigma,\und{f}), \big((\sigma_1,\und{g}_1),\dots, (\sigma_m,\und{g}_m)\big)\Big) 
& = \Big( \widetilde\sigma(\sigma_1,\dots,\sigma_m), \und{f}(\und{g}_{1},\dots,\und{g}_m) \Big) 
\end{flalign}
\end{subequations}
are equivalent under $\sim_{\perp}$ in $\O_\CC\big(\substack{t\\ (\und{b}_1,\dots,\und{b}_m) }\big)$.
Thus, the right permutation to be investigated is 
\begin{flalign}
\xymatrix{
\und{f} ( \und{g}_{1}, \dots, \und{g}_m ) ~ \big(\sigma (\sigma_1, \dots, \sigma_m)\big)^{-1} 
\ar[d]^-{~\sigma (\sigma_1, \dots, \sigma_m) ~ (\widetilde\sigma (\sigma_1, \dots, \sigma_m))^{-1}}\\
\und{f} ( \und{g}_{1}, \dots, \und{g}_m ) ~ \big(\widetilde\sigma (\sigma_1, \dots, \sigma_m)\big)^{-1} 
}
\end{flalign}
Recalling that $\sigma (\sigma_1, \dots, \sigma_m)$ is defined 
as the group multiplication of the block permutation 
$\sigma\langle k_{\sigma^{-1}(1)},\dots, k_{\sigma^{-1}(m)} \rangle$ 
and the block sum permutation $\sigma_1\oplus\cdots\oplus\sigma_m$
(cf.\ Example \ref{ex:Op}), we find the equivalent expression 
\begin{flalign}
\xymatrix{
(\und{f} \sigma^{-1}) \big( \und{g}_{\sigma^{-1}(1)} \sigma_{\sigma^{-1}(1)}^{-1}, \dots, \und{g}_{\sigma^{-1}(m)} \sigma_{\sigma^{-1}(m)}^{-1} \big)
\ar[d]^-{~( \sigma \widetilde\sigma^{-1} ) \langle k_{\sigma^{-1}(1)}, \dots, k_{\sigma^{-1}(m)} \rangle}\\
(\und{f} \widetilde\sigma^{-1}) \big( \und{g}_{\widetilde\sigma^{-1}(1)} \sigma_{\widetilde\sigma^{-1}(1)}^{-1}, \dots, \und{g}_{\widetilde\sigma^{-1}(m)} \sigma_{\widetilde\sigma^{-1}(m)}^{-1} \big)
}
\end{flalign}
for this right permutation.
From the latter expression we deduce that this right permutation is generated
by transpositions of adjacent orthogonal pairs because
(1)~$\sigma \widetilde\sigma^{-1}: \und{f} \sigma^{-1} \to \und{f} \widetilde\sigma^{-1}$ 
is by hypothesis generated by transpositions of adjacent orthogonal pairs,
(2)~$\perp$ is by definition stable under pre-composition in $\CC$, and 
(3)~transpositions of adjacent blocks are generated by adjacent transpositions 
of block elements belonging to different blocks. By a similar argument, one shows that
the composition of $(\sigma, \und{f})$ with equivalent 
$(\sigma_i, \und{g}_i)\sim_\perp (\widetilde{\sigma}_i,\und{g}_i)$, for
$i=1,\dots,m$, gives $\sim_\perp$-equivalent results. Hence, the composition map
$\gamma$ descends to the quotient $\O_\CC/{\sim_\perp}$.
\sk

It remains to prove that the quotient map
$\pi: \O_\CC\to \O_{\CC}/{\sim_\perp}$ is the coequalizer of 
the pair of parallel morphisms $s_{\perp,1}, s_{\perp,2}$ from Definition \ref{defi:Rperp}. 
It directly follows from the definition of $\sim_\perp$ 
that  $\pi\, s_{\perp,1} = \pi\, s_{\perp,2}$.
To show that $\pi$ has also the required universal property, 
suppose that $\phi: \O_\CC\to \P$ in $\Op_{\CC_0}$ 
is such that $\phi\, s_{\perp,1} = \phi\, s_{\perp,2}$. We have to
prove that $\phi$ factors uniquely through $\pi$. 
Uniqueness of the factorization follows from surjectivity of the maps 
$\pi: \O_\CC\big(\substack{t\\\und{c}}\big) \to 
\O_\CC\big(\substack{t\\\und{c}}\big)/{\sim_\perp}$, 
therefore it only remains to show that 
$\phi(\sigma,\und{f}) = \phi(\sigma^\prime,\und{f}) \in \P\big(\substack{t\\\und{c}}\big)$, for all  
$(\sigma,\und{f}) \sim_\perp (\sigma^\prime,\und{f}) \in 
\O_\CC\big(\substack{t\\\und{c}}\big)$. 
Equivariance under permutations leads to the equivalent condition
$\phi(\sigma\sigma^{\prime -1},\und{f}\sigma^{\prime -1}) 
= \phi(e,\und{f}\sigma^{\prime -1})$. By definition of $\sim_\perp$, 
we find a factorization $\sigma\sigma^{\prime -1} =  
\tau_1\cdots\tau_N: \und{f}\sigma^{-1} \to \und{f}\sigma^{\prime -1}$ 
into transpositions of adjacent orthogonal pairs. 
Let us focus first on the inverse of the rightmost transposition, i.e.\ 
$\tau_N^{-1}: \und{f}\sigma^{\prime -1 } \to \und{f}\sigma^{\prime -1}\tau_N^{-1}$. 
It transposes a pair of adjacent orthogonal morphisms, say 
$(f_{\sigma^{\prime-1}(k)},f_{\sigma^{\prime-1}(k+1)})\in {\perp}$,
in the sequence $\und{f}\sigma^{\prime -1 }$. 
Using the composition $\gamma$ of $\O_\CC$ given in \eqref{eqn:OCCcomp},
we may decompose $(e,\und{f}\sigma^{\prime -1})$ as
\begin{flalign}\label{eqn:operationfactorization}
\resizebox{.9\hsize}{!}{$\gamma\Big( \big(e,\underbrace{(\id_t,\dots,\id_t)}_{\text{$n{-}1$ times}}\big), \Big(
\big(e,f_{\sigma^{\prime -1}(i)}\big)_{i=1,\dots,k-1}, 
\big(e,(f_{\sigma^{\prime-1}(k)},f_{\sigma^{\prime-1}(k+1)})\big), 
\big(e,f_{\sigma^{\prime -1}(i)}\big)_{i=k+2,\dots,n}\Big) \Big)\quad.$}
\end{flalign}
Applying $\phi$ to this expression, we observe that the term 
$\phi\big(e,(f_{\sigma^{\prime-1}(k)},f_{\sigma^{\prime-1}(k+1)})\big)$
can be replaced by  
$\phi\big(\tau,(f_{\sigma^{\prime-1}(k)},f_{\sigma^{\prime-1}(k+1)})\big)$
since $\phi\, s_{\perp,1} = \phi\, s_{\perp,2}$, which proves the identity
\begin{flalign}
\phi\big(e,\und{f}\sigma^{\prime -1}\big) = \phi \big(\tau_N,\und{f}\sigma^{\prime -1}\big)
= \P(\tau_N) \phi \big(e,\und{f}\sigma^{\prime -1}\tau_N^{-1}\big)\quad.
\end{flalign}
Iterating this construction then proves the desired equality 
\begin{flalign}
\phi\big(e,\und{f}\sigma^{\prime -1}\big) = \P(\tau_N)\cdots\P(\tau_1) 
\phi \big(e,\und{f}\sigma^{\prime -1}\tau_N^{-1}\cdots\tau_1^{-1}\big)
= \phi\big(\sigma\sigma^{\prime -1},\und{f}\sigma^{\prime -1}\big)
\end{flalign}
and completes the proof.
\end{proof}

We conclude this subsection with an important remark on 
the functoriality of the constructions in Definitions \ref{def:OCC} and \ref{defi:Rperp}. 
The proof of the following result is straightforward.
\begin{propo}\label{propo:functoriality}
The assignment $\CC \mapsto \O_\CC$  
naturally extends to a functor $\O_{(-)}: \Cat \to \Op$ 
on the category of small categories. 
Explicitly, to each functor $F :\CC\to\DD$ we assign 
the $\Op$-morphism $\O_F : \O_\CC \to \O_\DD$
consisting of the following data: 
\begin{itemize}
\item[a)] the map of colors $\widetilde{\O}_F := F : \CC_0 \to \DD_0$
is given by the action of the functor $F $ on objects,

\item[b)] for all $t \in \CC_0$, $n \geq 0$, $\und{c} \in \CC_0^n$, 
the map $\O_F : \O_\CC\big(\substack{t\\\und{c}}\big) 
\to \O_\DD\big(\substack{F(t)\\F(\und{c})}\big)\,,~
(\sigma,\und{f}) \mapsto \big(\sigma,F(\und{f})\big)$
is given by the action of the functor $F$ on morphisms, i.e.\
$F(\und{f}) := (F(f_1),\dots,F(f_n))$.
\end{itemize}
Similarly, the assignment $\ovr{\CC} \mapsto \O_{\ovr{\CC}} := \O_\CC/{\sim_\perp}$ 
naturally extends to a functor $\O_{(-)} : \OCat \to \Op$ on the category of orthogonal 
categories. Denoting by $\Pi : \OCat \to \Cat \,,~(\CC,\perp)\mapsto \CC$ the evident forgetful functor,
the $\Op$-morphisms $p_{\ovr{\CC}}: \O_{\CC} \to \O_{\ovr{\CC}}$ 
are the components of a natural transformation $p : \O_{\Pi(-)}  \to \O_{(-)}$
between functors from $\OCat$ to $\Op$. 
\end{propo}

\begin{rem}\label{rem:pCCviafunctoriality}
Notice that for the empty orthogonality relation $\emptyset \subseteq 
\mathrm{Mor}\,\CC \, {}_\mathrm{t}^{}\!\times\!{}^{}_\mathrm{t} \,\mathrm{Mor}\,\CC$
the colored operad $\O_{(\CC,\emptyset)}$ coincides with $\O_{\CC}$.
Given any $\ovr{\CC} = (\CC,\perp) \in\OCat$, 
the identity functor $\id_\CC : (\CC,\emptyset)\to\ovr{\CC}$ is
clearly orthogonal, hence we obtain by Proposition \ref{propo:functoriality}
an $\Op_{\CC_0}$-morphisms $\O_{\id_\CC}: \O_{(\CC,\emptyset)}\to \O_{\ovr{\CC}}$.
By a direct comparison, we observe that this morphism coincides 
with $p_{\ovr{\CC}}$ given in Definition \ref{defi:Rperp}. 
This simple observation will be useful later on.
\end{rem}

\subsection{\label{subsec:gensrels}Presentation by generators and relations}
The aim of this subsection is to show that the colored operads $\O_\CC$ and $\O_{\ovr{\CC}}$
introduced in Section \ref{subsec:SetOperads} have a convenient
presentation in terms of generators and relations, which all admit a natural 
quantum field theoretic interpretation. This is the key to characterize their algebras, 
see Section \ref{subsec:qftalgebras} below. We decided to use a simple and intuitive 
graphical approach to present the generators and relations. The corresponding
colored sequences for the formal construction in Remark \ref{rem:genrel} can be easily 
extracted from these pictures. We start with a generators and relations presentation of 
the auxiliary colored operad $\O_{\CC}$ described in Definition \ref{def:OCC}.
\begin{itemize}
\item {\em Generators $G_\CC$:} We introduce three types of generators
\begin{flalign}\label{eqn:genpics}
\begin{tikzpicture}[cir/.style={circle,draw=black,inner sep=0pt,minimum size=2mm},
        poin/.style={circle, inner sep=0pt,minimum size=0mm}]
%
%
\node[poin] (Hout) [label=above:{\small $c^\prime$}] at (0,1) {};
\node[poin] (Hin) [label=below:{\small $c$}] at (0,0) {};
\draw[thick] (Hin) -- (Hout) node[midway,left] {{\small $f$}};
%
%
\node[poin] (Uout) [label=above:{\small $t$}] at (4,1) {};
\node[poin] (Uin) [label=below:{\small $\emptyset$}] at (4,0) {};
\draw[thick] (Uin) -- (Uout) node[midway,left] {{\small $1_t$}};
\draw[thick,fill=white] (Uin) circle (0.6mm);
%
%
\node[poin] (Mout) [label=above:{\small $t$}] at (8,1) {};
\node[poin] (Min1) [label=below:{\small $t$}] at (7.7,0) {};
\node[poin] (Min2) [label=below:{\small $t$}] at (8.3,0) {};
\node[poin] (V)  [label=right:{\small $\mu_t$}] at (8,0.5) {};
\draw[thick] (Min1) -- (V);
\draw[thick] (Min2) -- (V);
\draw[thick] (V) -- (Mout);
\end{tikzpicture}
\end{flalign}
for every $t\in\CC_0$ and $(f:c\to c^\prime)\in\mathrm{Mor}\,\CC$.
The quantum field theoretic interpretation is as follows: 
For every spacetime embedding  $f:c\to c^\prime$,
we introduce a $1$-ary operation $f$ to push forward observables.
Moreover, for every spacetime $t$, we introduce a $0$-ary operation $1_t$ and a $2$-ary operation
$\mu_t$ to obtain a unit and a product for the observables on $t$.

\item {\em Functoriality relations $R_{\mathrm{Fun}}$:} We impose the relations
\begin{flalign}\label{eqn:funrel}
\begin{tikzpicture}[cir/.style={circle,draw=black,inner sep=0pt,minimum size=2mm},
        poin/.style={circle, inner sep=0pt,minimum size=0mm}]
%
%
\node[poin] (Hout) [label=above:{\small $t$}] at (0,1) {};
\node[poin] (Hin) [label=below:{\small $t$}] at (0,0) {};
\draw[thick] (Hin) -- (Hout) node[midway,left] {{\small $\oone$}};
\node[poin] (EqH) at (0.5,0.5) {$=$};
\node[poin] (HHout) [label=above:{\small $t$}] at (1.5,1) {};
\node[poin] (HHin) [label=below:{\small $t$}] at (1.5,0) {};
\draw[thick] (HHin) -- (HHout) node[midway,left] {{\small $\id_t$}};
%
%
\node[poin] (Kout) [label=above:{\small $c^{\prime\prime}$}] at (5,1.25) {};
\node[poin] (Kin) [label=below:{\small $c$}] at (5,-0.25) {};
\node[poin] (Kmid) at (5,0.5) {};
\draw[thick] (Kin) -- (Kmid) node[midway,left] {{\small $g$}};
\draw[thick] (Kmid) -- (Kout) node[midway,left] {{\small $f$}};
\node[poin] (EqK) at (5.5,0.5) {$=$};
\node[poin] (KKout) [label=above:{\small $c^{\prime\prime}$}] at (6.5,1) {};
\node[poin] (KKin) [label=below:{\small $c$}] at (6.5,0) {};
\draw[thick] (KKin) -- (KKout) node[midway,left] {{\small $f\,g$}};
\end{tikzpicture}
\end{flalign}
for every $t\in\CC_0$ and every pair of composable morphisms
$(f:c^\prime\to c^{\prime\prime},g:c\to c^\prime)\in 
\mathrm{Mor}\,\CC \, {}_\mathrm{s}^{}\!\times\!{}^{}_\mathrm{t} \, \mathrm{Mor}\,\CC$.
These relations capture the functoriality of pushing forward observables 
by identifying  the operadic unit $\oone$ with the identity morphisms $\id_t$
and operadic compositions of spacetime embeddings with their categorical compositions in $\CC$.

\item {\em Algebra relations $R_{\mathrm{Alg}}$:} We impose the relations 
\begin{flalign}\label{eqn:monrel}
\begin{tikzpicture}[cir/.style={circle,draw=black,inner sep=0pt,minimum size=2mm},
        poin/.style={circle, inner sep=0pt,minimum size=0mm}]
%
%
\node[poin] (Mout) [label=above:{\small $t$}] at (0,1) {};
\node[poin] (Min1) [label=below:{\small $\emptyset$}] at (-0.3,0) {};
\node[poin] (Min2) [label=below:{\small $t$}] at (0.3,0) {};
\node[poin] (V)  [label=right:{\small $\mu_t$}] at (0,0.5) {};
\draw[thick] (Min1) -- (V) node[pos=0.7,left] {{\small $1_t$}};
\draw[thick] (Min2) -- (V);
\draw[thick] (V) -- (Mout);
\draw[thick,fill=white] (Min1) circle (0.6mm);
\node[poin] (EqM) at (1,0.5) {$=$};
\node[poin] (MMout) [label=above:{\small $t$}] at (1.75,1) {};
\node[poin] (MMin) [label=below:{\small $t$}] at (1.75,0) {};
\draw[thick] (MMin) -- (MMout) node[midway,left] {{\small $\oone$}};
\node[poin] (EqMM) at (2.25,0.5) {$=$};
\node[poin] (MMMout) [label=above:{\small $t$}] at (3.25,1) {};
\node[poin] (MMMin1) [label=below:{\small $t$}] at (2.95,0) {};
\node[poin] (MMMin2) [label=below:{\small $\emptyset$}] at (3.55,0) {};
\node[poin] (VVV)  [label=left:{\small $\mu_t$}] at (3.25,0.5) {};
\draw[thick] (MMMin1) -- (VVV);
\draw[thick] (MMMin2) -- (VVV) node[pos=0.7,right] {{\small $1_t$}};
\draw[thick] (VVV) -- (MMMout);
\draw[thick,fill=white] (MMMin2) circle (0.6mm);
%
%
\node[poin] (Nout) [label=above:{\small $t$}] at (7,1.25) {};
\node[poin] (Nin1) [label=left:{\small $\mu_t$}] at (6.7,0.25) {};
\node[poin] (Nin2) at (7.3,0) {};
\node[poin] (NV)  [label=right:{\small $\mu_t$}] at (7,0.75) {};
\node[poin] (Ninin1) [label=below:{\small $t$}]  at (6.4,-0.25) {};
\node[poin] (Ninin2) [label=below:{\small $t$}]  at (7,-0.25) {};
\node[poin] (Ninin3) [label=below:{\small $t$}]  at (7.6,-0.25) {};
\draw[thick] (Nin1) -- (NV);
\draw[thick] (Ninin3) -- (NV);
\draw[thick] (Ninin1) -- (Nin1);
\draw[thick] (Ninin2) -- (Nin1);
\draw[thick] (NV) -- (Nout);
\node[poin] (EqN) at (8.25,0.5) {$=$};
\node[poin] (NNout) [label=above:{\small $t$}] at (9.5,1.25) {};
\node[poin] (NNin1) at (9.2,0.25) {};
\node[poin] (NNin2) [label=right:{\small $\mu_t$}] at (9.8,0.25) {};
\node[poin] (NNV)  [label=left:{\small $\mu_t$}] at (9.5,0.75) {};
\node[poin] (NNinin1) [label=below:{\small $t$}]  at (8.9,-0.25) {};
\node[poin] (NNinin2) [label=below:{\small $t$}]  at (9.5,-0.25) {};
\node[poin] (NNinin3) [label=below:{\small $t$}]  at (10.1,-0.25) {};
\draw[thick] (NNinin1) -- (NNV);
\draw[thick] (NNin2) -- (NNV);
\draw[thick] (NNV) -- (NNout);
\draw[thick] (NNinin2) -- (NNin2);
\draw[thick] (NNinin3) -- (NNin2);
\end{tikzpicture}
\end{flalign}
for every $t\in\CC_0$.
These relations capture the unitality and associativity property
of the unit $1_t$ and product $\mu_t$ of observables.

\item {\em Compatibility relations $R_{\mathrm{FA}}$:} We impose the relations
\begin{flalign}\label{eqn:FMrel}
\begin{tikzpicture}[cir/.style={circle,draw=black,inner sep=0pt,minimum size=2mm},
        poin/.style={circle, inner sep=0pt,minimum size=0mm}]
%
%
\node[poin] (Kout) [label=above:{\small $c^{\prime}$}] at (0,1.25) {};
\node[poin] (Kin) [label=below:{\small $\emptyset$}] at (0,-0.25) {};
\node[poin] (Kmid) at (0,0.5) {};
\draw[thick] (Kin) -- (Kmid) node[midway,left] {{\small $1_c$}};
\draw[thick] (Kmid) -- (Kout) node[midway,left] {{\small $f$}};
\draw[thick,fill=white] (Kin) circle (0.6mm);
\node[poin] (EqK) at (0.5,0.5) {$=$};
\node[poin] (KKout) [label=above:{\small $c^{\prime}$}] at (1.5,1) {};
\node[poin] (KKin) [label=below:{\small $\emptyset$}] at (1.5,0) {};
\draw[thick] (KKin) -- (KKout) node[midway,left] {{\small $1_{c^\prime}$}};
\draw[thick,fill=white] (KKin) circle (0.6mm);
%
%
\node[poin] (MMout) [label=above:{\small $c^\prime$}] at (5,1.25) {};
\node[poin] (MMin1) [label=below:{\small $c$}] at (4.5,-0.25) {};
\node[poin] (MMin2) [label=below:{\small $c$}] at (5.5,-0.25) {};
\node[poin] (VV)  [label=right:{\small $\mu_{c}$}] at (5,0.5) {};
\draw[thick] (MMin1) -- (VV);
\draw[thick] (MMin2) -- (VV);
\draw[thick] (VV) -- (MMout)  node[midway,left] {{\small $f$}};
\node[poin] (EqM) at (6.25,0.5) {$=$};
\node[poin] (Mout) [label=above:{\small $c^\prime$}] at (7.5,1.25) {};
\node[poin] (Min1) [label=below:{\small $c$}] at (7,-0.25) {};
\node[poin] (Min2) [label=below:{\small $c$}] at (8,-0.25) {};
\node[poin] (V)  [label=right:{\small $\mu_{c^\prime}$}] at (7.5,0.5) {};
\draw[thick] (Min1) -- (V) node[midway,left] {{\small $f$}};
\draw[thick] (Min2) -- (V) node[midway,right] {{\small $f$}};
\draw[thick] (V) -- (Mout);
\end{tikzpicture}
\end{flalign}
for every $(f:c\to c^\prime)\in\mathrm{Mor}\,\CC$.
These relations capture the compatibility between pushing forward observables
along spacetime embeddings and the unit and product. 
\end{itemize}

From this graphical presentation of the generators and relations 
one easily extracts a pair of parallel $\Seq_{\CC_0}$-morphisms 
\begin{flalign}\label{eqn:relationdecomposition}
r_{\CC,i}\,:=\, r_{\mathrm{Fun},i}\sqcup r_{\mathrm{Alg},i}\sqcup r_{\mathrm{FA},i}~:~
R_\CC \,:=\, R_{\mathrm{Fun}}\sqcup R_{\mathrm{Alg}}\sqcup R_{\mathrm{FA}}
~~\longrightarrow~~F(G_\CC)\quad,
\end{flalign}
for $i=1,2$, which simultaneously captures all three types of relations in the free 
$\CC_0$-colored operad $F(G_\CC)$.
There further exists an $\Op_{\CC_0}$-morphism
$q_{\CC}: F(G_\CC) \to \O_\CC$ to  our colored operad $\O_{\CC}$ from
Definition \ref{def:OCC}, which is defined by setting for the generators
\begin{flalign}\label{eqn:qCCmap}
q_{\CC} \,: \, \left(\begin{array}{ccc}
1_t&\longmapsto&(e,\ast)\\
f&\longmapsto&(e,f)\\
\mu_t&\longmapsto&\big(e,(\id_t,\id_t)\big)
\end{array}\right)\quad.
\end{flalign}
Our first main result of this subsection is
\begin{propo}\label{propo:OCCviagenrel}
Let $\CC$ be a small category. Then 
\begin{flalign}\label{eqn:OCCviagenrel}
\xymatrix@C=4em{
F(R_\CC)~ \ar@<0.7ex>[r]^-{r_{\CC,1}}\ar@<-0.7ex>[r]_-{r_{\CC,2}}&~ F(G_\CC) \ar[r]^-{q_{\CC}}~&~ \O_{\CC}
}
\end{flalign}
is a coequalizer in $\Op_{\CC_0}$.
\end{propo}
\begin{proof}
Verifying that $q_{\CC}\, r_{\CC,1} = q_{\CC}\, r_{\CC,2}$ 
is a simple calculation using 
\eqref{eqn:funrel}, \eqref{eqn:monrel} and \eqref{eqn:FMrel}, as well as 
$q_\CC$ \eqref{eqn:qCCmap}.
As an illustration, let us show this for the second relation in \eqref{eqn:funrel},
\begin{flalign}
q_\CC\big(\gamma(f , g)\big)=\gamma\big(q_\CC(f),q_\CC(g)\big)=
\gamma\big((e,f), (e,g)\big) = (e,f\,g) = q_\CC(f\,g)\quad.
\end{flalign}
It thus remains to prove that 
$q_\CC$ in \eqref{eqn:OCCviagenrel} has the universal property 
of a coequalizer.
\sk

As a crucial preparation, we show that all operations in $\O_\CC$ 
admit a factorization in terms of images (under $q_\CC$) 
of the generators (cf.\ \eqref{eqn:qCCmap}).
Let us define recursively the elements 
$\mu_t^n \in F(G_\CC)\big(\substack{t\\ t^n}\big)$ by
\begin{flalign}
\mu_t^0 \,:=\, 1_t~~, \quad \mu_t^n \,:=\, \gamma\big(\mu_t , ( \oone, \mu_t^{n-1} ) \big)\quad,
\end{flalign}
and observe that 
\begin{flalign}
q_\CC (\mu_t^n) ~=~ 
\big(e,\underbrace{(\id_t,\dots,\id_t)}_{\text{$n$ times}}\big)\,\in\,\O_\CC\big(\substack{t\\ t^n}\big)\quad.
\end{flalign}
For all $t \in \CC_0$, $n \geq 0$, $\und{c}\in \CC_0^n$, 
$(\sigma,\und{f}) \in \O_\CC\big(\substack{t\\ \und{c}}\big)$, 
using the operad structure on $\O_\CC$ (cf.\ Definition \ref{def:OCC}),
we obtain a factorization
\begin{flalign}\label{eqn:elementfact}
(\sigma,\und{f}) ~=~ \O_\CC(\sigma)\, \gamma\Big(q_\CC(\mu_t^n) , 
\Big(q_\CC \big( f_{\sigma^{-1}(1)} \big),\dots , q_\CC \big( f_{\sigma^{-1}(n)}  \big)\Big) \Big)\quad.
\end{flalign}
This implies that
each component $q_\CC : F(G_\CC)\big(\substack{t\\\und{c}}\big) \to \O_\CC\big(\substack{t\\\und{c}}\big)$
is a surjective map of sets.
\sk

We now prove the universal property. Given any $\Op_{\CC_0}$-morphism
$\phi : F(G_\CC)\to \P$ such that $\phi\, r_{\CC,1} = \phi\, r_{\CC,2}$,
we have to construct a unique factorization $\phi = \phi^\prime \, q_\CC$ 
of $\phi$ through $q_\CC : F(G_\CC)\to \O_\CC$.
Uniqueness is immediate because  $q_\CC$ is component-wise surjective.
To prove existence, we use \eqref{eqn:elementfact} and define
\begin{flalign}
\phi^\prime\,:\, \O_\CC\big(\substack{t\\\und{c}}\big)~\longrightarrow~\P\big(\substack{t\\\und{c}}\big)~~,
\quad (\sigma,\und{f})~\longmapsto~\P(\sigma)\, \gamma\Big(\phi(\mu_t^n), \Big(
\phi \big( f_{\sigma^{-1}(1)} \big),\dots, \phi \big( f_{\sigma^{-1}(n)} \big) \Big)\Big)\quad,
\end{flalign}
for all $t \in \CC_0$, $n \geq 0$, $\und{c} \in \CC_0^n$. 
One easily observes that this defines a $\Seq_{\CC_0}$-morphism 
such that $\phi = \phi^\prime \, q_\CC$.
By an elementary but slightly lengthy calculation one shows that 
$\phi^\prime$ is an $\Op_{\CC_0}$-morphism, i.e.\ that it preserves 
permutation actions, compositions and units. 
(Hint: Due to the equivariance axioms of colored operads, 
it is enough for these checks to consider operations $(e,\und{f})$ 
with trivial permutations, which leads to drastic simplifications.)
\end{proof}

Let us now consider the colored operad $\O_{\ovr{\CC}}$ from
Definition \ref{defi:Rperp} and Proposition \ref{propo:OCCbarexplicit},
which is obtained from  $\O_\CC$ by enforcing relations to 
capture the $\perp$-commutativity axiom. 
For our presentation by generators and relations, 
this amounts to adding one more type of relations.
\begin{itemize}
\item {\em $\perp$-commutativity relations $R_\perp$:} We impose the relations
\begin{flalign}\label{eqn:perprel}
\begin{tikzpicture}[cir/.style={circle,draw=black,inner sep=0pt,minimum size=2mm},
        poin/.style={circle, inner sep=0pt,minimum size=0mm}]
%
%
%
\node[poin] (Mout) [label=above:{\small $c$}] at (0,1.25) {};
\node[poin] (Min1) [label=below:{\small $c_1$}] at (-0.5,-0.25) {};
\node[poin] (Min2) [label=below:{\small $c_2$}] at (0.5,-0.25) {};
\node[poin] (V)  [label=right:{\small $\mu_c$}] at (0,0.5) {};
\draw[thick] (Min1) -- (V) node[midway,left] {{\small $f_1$}};
\draw[thick] (Min2) -- (V) node[midway,right] {{\small $f_2$}};
\draw[thick] (V) -- (Mout);
\node[poin] (Eq) at (1.25,0.5) {$=$};
\node[poin] (MMout) [label=above:{\small $c$}] at (2.5,1.5) {};
\node[poin] (MMin1)  at (2,0) {};
\node[poin] (MMin2)  at (3,0) {};
\node[poin] (MMin11) [label=below:{\small $c_1$}] at (2,-0.5) {};
\node[poin] (MMin21) [label=below:{\small $c_2$}] at (3,-0.5) {};
\node[poin] (VV)  [label=right:{\small $\mu_c$}] at (2.5,0.75) {};
\draw[thick] (MMin1) -- (VV) node[midway,left] {{\small $f_2$}};
\draw[thick] (MMin2) -- (VV) node[midway,right] {{\small $f_1$}};
\draw[thick] (VV) -- (MMout);
\draw[thick] (MMin11) -- (MMin2);
\draw[thick] (MMin21) -- (MMin1);
\end{tikzpicture}
\end{flalign}
for every orthogonal pair of $\CC$-morphisms $(f_1:c_1\to c,f_2:c_2\to c)\in {\perp}$.
These relations capture the commutative behavior of
pairs of observables on $c$ which arise from pushforwards 
along an orthogonal pair of spacetime embeddings. 
\end{itemize}

To capture simultaneously all four types of relations (functoriality, algebra, compatibility 
and $\perp$-commutativity relations), we consider the pair of parallel $\Seq_{\CC_0}$-morphism 
\begin{flalign}\label{eqn:decompositionRovrCC}
r_{\ovr{\CC},i}\, :=\,  r_{\CC,i}\sqcup r_{\perp,i} ~:~ 
R_{\ovr{\CC}} ~:=~ R_\CC \sqcup R_\perp
~~\longrightarrow~~F(G_\CC)\quad,
\end{flalign}
for $i=1,2$. Composing $q_\CC$ given in \eqref{eqn:qCCmap} with the 
$\Op_{\CC_0}$-morphism $p_{\ovr{\CC}} : \O_\CC\to \O_{\ovr{\CC}}$ 
given in Definition \ref{defi:Rperp}, we obtain an $\Op_{\CC_0}$-morphism
\begin{flalign}
p_{\ovr{\CC}}\,q_\CC \,:\, F(G_\CC)~\longrightarrow ~ \O_{\ovr{\CC}}\quad.
\end{flalign}
Our second main result of this subsection is
\begin{theo}\label{theo:OCCperpviagenrel}
Let $\ovr{\CC}=(\CC,\perp)$ be an orthogonal category.  Then 
\begin{flalign}\label{eqn:OCCperpviagenrel}
\xymatrix@C=4em{
F(R_{\ovr{\CC}})~ \ar@<0.7ex>[r]^-{r_{\ovr{\CC},1}}\ar@<-0.7ex>[r]_-{r_{\ovr{\CC},2}}&~ F(G_\CC) \ar[r]^-{p_{\ovr{\CC}}\, q_{\CC}}~&~ \O_{\ovr{\CC}}
}
\end{flalign}
is a coequalizer in $\Op_{\CC_0}$.
\end{theo}
\begin{proof}
This is a simple consequence of Definition \ref{defi:Rperp}, 
Proposition \ref{propo:OCCviagenrel} and the diagram
\begin{flalign}\label{tmp:diagrammanycoeq}
\xymatrix@C=4em@R=4em{
\ar[d]_-{\iota_1}F(R_\CC)  \ar@<0.9ex>[rd]^-{r_{\CC,1}}\ar@<-0.5ex>[rd]_-{r_{\CC,2}}  ~&~ ~&~ ~&~ \P \\
F(R_{\CC})\sqcup F(R_{\perp}) \ar@<0.7ex>[r]^-{r_{\ovr{\CC},1}}\ar@<-0.7ex>[r]_-{r_{\ovr{\CC},2}} ~&~ F(G_\CC) \ar@<0.5ex>[rru]^-{\phi} \ar[r]^-{q_{\CC}} ~&~ \O_\CC\ar[r]^-{p_{\ovr{\CC}}} \ar@{-->}[ru]_-{\exists!\,\phi^\prime}~&~ \O_{\ovr{\CC}} \ar@{-->}[u]_-{\exists!\, \phi^{\prime\prime}}\\
\ar[u]^-{\iota_2}F(R_\perp)  \ar@<-0.2ex>[rru]^-{s_{\perp,1}}\ar@<-1.6ex>[rru]_-{s_{\perp,2}} ~&~ ~&~ ~&~ 
}
\end{flalign}
in $\Op_{\CC_0}$, where $\phi : F(G_\CC)\to \P$ is 
any $\Op_{\CC_0}$-morphism such that 
$\phi\, r_{\ovr{\CC},1} = \phi\, r_{\ovr{\CC},2}$.
Let us explain this in more detail:
(1)~We have that $F(R_{\ovr{\CC}}) \cong F(R_{\CC})\sqcup F(R_{\perp})$
because $F$ is a left adjoint functor, hence it preserves coproducts.
(2)~By definition of $r_{\ovr{\CC},i}$ the upper left triangles commute, 
i.e.\ $r_{\CC,i} = r_{\ovr{\CC},i}\, \iota_1$ for $i=1,2$.
(3)~By a short calculation using \eqref{eqn:perprel}, \eqref{eqn:qCCmap}
and \eqref{eqn:rperp} one shows that the lower left triangles commute, 
i.e.\ $s_{\perp,i} = q_\CC \,r_{\ovr{\CC},i}\, \iota_2$ for $i=1,2$.
(4)~Because of (2), $\phi\, r_{\CC,1} = \phi\, r_{\CC,2}$.
Existence and uniqueness of the morphism $\phi^\prime : \O_\CC\to \P$ in
\eqref{tmp:diagrammanycoeq} is then a consequence of Proposition \ref{propo:OCCviagenrel}.
(5)~Because of (3), $\phi^\prime : \O_\CC\to \P$ coequalizes $s_{\perp,1}$ and $s_{\perp,2}$. 
Existence and uniqueness of the morphism $\phi^{\prime\prime} : \O_{\ovr{\CC}}  \to \P$ in
\eqref{tmp:diagrammanycoeq} is then a consequence of Definition \ref{defi:Rperp}.
\end{proof}

We conclude this subsection by noting that our constructions are easily seen to be functorial.
\begin{propo}\label{propo:functorialityofgenrel}
The coequalizer \eqref{eqn:OCCviagenrel} is natural in $\CC\in\Cat$
and the coequalizer \eqref{eqn:OCCperpviagenrel} is natural in $\ovr{\CC}=(\CC,\perp) \in\OCat$.
\end{propo}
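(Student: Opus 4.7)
The plan is to view the coequalizer diagram \eqref{eqn:OCCviagenrel} as the value at $\CC$ of a functor from $\Cat$ into the category of parallel-pair-coequalizer diagrams in $\Op(\Set)$. Since the right-hand term $\O_\CC^{}$ extends to the functor $\O_{(-)}^{}:\Cat\to\Op(\Set)$ of Proposition \ref{propo:functorialityofOCC}, it suffices to promote $\CC\mapsto F(R_\CC^{\sym})$ and $\CC\mapsto F(G_\CC^{\sym})$ to functors $\Cat\to\Op(\Set)$ and to verify that $r_{\CC,1},r_{\CC,2}$ and $q_{\CC}$ are natural transformations between them.

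Given any $\Cat$-morphism $H:\CC\to\DD$, I would first construct a $\Set^{\Omega_{\CC_0}\times\CC_0}$-morphism $G_\CC^{}\to H^\ast(G_\DD^{})$ by following the explicit formula \eqref{eqn:genformulas}: $1_t\mapsto 1_{H(t)}$, $f\mapsto H(f)$ and $\mu_t\mapsto \mu_{H(t)}$. An entirely analogous recipe, read off from \eqref{eqn:funrelformulas}, \eqref{eqn:monrelformulas} and \eqref{eqn:FMrelformulas}, produces a $\Set^{\Omega_{\CC_0}\times\CC_0}$-morphism $R_\CC^{}\to H^\ast(R_\DD^{})$; for instance $(f,g)\mapsto (H(f),H(g))$, $\ast_t\mapsto \ast_{H(t)}$, and $l_t,r_t,a_t\mapsto l_{H(t)},r_{H(t)},a_{H(t)}$. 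Applying the free-symmetrization adjunction (Proposition \ref{propo:freesym}) and the free-operad adjunction (Theorem \ref{theo:freeoperad}) then yields $\Op(\Set)$-morphisms $(\CC_0,F(R_\CC^\sym))\to(\DD_0,F(R_\DD^\sym))$ and $(\CC_0,F(G_\CC^\sym))\to(\DD_0,F(G_\DD^\sym))$, and functoriality in $H$ is immediate from the construction.

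The naturality of $r_{\CC,1},r_{\CC,2}$ and $q_{\CC}$ then reduces to an elementary inspection of the defining formulas. For the parallel pair it suffices to check commutativity summand by summand via \eqref{eqn:funrelformulas}, \eqref{eqn:monrelformulas} and \eqref{eqn:FMrelformulas}, exploiting that $H$ preserves compositions ($H(f\,g)=H(f)\,H(g)$) and identities ($H(\id_t)=\id_{H(t)}$). For $q_{\CC}$, comparing the formula \eqref{eqn:qCCmap} with the action \eqref{eqn:phiFcomponents} of $\O_H^{}$ on morphisms, each of the three non-trivial cases is tautological because $q_{\CC}$ only involves the identity permutations, which $H$ obviously respects.

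For the second assertion, given any $\OCat$-morphism $H:\ovr{\CC}\to\ovr{\DD}$, the orthogonality-preservation axiom of Definition \ref{def:OCat} ensures that $(f_1,f_2)\mapsto (H(f_1),H(f_2))$ restricts to a well-defined morphism between the $\perp$-commutativity summands \eqref{eqn:perprelformulas}, so $R_{\ovr{\CC}}$ and its parallel pair \eqref{eqn:decompositionRovrCC} also extend functorially on $\OCat$. Combining this with the previous paragraphs and with the naturality of $p_{\ovr{\CC}}^{}$ furnished by Corollary \ref{cor:corfunctorialityofOCCperp}, the whole coequalizer \eqref{eqn:OCCperpviagenrel} becomes natural in $\ovr{\CC}\in\OCat$. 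The only real obstacle is bookkeeping, namely aligning the free-operad, free-symmetrization and change-of-color pullback functors so as to produce a coherent diagram in $\Op(\Set)$ with varying colors; no conceptually new step is required.
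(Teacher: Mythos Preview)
Your approach is correct. Note that the paper actually omits the proof entirely, saying only ``The following statements are easily proven, hence we may omit the proofs.'' Your sketch supplies precisely the details the authors left implicit: promoting $G_\CC^{}$ and $R_\CC^{}$ to functors on $\Cat$ via the evident assignments, passing through the free-symmetrization and free-operad adjunctions, and then checking naturality of $r_{\CC,1},r_{\CC,2},q_\CC$ componentwise against the explicit formulas. The extension to $\OCat$ via orthogonality preservation and Corollary~\ref{cor:corfunctorialityofOCCperp} is likewise the intended argument. Your closing remark about the bookkeeping of aligning free functors with change-of-color pullbacks is honest and accurate; this is indeed the only place where care is needed, and no conceptual obstruction arises.
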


\subsection{\label{subsec:qftalgebras}Algebras}
The following result is an immediate consequence of
our presentation by generators and relations in Theorem \ref{theo:OCCperpviagenrel}
of the colored operad $\O_{\ovr{\CC}}$.
\begin{theo}\label{theo:AlgoverOCCperp}
Let $\ovr{\CC}=(\CC,\perp)\in\OCat$ be an orthogonal category.
Then the category $\Alg_{\O_{\ovr{\CC}}}(\MM)$ of $\O_{\ovr{\CC}}$-algebras in $\MM$ 
is the category $\Alg(\MM)^{\ovr{\CC}}$ of 
$\perp$-commutative functors from $\CC$ 
to  associative and unital algebras in $\MM$, cf.\ Definition \ref{def:perpcommutative}.
\end{theo}

\begin{rem}\label{rem:AlgoverOCC}
Let $\CC$ be a small category and consider the 
orthogonal category $(\CC,\emptyset)$
with trivial orthogonality relation. By Remark \ref{rem:pCCviafunctoriality},
we have that $\O_{(\CC,\emptyset)} = \O_\CC\in\Op$ is our auxiliary operad
from Definition \ref{def:OCC}. As a special instance
of Theorem  \ref{theo:AlgoverOCCperp}, we obtain 
that $\Alg_{\O_\CC}(\MM)$ is the 
category $\Alg(\MM)^\CC$ of all 
functors from $\CC$ to associative and unital 
algebras in $\MM$.
\end{rem}

\begin{rem}\label{rem:algebrasphysics}
There is the following quantum field theoretic interpretation.
As in Section \ref{subsec:motivationQFT}, we interpret $\ovr{\CC}= (\CC,\perp)$
as a category of spacetimes $\CC$, together with a specification $\perp$ of pairs 
of subspacetimes for which observables are supposed to behave commutatively. The category
$\Alg(\MM)^{\ovr{\CC}}$ of $\perp$-commutative functors describes all 
possible quantum field theories for this scenario. Theorem \ref{theo:AlgoverOCCperp}
deepens our understanding of the algebraic structures underlying such 
quantum field theories by proving that they are precisely the algebras over
our colored operad $\O_{\ovr{\CC}}\in \Op$. 
It is worth to emphasize the following analogy: 
One of the key ideas of algebraic quantum field theory is to shift
the focus from (Hilbert space) representations of algebras to the underlying 
abstract algebras in order to analyze structural properties of quantum field theories.
Our operadic framework goes one level deeper by shifting the focus
from specific realizations of the algebraic structures of quantum field theories 
to the underlying abstract operads.
\end{rem}

\subsection{\label{subsec:examples}Examples}
We present explicit examples of orthogonal categories
$\ovr{\CC}=(\CC,\perp)\in\OCat$ that are motivated by algebraic quantum field theory. 
For this purpose we shall need the following constructions. The first one is straightforward to verify.
\begin{lem}\label{lem:pullpushorth}
Let $F :\CC\to\DD$ be a functor between small categories.
\begin{itemize}
\item[(i)] Given any orthogonality relation 
${\perp}_{\DD}\subseteq \mathrm{Mor}\,\DD \, {}_\mathrm{t}^{}\!\times\!{}^{}_\mathrm{t} \,\mathrm{Mor}\,\DD$
on $\DD$, then
\begin{flalign}
F^\ast(\perp_\DD) \,:=\, \Big\{ (f_1,f_2) \in 
 \mathrm{Mor}\,\CC \, {}_\mathrm{t}^{}\!\times\!{}^{}_\mathrm{t} \,\mathrm{Mor}\,\CC ~:~F(f_1)\perp_\DD F(f_2)\Big\}
\end{flalign}
is an orthogonality relation on $\CC$. We call $F^\ast(\perp_\DD)$ the pullback
of $\perp_\DD$ along $F$ and note that $F :  (\CC,F^\ast(\perp_\DD))
\to (\DD,\perp_\DD)$ is an orthogonal functor.

\item[(ii)] Given any orthogonality relation ${\perp}_\CC \subseteq
\mathrm{Mor}\,\CC \, {}_\mathrm{t}^{}\!\times\!{}^{}_\mathrm{t} \,\mathrm{Mor}\,\CC$
on $\CC$, then
\begin{multline}
F_{\ast}(\perp_\CC) \,:=\, \Big\{ \big(g\,F(f_1)\,h_1,g\,F(f_2)\,h_2\big) \in  
\mathrm{Mor}\,\DD \, {}_\mathrm{t}^{}\!\times\!{}^{}_\mathrm{t} \,\mathrm{Mor}\,\DD ~:~\\
f_1\perp_\CC f_2 \text{ and }g,h_1,h_2\in\mathrm{Mor}\,\DD \text{ are composable}\Big\}
\end{multline}
is an orthogonality relation on $\DD$. We call $F_\ast(\perp_\CC)$ the pushforward
of $\perp_\CC$ along $F$ and note that $F : (\CC,\perp_\CC)
\to (\DD,F_\ast(\perp_\CC))$ is an orthogonal functor.
\end{itemize}
\end{lem}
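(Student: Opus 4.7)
The plan is to verify in each case the three axioms from Definition \ref{def:orthcat} (symmetry, stability under post-composition, stability under pre-composition) and then observe that the orthogonality of $F$ with respect to the constructed relation is immediate from the definitions. Neither part should present any real obstacle; the whole argument is a direct unpacking of the definitions, and the functoriality $F(g\,f) = F(g)\,F(f)$ provides all the bookkeeping needed to pass between $\CC$-composition and $\DD$-composition.

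For part (i), I would start with a pair $(f_1,f_2)\in F^\ast(\perp_\DD)$, i.e.\ $F(f_1)\perp_\DD F(f_2)$. Symmetry of $F^\ast(\perp_\DD)$ then follows from symmetry of $\perp_\DD$ applied to $(F(f_1),F(f_2))$. Given any $\CC$-morphism $g$ composable with $f_1$ and $f_2$, functoriality gives $F(g\,f_i) = F(g)\,F(f_i)$, so stability of $\perp_\DD$ under post-composition by $F(g)$ implies $(g\,f_1,g\,f_2)\in F^\ast(\perp_\DD)$; the pre-composition axiom is verified identically using $h_1,h_2$. Finally, $F$ is orthogonal by construction: $f_1\perp_{F^\ast(\perp_\DD)} f_2$ is by definition equivalent to $F(f_1)\perp_\DD F(f_2)$.

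For part (ii), I would consider a generic element of $F_\ast(\perp_\CC)$, written as $(g\,F(f_1)\,h_1, g\,F(f_2)\,h_2)$ with $f_1\perp_\CC f_2$ and $g,h_1,h_2\in\mathrm{Mor}\,\DD$. Symmetry in $\DD$ follows from symmetry of $\perp_\CC$: exchanging the roles of $f_1,f_2$ and $h_1,h_2$ presents the flipped pair in the same generic form. Stability under post-composition by some $\DD$-morphism $k$ is obtained by replacing $g$ with $k\,g$, which is still a $\DD$-morphism, so the resulting pair $(k\,g\,F(f_1)\,h_1, k\,g\,F(f_2)\,h_2)$ lies in $F_\ast(\perp_\CC)$ by definition. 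Likewise, stability under pre-composition by $\DD$-morphisms $k_1,k_2$ is obtained by replacing $h_i$ with $h_i\,k_i$. Orthogonality of $F:(\CC,\perp_\CC)\to(\DD,F_\ast(\perp_\CC))$ follows from the special case $g = \id$, $h_1 = h_2 = \id$: any orthogonal pair $f_1\perp_\CC f_2$ yields $(F(f_1),F(f_2)) = (\id\,F(f_1)\,\id,\id\,F(f_2)\,\id) \in F_\ast(\perp_\CC)$.

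The only subtle point, which I would make explicit, is that in part (ii) the class $F_\ast(\perp_\CC)$ must be closed under all three operations \emph{as a class of pairs in $\DD$}, not merely those of the form coming naïvely from pushforward; this is why the definition already inserts arbitrary $\DD$-morphisms $g,h_1,h_2$. With these in place, the verifications reduce to straightforward reassociations of composition in $\DD$. One may also remark (for later use) that $F_\ast(\perp_\CC)$ is manifestly the smallest orthogonality relation on $\DD$ making $F$ an orthogonal functor.
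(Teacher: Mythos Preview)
Your proposal is correct and follows exactly the approach the paper indicates: the paper's proof reads in full ``Obvious by checking the conditions in Definitions \ref{def:orthcat} and \ref{def:OCat},'' and you have simply spelled out that routine verification. Your additional remark that $F_\ast(\perp_\CC)$ is the smallest orthogonality relation on $\DD$ making $F$ orthogonal is a correct and useful observation, though not stated in the paper.
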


\begin{lem}\label{lem:localizedperpcom}
Let $\ovr{\CC}=(\CC,\perp)\in\OCat$ and $W\subseteq \mathrm{Mor}\,\CC$
any subset. Define $\ovr{\CC[W^{-1}]}:= (\CC[W^{-1}],L_\ast(\perp))\in\OCat$
by pushing forward the orthogonality relation $\perp$ on $\CC$
along the localization functor $L : \CC\to\CC[W^{-1}]$.
There exists a canonical identification between
\begin{itemize}
\item[(1)] $\perp$-commutative and $W$-constant functors $\mathfrak{B} : \CC\to \Alg(\MM)$,
\item[(2)] $L_\ast(\perp)$-commutative functors $\mathfrak{A} : \CC[W^{-1}] \to \Alg(\MM)$.
\end{itemize} 
\end{lem}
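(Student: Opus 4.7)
The approach is to combine the universal property of the localization functor $L:\CC\to\CC[W^{-1}]$ with the explicit form of the pushforward $L_\ast(\perp)$ (cf.\ Lemma \ref{lem:pullpushorth} (ii)) and the crucial fact that the values of a functor into $\Mon_\MM$ on morphisms are monoid homomorphisms.

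First I would establish the underlying bijection of data. By the universal property of $\CC[W^{-1}]$, precomposition with $L$ induces an isomorphism of categories between $\Mon_\MM^{\CC[W^{-1}]}$ and the full subcategory of $\Mon_\MM^{\CC}$ consisting of $W$-constant functors (cf.\ Definition \ref{def:Wconstant}): any $W$-constant functor $\mathfrak{B}:\CC\to\Mon_\MM$ factors uniquely as $\mathfrak{B}=\mathfrak{A}\,L$ for a functor $\mathfrak{A}:\CC[W^{-1}]\to\Mon_\MM$, and conversely every such $\mathfrak{A}\,L$ is automatically $W$-constant because $L$ inverts $W$. A matching bijection holds for natural transformations, which one obtains by pre/postcomposing with $L$ componentwise.

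Next I would check that under this bijection the $\perp$-commutativity of $\mathfrak{B}$ corresponds to the $L_\ast(\perp)$-commutativity of $\mathfrak{A}$. For one direction, setting $g=\id_{L(c)}$ and $h_i=\id_{L(c_i)}$ in the definition of the pushforward places every pair $(L(f_1),L(f_2))$ with $f_1\perp_\CC f_2$ into $L_\ast(\perp)$, so any $L_\ast(\perp)$-commutative $\mathfrak{A}$ specializes to a $\perp$-commutative $\mathfrak{B}=\mathfrak{A}\,L$. For the converse direction, a generic orthogonal pair in $L_\ast(\perp)$ with common target $e$ has the form $(g\,L(f_1)\,h_1,g\,L(f_2)\,h_2)$ for some $f_1\perp_\CC f_2$ (with common target $c$) and arbitrary morphisms $g,h_1,h_2$ in $\CC[W^{-1}]$. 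I would then exploit the compatibility of $\mathfrak{A}(g)$ with the multiplications to move $\mu_e$ and its opposite $\mu_e^\op=\mu_e\,\tau$ through the outer tensor factor $\mathfrak{A}(g)\otimes\mathfrak{A}(g)$ onto $\mu_{L(c)}$ (respectively $\mu_{L(c)}^\op$), apply the $\perp$-commutativity of $\mathfrak{B}$ at $c$ to the inner factor $\mathfrak{A}(L(f_1))\otimes\mathfrak{A}(L(f_2))$, and reassemble the outer and pre-composed data. This yields the required equality between $\mu_e\,(\mathfrak{A}(p_1)\otimes\mathfrak{A}(p_2))$ and $\mu_e^\op\,(\mathfrak{A}(p_1)\otimes\mathfrak{A}(p_2))$ for every such pair $(p_1,p_2)\in L_\ast(\perp)$.

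The main technical point, and the one that justifies the very definition of $L_\ast(\perp)$ as the closure of $\{(L(f_1),L(f_2)):f_1\perp_\CC f_2\}$ under arbitrary pre- and post-composition, is the observation that $\perp$-commutativity is automatically stable under such composition at the level of functors into $\Mon_\MM$: this uses only that functor values are monoid homomorphisms and that the braiding $\tau$ on $\MM$ is natural. Consequently the closure operation on the relation side matches the propagation of the $\perp$-commutativity property on the functor side, and the two characterizations coincide as claimed.
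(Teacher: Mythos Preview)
Your proposal is correct and follows essentially the same approach as the paper: both use the universal property of the localization to set up the bijection of underlying functors, and both verify $L_\ast(\perp)$-commutativity of $\mathfrak{A}$ by factoring a generic pair $(g\,L(f_1)\,h_1,g\,L(f_2)\,h_2)$ and reducing to the $\perp$-commutativity of $\mathfrak{B}$ at the inner pair $(f_1,f_2)$. The paper phrases this reduction as decomposing the relevant square from Definition~\ref{def:perpcommutative} into five smaller squares, whereas you phrase it as using that $\mathfrak{A}(g)$ is a monoid homomorphism (together with naturality of $\tau$) to push $\mu_e$ and $\mu_e^\op$ inward to $\mu_{L(c)}$ and $\mu_{L(c)}^\op$; these are the same argument.
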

\begin{proof}
Given any $L_\ast(\perp)$-commutative functor $\mathfrak{A} : \CC[W^{-1}] \to \Alg(\MM)$,
define $\mathfrak{B} := \mathfrak{A}\,L : \CC\to\Alg(\MM)$ which is 
obviously $W$-constant and $\perp$-commutative.
\sk

Conversely, let  $\mathfrak{B} : \CC\to\Alg(\MM)$ be any $\perp$-commutative and $W$-constant functor.
Define $\mathfrak{A} : \CC[W^{-1}]\to\Alg(\MM)$ via
the universal property of localizations, i.e.\ $\mathfrak{A}$ is the unique functor
such that $\mathfrak{A}\,L =\mathfrak{B}$. We have to show that
$\mathfrak{A}$ is $L_\ast(\perp)$-commutative.
By definition of the pushforward orthogonality relation,
any element in $L_\ast(\perp)$ may be written as 
$\big(g\,L(f_1)\,h_1, g\, L(f_2)\,h_2\big)$ with $f_1\perp f_2$ and 
$g,h_1,h_2\in\mathrm{Mor}\,\CC[W^{-1}]$. Using that $\mathfrak{A} : \CC[W^{-1}]\to\Alg(\MM)$ 
is a functor, the diagram in Definition \ref{def:perpcommutative} corresponding to
$\big(g\,L(f_1)\,h_1, g\, L(f_2)\,h_2\big)$ may be decomposed into five smaller squares.
One observes that it suffices to prove that the square
\begin{flalign}
\xymatrix@C=3em{
 \ar[d]_-{\mathfrak{A}\,L (f_1)\otimes \mathfrak{A}\,L (f_2) }\mathfrak{A}(c_1)\otimes \mathfrak{A}(c_2) \ar[rr]^-{\mathfrak{A}\,L (f_1)\otimes \mathfrak{A}\,L (f_2) } ~&&~\mathfrak{A}(c)\otimes\mathfrak{A}(c)\ar[d]^-{\mu_c^\op}\\
\mathfrak{A}(c)\otimes\mathfrak{A}(c)\ar[rr]_-{\mu_c}~&&~\mathfrak{A}(c)
}
\end{flalign}
commutes, which is true because $\mathfrak{A}\,L =\mathfrak{B}$ is by hypothesis
$\perp$-commutative and $f_1\perp f_2$.
\end{proof}

\begin{ex}[Locally covariant quantum field theory without time-slice axiom]\label{ex:LCQFTnoTS}
In locally covariant quantum field theory \cite{Brunetti,FewsterVerch}
one considers the category $\Loc$ of oriented, time-oriented
and globally hyperbolic Lorentzian manifolds of a fixed dimension $m\geq 2$.
Concretely, the objects in $\Loc$ are tuples $\mathbb{M}=(M,g,\mathfrak{o},\mathfrak{t})$
where $M$ is an $m$-dimensional manifold (Hausdorff and second-countable), 
$g$ is a globally hyperbolic Lorentzian metric on $M$, $\mathfrak{o}$ 
is an orientation of $M$ and $\mathfrak{t}$ is a time-orientation
of $(M,g)$. A $\Loc$-morphism $f: \mathbb{M}\to 
\mathbb{M}^\prime$ is an orientation
and time-orientation preserving isometric embedding,
such that the image $f(M)\subseteq M^\prime$ is causally convex and open.
(For an introduction to Lorentzian geometry we refer the reader to e.g.\ \cite{BGP}.)
Note that the category $\Loc$ is {\em not} small, however it is equivalent
to a small category, i.e.\ it is essentially small. As usual, this follows by using
Whitney's Embedding Theorem to realize (up to diffeomorphism) all $m$-dimensional
manifolds $M$ as submanifolds of $\bbR^{2m+1}$. In the following we always choose a
small category equivalent to $\Loc$ and denote it with abuse of notation
also by $\Loc$. Our results in Section \ref{subsec:equivalences} imply
that it does not matter which small subcategory equivalent to $\Loc$ we choose.
More precisely, different choices define equivalent categories of algebras over their 
associated colored operads.
\sk

We equip the category $\Loc$ with the following orthogonality relation:
Two $\Loc$-morphisms $f_1 : \mathbb{M}_1\to\mathbb{M}$ and $f_2 : \mathbb{M}_2 \to \mathbb{M}$
are orthogonal, $f_1\perp f_2$, if and only if their images $f_1(M_1)$ and $f_2(M_2)$
are causally disjoint subsets in $\mathbb{M}$, i.e.\ $f_1(M_1) \cap J_{\mathbb{M}}(f_2(M_2)) =\emptyset$,
where $J_{\mathbb{M}}(S):= J_{\mathbb{M}}^+(S)\cup J_{\mathbb{M}}^-(S)\subseteq M$ 
denotes the union of the causal future and past of a subset $S\subseteq M$.
It is easy to verify the symmetry and composition stability properties of Definition \ref{def:orthcat}.
Hence, $\ovr{\Loc} := (\Loc,\perp)\in \OCat$ is an orthogonal category.
\sk

By Theorem \ref{theo:AlgoverOCCperp}, we obtain that 
algebras over the colored operad $\O_{\ovr{\Loc}} \in \Op$
are canonically identified with functors $\mathfrak{A} : \Loc \to \Alg(\MM)$ from $\Loc$
to the category of associative and unital algebras in $\MM$ that satisfy the $\perp$-commutativity 
axiom (cf.\ Definition \ref{def:perpcommutative}). These are 
locally covariant quantum field theories \cite{Brunetti,FewsterVerch}
satisfying the Einstein causality axiom, but not necessarily the time-slice axiom.
\sk

An interesting problem, originally considered in 
\cite{Fre1,Fre2,Fre3}, is that of extending 
quantum field theories from a subcategory 
of ``nice'' spacetimes to all spacetimes. 
To formalize this in our framework, let us consider the 
full subcategory $\Loc_\diamond \subseteq \Loc$ of {\em diamond} spacetimes, 
i.e.\ objects $\mathbb{M} \in \Loc$ 
with underlying manifold diffeomorphic to $\bbR^m$, 
and denote the inclusion functor by $j: \Loc_\diamond \to \Loc$. 
We equip $\Loc_\diamond$ with the pullback orthogonality relation 
$j^\ast(\perp)$ and obtain an orthogonal functor 
$j : \overline{\Loc}_\diamond := (\Loc_\diamond,j^\ast(\perp)) 
\to \overline{\Loc}$. Notice that $\O_{\overline{\Loc}_\diamond}$-algebras 
are $j^\ast(\perp)$-commutative functors $\mathfrak{A}: \Loc_\diamond \to \Alg(\MM)$, i.e.\  
locally covariant quantum field theories defined only on diamond spacetimes, cf.\ \cite{Lang}. 
By Proposition \ref{propo:functoriality} and Theorem \ref{theo:coloradjunctionalgebras}, 
the orthogonal functor $j: \ovr{\Loc}_\diamond \to \ovr{\Loc}$ induces an adjunction
\begin{flalign}
\xymatrix{
{\O_{j}}_! \,:\, \Alg_{\O_{\ovr{\Loc}_\diamond}}(\MM) ~\ar@<0.5ex>[r] & \ar@<0.5ex>[l] ~\Alg_{\O_{\ovr{\Loc}}}(\MM) \,:\, \O_{j}^\ast
}\quad.
\end{flalign}
The right adjoint $\O_{j}^\ast$ restricts 
quantum field theories defined on all of $\Loc$ to 
$\Loc_\diamond$. More interestingly, the left adjoint
${\O_{j}}_! $ is an extension functor that extends quantum field theories
defined on $\Loc_\diamond$ to all of $\Loc$.
We shall analyze this adjunction in more detail in Section \ref{subsec:subcats}.
In Section \ref{sec:QFTconstructions}, we also compare our constructions
to Fredenhagen's universal algebra \cite{Fre1,Fre2,Fre3}, which is obtained by 
left Kan extension of algebra-valued functors \cite{Lang}.
\end{ex}

\begin{ex}[Locally covariant quantum field theory with time-slice axiom]\label{ex:LCQFTwithTS}
In the setting of Example \ref{ex:LCQFTnoTS}, recall that a morphism
$f : \mathbb{M} \to \mathbb{M}^\prime$ is called a {\em Cauchy morphism}
if its image $f(M)\subseteq M^\prime$ contains a Cauchy surface of $\mathbb{M}^\prime$.
We denote the subset of all Cauchy morphisms by $W\subseteq \mathrm{Mor}\,\Loc$
and consider the localization $L : \Loc \to \Loc[W^{-1}]$. Using Lemma \ref{lem:pullpushorth},
we may equip $\Loc[W^{-1}]$ with the pushforward orthogonality relation
$L_\ast(\perp)$ and obtain an orthogonal functor $L :\ovr{\Loc}  
\to \big(\Loc[W^{-1}],L_\ast(\perp) \big) =: \ovr{\Loc[W^{-1}]}$.
\sk

By Theorem \ref{theo:AlgoverOCCperp}, we obtain that 
algebras over $\O_{\ovr{\Loc[W^{-1}]}}\in \Op$
are canonically identified with $L_\ast(\perp)$-commutative
functors $\mathfrak{A} : \Loc[W^{-1}] \to \Alg(\MM)$. By Lemma \ref{lem:localizedperpcom}, 
such functors are canonically identified with 
$\perp$-commutative and $W$-constant functors 
$\mathfrak{A} : \Loc \to \Alg(\MM)$ on $\Loc$.
These are locally covariant quantum field theories 
\cite{Brunetti,FewsterVerch} satisfying the Einstein causality axiom and the time-slice axiom.
\sk

Using Proposition \ref{propo:functoriality} and Theorem \ref{theo:coloradjunctionalgebras},
the orthogonal functor $L :\ovr{\Loc} \to \ovr{\Loc[W^{-1}]}$ defines an adjunction
\begin{flalign}
\xymatrix{
{\O_{L}}_! \,:\, \Alg_{\O_{\ovr{\Loc}}}(\MM) ~\ar@<0.5ex>[r] & \ar@<0.5ex>[l]~ \Alg_{\O_{\ovr{\Loc[W^{-1}]}}}(\MM) \,:\, \O_{L}^\ast
}\quad.
\end{flalign}
We call the left adjoint ${\O_{L}}_!$ the {\em $W$-constantification functor}
or, more specifically, the {\em time-slicification functor}. To a quantum
field theory which may not satisfy the time-slice axiom
it assigns one which does. We shall analyze such adjunctions in more detail 
in Section \ref{subsec:localization}.
\end{ex}

\begin{rem}
Mimicking the previous examples, one can  introduce 
further interesting orthogonal categories which give rise to
algebraic quantum field theories on a fixed spacetime \cite{HaagKastler},
chiral conformal quantum field theories \cite{Kawahigashi,Rehren,BDHcft} 
and Euclidean quantum field theories \cite{Schlingemann}. For the latter
two scenarios, the orthogonality relation is determined by disjointness instead
of causal disjointness. Our framework also applies to
spacetimes with boundaries, in which case adjunctions
similar to those above provide interesting insights into boundary conditions
for quantum field theories, see \cite{BDSboundary}.
\end{rem}


\section{\label{sec:QFTadjunctions}Algebra adjunctions}
Given an orthogonal functor $F : \ovr{\CC} \to \ovr{\DD}$,
we obtain by Proposition \ref{propo:functoriality} 
an $\Op$-morphism $\O_F: \O_{\ovr{\CC}}\to\O_{\ovr{\DD}}$
and thus by Theorem \ref{theo:coloradjunctionalgebras} an adjunction 
\begin{flalign}\label{eqn:generalorthadjunction}
\xymatrix{
{\O_{F}}_! \,:\, \Alg_{\O_{\ovr{\CC}}}(\MM) ~\ar@<0.5ex>[r] & \ar@<0.5ex>[l]~ \Alg_{\O_{\ovr{\DD}}}(\MM) \,:\, \O_{F}^\ast
}
\end{flalign}
between the categories of algebras. The examples in Section \ref{subsec:examples}
show that such adjunctions lead to interesting constructions in quantum
field theory, for example $W$-constantification/time-slicification (cf.\ Example \ref{ex:LCQFTwithTS})
and local-to-global extensions (cf.\ Example \ref{ex:LCQFTnoTS}).
The aim of this section is to study these adjunctions 
for particularly interesting classes of orthogonal functors in more detail.
We will also explain the significance of our results
for quantum field theory.

\subsection{\label{subsec:general}General orthogonal functors}
For a general orthogonal functor $F$, we establish a relation between
\eqref{eqn:generalorthadjunction} and the adjunction
\begin{flalign}\label{eqn:Lanadjunction}
\xymatrix{
\Lan_F \,:\, \Alg(\MM)^\CC~\ar@<0.5ex>[r]&\ar@<0.5ex>[l]  ~\Alg(\MM)^\DD \,:\, F^\ast
}
\end{flalign}
obtained by left Kan extension of algebra-valued functors along the functor $F:\CC\to\DD$.
Notice that the latter neglects the orthogonality relations on $\CC$ and $\DD$.
\sk

In order to compare these two adjunctions,
let us recall from Remark \ref{rem:AlgoverOCC} 
that $\Alg(\MM)^\CC \cong \Alg_{\O_\CC}(\MM)$, where $\O_\CC\in\Op_{\CC_0}$ is
our auxiliary colored operad that does not encode the $\perp$-commutativity relations. 
By Proposition \ref{propo:functoriality}, there exists 
a natural $\Op_{\CC_0}$-morphism $p_{\ovr{\CC}} : \O_{\CC}\to\O_{\ovr{\CC}}$,
hence we obtain from Theorem \ref{theo:coloradjunctionalgebras} a natural adjunction
\begin{flalign}\label{eqn:pCCAlgadjunction}
\xymatrix{
{p_{\ovr{\CC}}}_! \,:\, \Alg(\MM)^\CC~\ar@<0.5ex>[r]&\ar@<0.5ex>[l]  ~ \Alg_{\O_{\ovr{\CC}}}(\MM) \,:\, p_{\ovr{\CC}}^\ast
}\quad.
\end{flalign}
Theorem  \ref{theo:AlgoverOCCperp} implies that 
$\Alg_{\O_{\ovr{\CC}}}(\MM) \cong \Alg(\MM)^{\ovr{\CC}}$
and it is easy to verify that under this identification the right adjoint
functor $p_{\ovr{\CC}}^\ast$ in \eqref{eqn:pCCAlgadjunction}
is given by the functor $U : \Alg(\MM)^{\ovr{\CC}} \to \Alg(\MM)^\CC$
that forgets $\perp$-commutativity. As the latter is a full subcategory embedding, 
we observe that $ \Alg_{\O_{\ovr{\CC}}}(\MM)$ is a {\em full reflective subcategory}
of $ \Alg(\MM)^{\CC}$, see e.g.\ \cite[Chapter IV.3]{MacLane}.
Summing up, we obtain 
\begin{lem}\label{lem:pbarCreflective}
The natural adjunction \eqref{eqn:pCCAlgadjunction} exhibits
$\Alg_{\O_{\ovr{\CC}}}(\MM)$ as a full reflective subcategory of 
$\Alg(\MM)^{\CC}$, i.e.\ the counit  $\epsilon : {p_{\ovr{\CC}}}_!~ p_{\ovr{\CC}}^\ast 
\to \id_{\Alg_{\O_{\ovr{\CC}}}(\MM)}$ is a natural isomorphism.
\end{lem}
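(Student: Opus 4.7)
The whole statement will follow from the standard categorical fact that a right adjoint $G : \mathcal{B} \to \mathcal{A}$ is fully faithful if and only if the counit $\epsilon : FG \to \id_\mathcal{B}$ of the adjunction $F \dashv G$ is a natural isomorphism, and in that case $\mathcal{B}$ is exhibited as a full reflective subcategory of $\mathcal{A}$ (see e.g.\ \cite[Chapter~IV.3]{MacLane}). Hence it suffices to identify the right adjoint $p_{\ovr{\CC}}^\ast : \Alg(\O_{\ovr{\CC}}) \to \Alg(\O_\CC^{})$ with the fully faithful forgetful functor $U : \Mon_\MM^{\ovr{\CC}} \to \Mon_\MM^\CC$ under the natural isomorphisms $\Alg(\O_\CC^{}) \cong \Mon_\MM^\CC$ of Theorem~\ref{theo:AlgoverOCC} and $\Alg(\O_{\ovr{\CC}}) \cong \Mon_\MM^{\ovr{\CC}}$ of Theorem~\ref{theo:AlgoverOCCperp}.

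The plan is to unpack the pullback construction \eqref{eqn:fphipullalgebra} for the $\Op(\MM)$-morphism $(\id_{\CC_0}, p_{\ovr{\CC}}) : (\CC_0, \O_\CC^{}) \to (\CC_0, \O_{\ovr{\CC}})$ and compare it with the forgetful functor. Given $(B, \beta) \in \Alg(\O_{\ovr{\CC}})$, the underlying $\CC_0$-colored object of $p_{\ovr{\CC}}^\ast(B, \beta)$ is still $B$, since $\id_{\CC_0}^\ast = \id_{\MM^{\CC_0}}$, and the structure map is simply the composition $\beta \circ (p_{\ovr{\CC}} \circ \id_B) : \O_\CC^{} \circ B \to B$. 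Tracing this through the proof of Theorem~\ref{theo:AlgoverOCCperp} (which builds on the proof of Theorem~\ref{theo:AlgoverOCC} by imposing one extra relation coming from the $\perp$-commutativity), one sees that the three types of structure morphisms (unit $1_c$, pushforwards $\mathfrak{A}(f)$, multiplications $\mu_c$) extracted from $\beta$ and from $\beta \circ (p_{\ovr{\CC}} \circ \id_B)$ coincide, because the generators in $G_\CC^{}$ lift identically through $p_{\ovr{\CC}}\,q_\CC$ and through $q_\CC$. Consequently, the associated functor $\CC \to \Mon_\MM$ is the same in both cases, and the $\perp$-commutativity property is simply forgotten as an extra axiom. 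The same reasoning on morphisms shows that $p_{\ovr{\CC}}^\ast$ is identified with $U$.

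Since $\Mon_\MM^{\ovr{\CC}}$ is by definition (cf.\ Definition~\ref{def:perpcommutative} c)) the full subcategory of $\Mon_\MM^\CC$ spanned by the $\perp$-commutative functors, the forgetful functor $U$ is fully faithful. The general categorical fact quoted above then yields that the counit $\epsilon : {p_{\ovr{\CC}}}_! \, p_{\ovr{\CC}}^\ast \to \id_{\Alg(\O_{\ovr{\CC}})}$ is a natural isomorphism and that $\Alg(\O_{\ovr{\CC}})$ is a full reflective subcategory of $\Mon_\MM^\CC$. No step involves real computational work: the only thing to be careful about is the identification of $p_{\ovr{\CC}}^\ast$ with $U$, which will be immediate from the naturality of the coequalizer presentations established in Proposition~\ref{propo:functorialityofgenrel} and Corollary~\ref{cor:changeofbasecat}, together with the observation (cf.\ Remark~\ref{rem:pCCviafunctoriality}) that $p_{\ovr{\CC}}$ is precisely the $\Op_{\CC_0}(\MM)$-morphism obtained by functoriality from the identity orthogonal functor $(\CC,\emptyset) \to \ovr{\CC}$.
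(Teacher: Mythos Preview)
Your proposal is correct and follows essentially the same approach as the paper: identify $p_{\ovr{\CC}}^\ast$ with the forgetful functor $U : \Mon_\MM^{\ovr{\CC}} \to \Mon_\MM^\CC$ via Theorems~\ref{theo:AlgoverOCC} and~\ref{theo:AlgoverOCCperp}, then invoke the standard fact from \cite[Chapter~IV.3]{MacLane} that a fully faithful right adjoint has an isomorphism for its counit. The paper's argument (given in the paragraph immediately preceding the lemma) is the same but terser---it simply asserts that the identification of $p_{\ovr{\CC}}^\ast$ with $U$ is ``easy to verify,'' whereas you spell out how this follows from unpacking \eqref{eqn:fphipullalgebra} and the generator--relation presentations.
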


\begin{rem}
The adjunction \eqref{eqn:pCCAlgadjunction} admits a quantum field theoretic
interpretation. The category $\Alg(\MM)^{\CC}$ contains also functors that do not
satisfy the $\perp$-commutativity axiom and hence should not be regarded 
as quantum field theories. The left adjoint functor ${p_{\ovr{\CC}}}_!$ in \eqref{eqn:pCCAlgadjunction} 
allows us to assign to any functor $\mathfrak{B} :\CC\to\Alg(\MM)$ 
a bona fide quantum field theory 
${p_{\ovr{\CC}}}_! (\mathfrak{B}) \in \Alg_{\O_{\ovr{\CC}}}(\MM)$.
This construction may be called {\em $\perp$-abelianization} due to its structural similarity
with the abelianization of algebraic structures such as groups or algebras.
By Lemma \ref{lem:pbarCreflective},
we know that the counit of the adjunction  \eqref{eqn:pCCAlgadjunction} is a natural
isomorphism. Concretely, this means that the $\perp$-abelianization of the functor $\mathfrak{B} = 
p_{\ovr{\CC}}^\ast (A)$ underlying a bona fide quantum field theory
$A \in \Alg_{\O_{\ovr{\CC}}}(\MM)$ is isomorphic to 
$A$ itself via $\epsilon: {p_{\ovr{\CC}}}_!\, p_{\ovr{\CC}}^\ast (A) \to A$. 
\end{rem}

Recalling Remarks \ref{rem:pCCviafunctoriality} and  \ref{rem:coloradjunctioncomposition}, 
we observe that there exists a diagram of adjunctions
\begin{flalign}\label{eqn:LanopLandiagram}
\xymatrix{
\ar@<0.5ex>[d]^-{p_{\ovr{\CC}}^\ast} \Alg_{\O_{\ovr{\CC}}}(\MM) \ar@<0.5ex>[rr]^-{{\O_F}_!} ~&&~ \ar@<0.5ex>[ll]^-{\O_F^\ast} \Alg_{\O_{\ovr{\DD}}}(\MM) \ar@<0.5ex>[d]^-{p_{\ovr{\DD}}^\ast} \\
\ar@<0.5ex>[u]^-{{p_{\ovr{\CC}}}_!} \Alg(\MM)^\CC \ar@<0.5ex>[rr]^-{\Lan_F} ~&&~ \ar@<0.5ex>[ll]^-{F^\ast} \Alg(\MM)^\DD \ar@<0.5ex>[u]^-{{p_{\ovr{\DD}}}_!}
}
\end{flalign} 
in which the square formed by the right adjoint functors commutes,
i.e.\ $p_{\ovr{\CC}}^\ast~\O_F^\ast =F^\ast~ p_{\ovr{\DD}}^\ast$.
This allows us to prove the main result of this subsection.
\begin{propo}\label{propo:causalcorrect}
Let $F:\ovr{\CC} \to \ovr{\DD}$ be an orthogonal functor. There exists
a natural isomorphism
\begin{flalign}
{\O_F}_! ~\cong~ {p_{\ovr{\DD}}}_! ~\Lan_F~ p_{\ovr{\CC}}^\ast
\end{flalign}
of functors $\Alg_{\O_{\ovr{\CC}}}(\MM) \to \Alg_{\O_{\ovr{\DD}}}(\MM)$. 
\end{propo}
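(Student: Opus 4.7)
The plan is to chase the proposition through the diagram \eqref{eqn:LanopLandiagram} together with Lemma \ref{lem:pbarCreflective}. First, I would observe that the square of right adjoints in \eqref{eqn:LanopLandiagram} commutes strictly, i.e.\ $p_{\ovr{\CC}}^\ast\, \O_F^\ast \,=\, F^\ast\, p_{\ovr{\DD}}^\ast$. Since composition of adjunctions is an adjunction, both $\O_F^\ast\, p_{\ovr{\DD}}^\ast$ and $p_{\ovr{\CC}}^\ast\, F^\ast$ are right adjoints, and their left adjoints are $ {p_{\ovr{\DD}}}_!\, \Lan_F$ and ${\O_F}_!\, {p_{\ovr{\CC}}}_!$ respectively. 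By uniqueness of left adjoints up to unique natural isomorphism, I obtain a canonical natural isomorphism
\begin{flalign*}
{\O_F}_! \, {p_{\ovr{\CC}}}_! ~\cong~ {p_{\ovr{\DD}}}_! \, \Lan_F \quad.
\end{flalign*}

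Next, I would precompose this isomorphism with $p_{\ovr{\CC}}^\ast$ to obtain
\begin{flalign*}
{\O_F}_! \, {p_{\ovr{\CC}}}_! \, p_{\ovr{\CC}}^\ast ~\cong~ {p_{\ovr{\DD}}}_! \, \Lan_F \, p_{\ovr{\CC}}^\ast \quad.
\end{flalign*}
Finally, Lemma \ref{lem:pbarCreflective} asserts that the counit $\epsilon : {p_{\ovr{\CC}}}_!\, p_{\ovr{\CC}}^\ast \to \id_{\Alg(\O_{\ovr{\CC}})}$ is a natural isomorphism (since $p_{\ovr{\CC}}^\ast$ is fully faithful, as it corresponds to the full subcategory inclusion $U: \Mon_\MM^{\ovr{\CC}} \to \Mon_\MM^\CC$ under the isomorphism of Theorem \ref{theo:AlgoverOCCperp}). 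Whiskering ${\O_F}_!$ with $\epsilon$ then yields ${\O_F}_! \, {p_{\ovr{\CC}}}_! \, p_{\ovr{\CC}}^\ast \cong {\O_F}_!$, and chaining the two isomorphisms gives the desired natural isomorphism
\begin{flalign*}
{\O_F}_! ~\cong~ {p_{\ovr{\DD}}}_! \, \Lan_F \, p_{\ovr{\CC}}^\ast \quad.
\end{flalign*}

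The proof is entirely formal and short; there is no serious obstacle. The only subtlety worth verifying carefully is that the commuting square of right adjoints in \eqref{eqn:LanopLandiagram} really holds strictly (not merely up to isomorphism), which is clear from the fact that $\O_F$ is defined via the functorial assignment $\ovr{\CC} \mapsto \O_{\ovr{\CC}}$ and the naturality of $p$ established in Corollary \ref{cor:corfunctorialityofOCCperp}, together with the functoriality property \eqref{eqn:coloradjunctioncomposition1} of the pullback along $\Op(\MM)$-morphisms from Lemma \ref{lem:coloradjunctioncomposition}.
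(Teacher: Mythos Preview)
Your proof is correct and follows essentially the same strategy as the paper's: both use the commutativity of the right-adjoint square in \eqref{eqn:LanopLandiagram} together with the full faithfulness of $p_{\ovr{\CC}}^\ast$ from Lemma \ref{lem:pbarCreflective}, and conclude via uniqueness of left adjoints --- you package this as ``the left-adjoint square commutes up to isomorphism, then whisker with the counit'', while the paper unpacks the same content into an explicit chain of Hom-set bijections. One small slip: the right-adjoint composites you name should read $F^\ast\, p_{\ovr{\DD}}^\ast$ and $p_{\ovr{\CC}}^\ast\, \O_F^\ast$ (the orders you wrote do not type-check), though the left adjoints ${p_{\ovr{\DD}}}_!\,\Lan_F$ and ${\O_F}_!\,{p_{\ovr{\CC}}}_!$ you pair with them are correct.
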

\begin{proof}
Let $A \in \Alg_{\O_{\ovr{\CC}}}(\MM)$ and $B \in \Alg_{\O_{\ovr{\DD}}}(\MM)$.
Using \eqref{eqn:LanopLandiagram}, we obtain 
the following chain of natural bijections of Hom-sets
\begin{flalign}
\nn \Alg_{\O_{\ovr{\DD}}}(\MM)\big( {p_{\ovr{\DD}}}_!\, \Lan_F\, p_{\ovr{\CC}}^\ast(A),B\big) &~\cong~
\Alg(\MM)^\DD\big( \Lan_F\, p_{\ovr{\CC}}^\ast(A), p_{\ovr{\DD}}^\ast( B)\big)\\
\nn &~\cong~ \Alg(\MM)^\CC\big(  p_{\ovr{\CC}}^\ast(A), F^\ast\, p_{\ovr{\DD}}^\ast( B)\big)\\
 &~=~ \Alg(\MM)^\CC\big(  p_{\ovr{\CC}}^\ast(A), p_{\ovr{\CC}}^\ast\, \O_F^\ast( B)\big)\quad,
\end{flalign}
where in the last step we used that the square formed by the right adjoint functors commutes.
Using also that the functor  $p_{\ovr{\CC}}^\ast$ is fully faithful (cf.\ Lemma \ref{lem:pbarCreflective}),
we obtain 
\begin{flalign}
\Alg_{\O_{\ovr{\DD}}}(\MM)\big( {p_{\ovr{\DD}}}_!\, \Lan_F p_{\ovr{\CC}}^\ast(A),B\big) ~\cong~
\Alg_{\O_{\ovr{\CC}}}(\MM)\big( A, \O_F^\ast( B)\big)\quad,
\end{flalign}
which implies that ${p_{\ovr{\DD}}}_! \,\Lan_F\, p_{\ovr{\CC}}^\ast$ is a left adjoint of
$\O_F^\ast$. The uniqueness (up to natural isomorphism) of adjoint functors implies the assertion.
\end{proof}

\subsection{\label{subsec:localization}Orthogonal localizations}
Let $\ovr{\CC} = (\CC,\perp)\in\OCat$ be an orthogonal category
and $W\subseteq \mathrm{Mor}\,\CC$ a subset of the set of morphisms.
Consider the localized category $\CC[W^{-1}]$ together with the localization
functor $L : \CC\to\CC[W^{-1}]$. We define an orthogonality relation $L_\ast(\perp)$
on $\CC[W^{-1}]$ by using the pushforward construction from Lemma \ref{lem:pullpushorth}.
We obtain an orthogonal functor $L : \ovr{\CC} \to (\CC[W^{-1}],L_\ast(\perp))=: \ovr{\CC[W^{-1}]}$
and hence by Proposition \ref{propo:functoriality} and Theorem \ref{theo:coloradjunctionalgebras}
an adjunction
\begin{flalign}\label{eqn:localizationadjunction}
\xymatrix{
{\O_{L}}_! \,:\, \Alg_{\O_{\ovr{\CC}}}(\MM) ~\ar@<0.5ex>[r] & \ar@<0.5ex>[l]~ \Alg_{\O_{\ovr{\CC[W^{-1}]}}}(\MM) \,:\, \O_{L}^\ast\quad.
}
\end{flalign}
Recall from Example \ref{ex:LCQFTwithTS} that the left adjoint
is interpreted in terms of 
$W$-constantification/time-slicification. Adjunctions
obtained from orthogonal localizations enjoy the following property.
\begin{propo}\label{propo:timeslicificationproperties}
The adjunction \eqref{eqn:localizationadjunction} exhibits 
$\Alg_{\O_{\ovr{\CC[W^{-1}]}}}(\MM)$
as a full reflective subcategory of $\Alg_{\O_{\ovr{\CC}}}(\MM)$, i.e.\
the counit $\epsilon : {\O_{L}}_!\, \O_{L}^\ast \to \id_{\Alg_{\O_{\ovr{\CC[W^{-1}]}}}(\MM)}$
is a natural isomorphism.
\end{propo}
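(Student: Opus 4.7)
The plan is to use the naturality in $\ovr{\CC}\in\OCat$ of the isomorphism in Theorem \ref{theo:AlgoverOCCperp} to reduce the claim to a statement about functor categories. Specifically, applied to the orthogonal functor $L : \ovr{\CC}\to\ovr{\CC[W^{-1}]}$, the right adjoint $\O_L^\ast$ corresponds under the natural isomorphism $\Alg(\O_{\ovr{(-)}})\cong \Mon_\MM^{\ovr{(-)}}$ to the precomposition functor $L^\ast : \Mon_\MM^{\ovr{\CC[W^{-1}]}} \to \Mon_\MM^{\ovr{\CC}}$. Hence it suffices to prove that $L^\ast$ is fully faithful, since a right adjoint is fully faithful if and only if the counit of the adjunction is a natural isomorphism, which is in turn equivalent to exhibiting its source as a full reflective subcategory of its target (see e.g.\ \cite[Chapter IV.3]{MacLane}).

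To show that $L^\ast$ is fully faithful, I would proceed in two steps. First, Lemma \ref{lem:localizedperpcom} provides a canonical bijection between objects of $\Mon_\MM^{\ovr{\CC[W^{-1}]}}$, i.e.\ $L_\ast(\perp)$-commutative functors $\CC[W^{-1}]\to\Mon_\MM$, and the objects of the full subcategory of $\Mon_\MM^{\ovr{\CC}}$ consisting of those $\perp$-commutative functors $\CC\to\Mon_\MM$ which are in addition $W$-constant; this bijection is implemented precisely by precomposition with $L$. Second, the universal property of localization at the level of natural transformations states that for any pair of functors $\mathfrak{A},\mathfrak{A}^\prime : \CC[W^{-1}]\to\Mon_\MM$, precomposition with $L$ yields a bijection between natural transformations $\mathfrak{A}\to\mathfrak{A}^\prime$ and natural transformations $\mathfrak{A}\,L\to\mathfrak{A}^\prime\,L$. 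Combining both, $L^\ast$ is an isomorphism onto the said full subcategory, hence fully faithful.

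The only non-routine point is the compatibility of $\O_L^\ast$ with $L^\ast$ under the isomorphism of Theorem \ref{theo:AlgoverOCCperp}; this follows from the naturality statement in that theorem, which in turn comes from the natural coequalizer presentation in Corollary \ref{cor:changeofbasecat} (iv). Everything else is a direct application of the universal property of localization and of the standard correspondence between fully faithful right adjoints, invertible counits, and full reflective subcategories.
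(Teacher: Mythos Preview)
Your proposal is correct and follows essentially the same approach as the paper's proof: both transfer the question via the natural isomorphism of Theorem~\ref{theo:AlgoverOCCperp} to showing that precomposition with $L$ is fully faithful, invoke the universal property of localization for this, and then apply the standard equivalence between fully faithful right adjoints and invertible counits. The only cosmetic difference is that the paper argues fully faithfulness of $L^\ast$ on the ambient functor categories $\Mon_\MM^{\CC[W^{-1}]}\to\Mon_\MM^\CC$ first and then restricts to the full subcategories of $\perp$-commutative functors, whereas you work directly on $\Mon_\MM^{\ovr{\CC[W^{-1}]}}\to\Mon_\MM^{\ovr{\CC}}$ and invoke Lemma~\ref{lem:localizedperpcom} for the object-level part; these amount to the same argument.
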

\begin{proof}
The right adjoint functor $\O_{L}^\ast$ in \eqref{eqn:localizationadjunction}
is given by restricting the pullback functor
\begin{flalign}\label{eqn:Lasttmp}
L^\ast \, :\,  \Alg(\MM)^{\CC[W^{-1}]}~\longrightarrow~\Alg(\MM)^{\CC}
\end{flalign}
to the full subcategories  $\Alg(\MM)^{\ovr{\CC[W^{-1}]}}$ and 
$\Alg(\MM)^{\ovr{\CC}}$ of $\perp$-commutative functors.
Due to the universal property of localizations (see also \cite[Chapter 1]{Localization}),
the functor \eqref{eqn:Lasttmp} is a fully faithful embedding
and hence so is $\O_{L}^\ast$ by restriction to full subcategories.
\end{proof}
\begin{rem}
In the context of Example \ref{ex:LCQFTwithTS}, we interpret the left adjoint 
${\O_L}_!$ as the $W$-constantification/time-slicification functor
and the right adjoint $\O_L^\ast$ as the functor forgetting 
the $W$-constancy/time-slice axiom.
Proposition \ref{propo:timeslicificationproperties} has the following
interpretation: Take any quantum field theory
$A \in \Alg_{\O_{\ovr{\CC[W^{-1}]}}}(\MM)$ that does satisfy the
$W$-constancy axiom and forget this property by considering 
$\O_L^\ast(A) \in \Alg_{\O_{\ovr{\CC}}}(\MM)$. 
Applying the $W$-constantification functor 
determines a quantum field theory that is isomorphic
to $A$ via $\epsilon : {\O_L}_!\, \O_L^\ast (A) \to A$.
\end{rem}

\subsection{\label{subsec:subcats}Full orthogonal subcategories}
Let $\ovr{\DD} =(\DD,\perp)\in\OCat$ be an orthogonal category.
Let further $\CC\subseteq \DD$ be a full subcategory with embedding functor denoted
by $j:\CC\to\DD$.  We may equip $\CC$ with the pullback orthogonality relation
$j^\ast(\perp)$ from Lemma \ref{lem:pullpushorth}.
We call $\ovr{\CC} = (\CC,j^\ast(\perp))\in\OCat$ a {\em full orthogonal
subcategory} of $\ovr{\DD}$ and note that $j : \ovr{\CC}\to\ovr{\DD}$
is an orthogonal functor. By Proposition \ref{propo:functoriality} 
and Theorem \ref{theo:coloradjunctionalgebras}, we obtain an adjunction
\begin{flalign}\label{eqn:fullorthsubcatadjunction}
\xymatrix{
{\O_{j}}_! \,:\, \Alg_{\O_{\ovr{\CC}}}(\MM) ~\ar@<0.5ex>[r] & \ar@<0.5ex>[l]~ \Alg_{\O_{\ovr{\DD}}}(\MM) \,:\, \O_{j}^\ast
}\quad.
\end{flalign}
Recall from Example \ref{ex:LCQFTnoTS} that in this case the left adjoint 
should be interpreted as an extension functor of quantum field theories
defined on $\ovr{\CC}$ to theories on $\ovr{\DD}$. Adjunctions obtained 
from full orthogonal subcategory embeddings enjoy the following property.
\begin{propo}\label{propo:unitinjectionofcolor}
The adjunction \eqref{eqn:fullorthsubcatadjunction} exhibits 
$\Alg_{\O_{\ovr{\CC}}}(\MM)$ as a full {\em co}reflective subcategory of $\Alg_{\O_{\ovr{\DD}}}(\MM)$, i.e.\
the unit $\eta : \id_{\Alg_{\O_{\ovr{\CC}}}(\MM)} \to \O_{j}^\ast\, {\O_{j}}_!$ is a natural isomorphism.
\end{propo}
\begin{proof}
Given any $A \in \Alg_{\O_{\ovr{\CC}}}(\MM)$, we use Proposition \ref{propo:changeofcolorcoeq} to present
${\O_j}_!(A) \in \Alg_{\O_{\ovr{\DD}}}(\MM)$ as the reflexive coequalizer
\begin{flalign}
{\O_j}_!(A) ~ = ~ \colim \Big(\xymatrix@C=3em{
F_{\O_{\ovr{\DD}}}\, j_!\, F_{\O_{{\ovr{\CC}}}}(A) \ar@<0.5ex>[r]^-{\partial_0} \ar@<-0.5ex>[r]_-{\partial_1} ~&~ F_{\O_{{\ovr{\DD}}}}\, j_!(A)
}\Big)
\end{flalign}
in $\Alg_{\O_{\ovr{\DD}}}(\MM)$. Applying the right adjoint functor $\O_j^\ast$ and recalling that it
preserves reflexive coequalizers (cf.\ Remark \ref{rem:rightadjointofchangeofcolorpreservesreflexivecoeq}),
we obtain a natural isomorphism 
\begin{flalign}\label{eqn:tmpisofullorthsubcat}
\O_{j}^\ast\, {\O_{j}}_!(A) ~\cong~ \colim \Big(\xymatrix@C=3em{
\O_j^\ast\, F_{\O_{\ovr{\DD}}}\, j_! F_{\O_{\ovr{\CC}}}(A) \ar@<0.5ex>[r]^-{O_j^\ast (\partial_0)} \ar@<-0.5ex>[r]_-{O_j^\ast(\partial_1)} ~&~ \O_j^\ast\, F_{\O_{\ovr{\DD}}}\, j_!(A)
}\Big)
\end{flalign}
in $\Alg_{\O_{\ovr{\CC}}}(\MM)$. Because $j: \ovr{\CC}\to\ovr{\DD}$
is a full orthogonal subcategory embedding
we obtain that $\O_{\ovr{\DD}}\big(\substack{t\\\und{c}}\big) 
= \O_{\ovr{\CC}}\big(\substack{t\\\und{c}}\big)$,
for all $t \in \CC_0 \subseteq \DD_0$, $n \geq 0$, 
$\und{c} \in \CC_0^n \subseteq \DD_0^n$, and, 
for every $X \in \MM^{\CC_0}$ and $s\in \DD_0$, that
$j_! (X)_s \cong X_s$, for $s \in \CC_0 \subseteq \DD_0$,
and $j_! (X)_s \cong \emptyset$ else, where $\emptyset \in \MM$ 
denotes the initial object.
By a straightforward calculation using Theorem \ref{theo:freealgebraoperad} we then 
observe that the functor $\O_j^\ast\, F_{\O_{\ovr{\DD}}}\, j_! : \MM^{\CC_0}\to \Alg_{\O_{\ovr{\CC}}}(\MM)$
is naturally isomorphic to the free $\O_{\ovr{\CC}}$-algebra functor $F_{\O_{\ovr{\CC}}}$.
Applying this to the right-hand side of  \eqref{eqn:tmpisofullorthsubcat},
we obtain a natural isomorphism
\begin{flalign}
\O_{j}^\ast\, {\O_{j}}_!(A) ~ \cong ~ \colim \Big(\xymatrix@C=3em{
F_{\O_{\ovr{\CC}}}\, F_{\O_{\ovr{\CC}}}(A) \ar@<0.5ex>[r]^-{F_{\O_{\ovr{\CC}}}(\alpha)} \ar@<-0.5ex>[r]_-{\gamma_A} ~&~  F_{\O_{\ovr{\CC}}}(A)
}\Big)\quad.
\end{flalign}
By \cite[Lemma 4.3.3]{handbook2}, the right-hand side is naturally 
isomorphic to $A \in \Alg_{\O_{\ovr{\CC}}}(\MM)$, 
which shows that the functor $\O_{j}^\ast\, {\O_{j}}_!: 
\Alg_{\O_{\ovr{\CC}}}(\MM) \to \Alg_{\O_{\ovr{\CC}}}(\MM)$
is naturally isomorphic to the identity functor. 
This is sufficient to conclude that
the unit of the adjunction is a natural isomorphism, 
see e.g.\ \cite[Lemma~1.3]{JohnstoneMoerdijk}.
\end{proof}

Inspired by the application explained in Example \ref{ex:LCQFTnoTS}, 
we introduce the following concept.
\begin{defi}\label{def:jlocal}
We say that an object $A \in \Alg_{\O_{\ovr{\DD}}}(\MM)$ is {\em $j$-local} if
the corresponding component $\epsilon : {\O_{j}}_!\, \O_{j}^\ast (A) \to A$
of the counit of the adjunction \eqref{eqn:fullorthsubcatadjunction} 
is an isomorphism in $\Alg_{\O_{\ovr{\DD}}}(\MM)$. 
We denote the full subcategory of $j$-local objects
by $\Alg_{\O_{\ovr{\DD}}}(\MM)^{\text{$j$-loc}}$.
\end{defi}
Directly from Proposition \ref{propo:unitinjectionofcolor} 
and Definition \ref{def:jlocal} it follows that 
\begin{cor}\label{cor:jlocal}
\begin{itemize}
\item[(i)] For every $B \in \Alg_{\O_{\ovr{\CC}}}(\MM)$, the object ${\O_{j}}_! (B) \in \Alg_{\O_{\ovr{\DD}}}(\MM)$ is $j$-local.
\item[(ii)] The adjunction \eqref{eqn:fullorthsubcatadjunction} restricts to an adjoint equivalence
\begin{flalign}\label{eqn:fullorthsubcatadjunctionlocal} 
\xymatrix{
{\O_{j}}_! \,:\, \Alg_{\O_{\ovr{\CC}}}(\MM)  ~\ar@<0.8ex>[r]_-{\sim} & \ar@<0.8ex>[l] ~\Alg_{\O_{\ovr{\DD}}}(\MM)^{\text{$j$-loc}} \,:\, \O_{j}^\ast
}\quad.
\end{flalign}
\end{itemize}
\end{cor}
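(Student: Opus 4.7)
The plan is to derive both statements from Proposition~\ref{propo:unitinjectionofcolor}, i.e.\ from the fact that the unit $\eta$ of the adjunction \eqref{eqn:fullorthsubcatadjunction} is a natural isomorphism, using only formal properties of adjunctions. No further operadic input is needed; this is the standard ``idempotent adjunction'' argument applied to our specific situation.

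For item~(i), I will fix $B\in\Alg(\O_{\ovr{\CC}})$ and consider the triangle identity
\begin{flalign}
\epsilon_{{\O_j}_!(B)} \circ {\O_j}_!(\eta_B) \,=\, \id_{{\O_j}_!(B)}
\end{flalign}
for the unit $\eta$ and counit $\epsilon$ of the adjunction ${\O_j}_!\dashv\O_j^\ast$. Since $\eta_B$ is an isomorphism by Proposition~\ref{propo:unitinjectionofcolor}, so is ${\O_j}_!(\eta_B)$, and hence the triangle identity forces $\epsilon_{{\O_j}_!(B)}$ to be its inverse, which is in particular an isomorphism. By Definition~\ref{def:jlocal}, this precisely says that ${\O_j}_!(B)\in\Alg(\O_{\ovr{\DD}})^{\text{$j$-loc}}$.

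For item~(ii), part~(i) shows that the left adjoint ${\O_j}_!$ factors through the full subcategory inclusion $\Alg(\O_{\ovr{\DD}})^{\text{$j$-loc}}\hookrightarrow \Alg(\O_{\ovr{\DD}})$. I will check that the right adjoint $\O_j^\ast$ likewise restricts: given any $A\in \Alg(\O_{\ovr{\DD}})^{\text{$j$-loc}}$, one can freely apply $\O_j^\ast$ and land in $\Alg(\O_{\ovr{\CC}})$, so there is nothing to restrict here—it just serves as the right adjoint of the restricted left adjoint. The resulting adjunction
\begin{flalign}
\xymatrix{
{\O_j}_! \,:\, \Alg(\O_{\ovr{\CC}}) ~\ar@<0.5ex>[r]&\ar@<0.5ex>[l]~ \Alg(\O_{\ovr{\DD}})^{\text{$j$-loc}} \,:\, \O_j^\ast
}
\end{flalign}
has unit which is still a natural isomorphism (by Proposition~\ref{propo:unitinjectionofcolor}) and counit which is a natural isomorphism by the very definition of $j$-locality (Definition~\ref{def:jlocal}). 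An adjunction whose unit and counit are both natural isomorphisms is an adjoint equivalence, which is the claim.

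There is no real obstacle here; both parts are formal consequences of Proposition~\ref{propo:unitinjectionofcolor} together with the triangle identities. The only small care needed is to confirm that $\O_j^\ast$ takes $j$-local objects to objects of $\Alg(\O_{\ovr{\CC}})$ compatibly (which is automatic, since the restriction is merely the inclusion into the original domain), and to verify that restricting a left adjoint along a full subcategory inclusion on the target preserves the adjoint relationship with the unchanged right adjoint—this is immediate because the relevant Hom-set bijection is the same as before once both sides are taken in their respective (full) subcategories.
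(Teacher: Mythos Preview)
Your proof is correct and follows essentially the same approach as the paper: both arguments use the triangle identity $\epsilon_{{\O_j}_!(B)}\circ {\O_j}_!(\eta_B)=\id$ together with Proposition~\ref{propo:unitinjectionofcolor} to deduce (i), and then observe that the restricted adjunction has both unit (by Proposition~\ref{propo:unitinjectionofcolor}) and counit (by Definition~\ref{def:jlocal}) invertible.
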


\begin{rem}
In the context of quantum field theory,
the full orthogonal subcategory $\ovr{\CC} \subseteq \ovr{\DD}$ 
should be interpreted as a subcategory of particularly ``nice'' spacetimes,
e.g.\ disks in Euclidean quantum field theory 
or diamonds in Lorentzian quantum field theory, cf.\ Example \ref{ex:LCQFTnoTS}.
A $j$-local object $A \in \Alg_{\O_{\ovr{\DD}}}(\MM)$ 
is a quantum field theory on the bigger spacetime category $\ovr{\DD}$
fully determined by its restriction $\O_j^\ast(A) \in \Alg_{\O_{\ovr{\CC}}}(\MM)$
to the subcategory of ``nice'' spacetimes $\ovr{\CC}$. In this sense, $j$-local
objects should be interpreted as quantum field theories that satisfy a local-to-global property,
which is similar to the one in factorization homology \cite{AyalaFrancis,LurieHA}.
Corollary \ref{cor:jlocal} states that the category of quantum field
theories that satisfy this local-to-global property is equivalent to the category
$\Alg_{\O_{\ovr{\CC}}}(\MM)$ of quantum field theories that are defined only on 
the subcategory $\ovr{\CC}$ of ``nice'' spacetimes. These techniques in particular
apply to spacetimes with boundaries, in which case such adjunctions
provide interesting insights into boundary conditions
for quantum field theories, see \cite{BDSboundary}.
\end{rem}

\subsection{\label{subsec:equivalences}Orthogonal equivalences}
We introduce a suitable notion of equivalence
$F:\ovr{\CC}\to \ovr{\DD}$ between orthogonal categories. We then show that the
adjunction \eqref{eqn:generalorthadjunction} induced by an orthogonal equivalence $F$
is an adjoint equivalence between the associated categories of algebras. 
In the terminology of \cite{KapranovManin}, this means that the $\Op$-morphism
$\O_F: \O_{\ovr{\CC}} \to \O_{\ovr{\DD}}$ is a Morita equivalence.

\begin{defi}\label{def:ortheq}
An orthogonal functor $F : \ovr{\CC}\to\ovr{\DD}$ 
is called an {\em orthogonal equivalence} if the following two properties hold true:
(1)~$F:\CC\to\DD$ is an equivalence of small categories, i.e.\ a fully faithful and
essentially surjective functor, and (2)~$F^\ast(\perp_\DD)=\perp_\CC$.
\end{defi}

\begin{theo}\label{thm:Morita}
Let $F: \ovr{\CC} \to \ovr{\DD}$ be an orthogonal equivalence. Then the induced adjunction 
\eqref{eqn:generalorthadjunction} is an adjoint equivalence
\begin{flalign}
\xymatrix{
{\O_{F}}_! \,:\, \Alg_{\O_{\ovr{\CC}}}(\MM) ~\ar@<0.8ex>[r]_-{\sim} & \ar@<0.8ex>[l]~ \Alg_{\O_{\ovr{\DD}}}(\MM) \,:\, \O_{F}^\ast
}\quad.
\end{flalign}
\end{theo}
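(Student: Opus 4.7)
The strategy is to show that $\O_F^\ast$ is an equivalence of categories; combined with the existence of a left adjoint ${\O_F}_!$ provided by Theorem \ref{theo:coloradjunctionalgebras}, this forces the adjunction to be an adjoint equivalence, since any adjunction $L \dashv R$ with $R$ an equivalence of categories is automatically an adjoint equivalence. By naturality in $\ovr{\CC} \in \OCat$ of the isomorphism $\Alg(\O_{\ovr{\CC}}) \cong \Mon_\MM^{\ovr{\CC}}$ from Theorem \ref{theo:AlgoverOCCperp}, the functor $\O_F^\ast$ corresponds under these isomorphisms to the restriction of the pullback $F^\ast : \Mon_\MM^\DD \to \Mon_\MM^\CC$ to the full subcategories of $\perp$-commutative functors. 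It therefore suffices to exhibit a quasi-inverse of this restricted $F^\ast$.

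Let $G : \DD \to \CC$ be a pseudoinverse of $F$, with natural isomorphisms $\eta : \id_\CC \cong GF$ and $\epsilon : FG \cong \id_\DD$. I first show that $G$ is itself an orthogonal functor $\ovr{\DD} \to \ovr{\CC}$. Given $g_1 \perp_\DD g_2$ with common target $d$, the naturality identity $F(G(g_i)) = \epsilon_d^{-1}\, g_i\, \epsilon_{\source(g_i)}$, together with the stability of $\perp_\DD$ under pre- and post-composition (Definition \ref{def:orthcat}), yields $F(G(g_1)) \perp_\DD F(G(g_2))$. Applying the condition $F^\ast(\perp_\DD) = \perp_\CC$ then gives $G(g_1) \perp_\CC G(g_2)$, so $G$ is orthogonal.

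Since both $F$ and $G$ are orthogonal, the pullbacks $F^\ast$ and $G^\ast$ preserve $\perp$-commutativity and therefore restrict to functors between $\Mon_\MM^{\ovr{\CC}}$ and $\Mon_\MM^{\ovr{\DD}}$. The natural isomorphisms $\eta$ and $\epsilon$ induce, by pre-composition, natural isomorphisms $F^\ast\, G^\ast \cong \id$ on $\Mon_\MM^\CC$ and $G^\ast\, F^\ast \cong \id$ on $\Mon_\MM^\DD$, which restrict to the full subcategories of $\perp$-commutative functors. Hence the restricted $F^\ast$ is an equivalence of categories with quasi-inverse the restricted $G^\ast$. Transporting back via Theorem \ref{theo:AlgoverOCCperp} proves that $\O_F^\ast$ is an equivalence of categories, completing the proof.

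The main subtlety lies in verifying that the pseudoinverse $G$ remains orthogonal: this requires exploiting not merely the orthogonality of $F$ but the stronger condition $F^\ast(\perp_\DD) = \perp_\CC$ built into Definition \ref{def:ortheq}, in combination with the stability axioms of Definition \ref{def:orthcat} and the natural isomorphism $\epsilon$. Once this is in place, the rest of the argument is essentially formal $2$-categorical nonsense about equivalences restricting to reflective subcategories.
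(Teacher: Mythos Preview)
Your proof is correct but follows a genuinely different route from the paper's. The paper first treats the special case where $F$ is injective on objects: there the unit is handled by Proposition \ref{propo:unitinjectionofcolor} (which rests on the explicit reflexive-coequalizer description of ${\O_j}_!$ from Proposition \ref{propo:changeofcolorcoeq}), and the counit via the chain ${\O_F}_!\,\O_F^\ast \cong {p_{\ovr{\DD}}}_!\,\Lan_F\,F^\ast\,p_{\ovr{\DD}}^\ast \cong {p_{\ovr{\DD}}}_!\,p_{\ovr{\DD}}^\ast \cong \id$, which uses Proposition \ref{propo:causalcorrect}, Lemma \ref{lem:pbarCreflective}, and the standard fact that $\Lan_F\,F^\ast \cong \id$ for an equivalence $F$. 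The general case is then reduced to this one by passing to a skeleton and invoking $2$-out-of-$3$. Your argument bypasses all of this operadic infrastructure: you work directly at the level of $\Mon_\MM^{\ovr{\CC}}$ via Theorem \ref{theo:AlgoverOCCperp}, verify that the pseudoinverse $G$ is itself orthogonal (this is where the equality $F^\ast(\perp_\DD)=\perp_\CC$ rather than mere containment is essential, together with composition stability), and conclude that $F^\ast$ and $G^\ast$ restrict to mutually quasi-inverse functors on the $\perp$-commutative subcategories. Your approach is more elementary and self-contained; the paper's route has the advantage of exercising and illustrating the machinery built up in Sections \ref{subsec:general} and \ref{subsec:subcats}.
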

\begin{proof}
Let us first consider the special case where the functor $F:\CC\to\DD$ is also injective on objects.
Then the unit of the adjunction \eqref{eqn:generalorthadjunction}
is a natural isomorphism because of Proposition \ref{propo:unitinjectionofcolor}.
We now show that the counit is a natural isomorphism too. Notice that there is the following chain of 
natural isomorphisms
\begin{flalign}\label{eqn:tmpchainisoequivalence}
{\O_F}_!\, \O_F^\ast ~\cong~ {p_{\ovr{\DD}}}_!\, \Lan_F\, p_{\ovr{\CC}}^\ast\, \O_F^\ast 
~=~ {p_{\ovr{\DD}}}_!\, \Lan_F\, F^\ast\, p_{\ovr{\DD}}^\ast~ \cong~ {p_{\ovr{\DD}}}_!\, p_{\ovr{\DD}}^\ast 
~\cong~ \id_{\Alg_{\O_{\ovr{\DD}}}(\MM)} \quad.
\end{flalign}
In the first step we used Proposition \ref{propo:causalcorrect} 
and in the second step we used that the square of right adjoints in 
\eqref{eqn:LanopLandiagram} commutes. In step three we used
that $F:\CC\to\DD$ is an equivalence of small categories, which 
implies that the pullback functor $F^\ast$ is fully faithful
and hence that the counit $\Lan_F\, F^\ast \to \id_{\Alg(\MM)^\DD}$
of the adjunction \eqref{eqn:Lanadjunction} is a natural isomorphism. The last step follows from 
Lemma \ref{lem:pbarCreflective}. The dual of \cite[Lemma~1.3]{JohnstoneMoerdijk} 
allows us to conclude from \eqref{eqn:tmpchainisoequivalence} 
that the counit $\epsilon$ is a natural isomorphism.
\sk

The general case can be reduced to the special case above by the following argument:
Let us choose a skeleton $\CC^\prime \subseteq \CC$, with embedding functor
denoted by $j:\CC^\prime\to\CC$, and define
$\ovr{\CC}^\prime := (\CC^\prime,j^\ast(\perp_\CC))$. Note that
$\ovr{\CC}^\prime$ is a full orthogonal subcategory of $\ovr{\CC}$.
One easily confirms that 
both $j :\ovr{\CC}^\prime\to\ovr{\CC}$ and $F\,j : \ovr{\CC}^\prime\to \ovr{\DD}$
are orthogonal equivalences that are injective on objects.
Our results above then imply that both $F\, j$ and $j$ induce
adjoint equivalences between the associated categories of algebras.
To complete the proof, we notice that 
the 2-out-of-3 property of equivalences of categories and 
Remark \ref{rem:coloradjunctioncomposition}
implies that also $F$ induces an adjoint equivalence. 
\end{proof}

\begin{rem}
The practical relevance of this result is the following:
Recall from the examples in Section \ref{subsec:examples}
that one is often interested in studying quantum field theories 
which are defined on an orthogonal category that is only {\em essentially} small. 
To avoid set theoretic issues, one has to replace such orthogonal categories
by equivalent small orthogonal categories, whose choice is typically not unique.
Different choices in general define non-isomorphic colored operads which,
however, are Morita-equivalent because of Theorem \ref{thm:Morita},
i.e.\ the associated categories of algebras are naturally equivalent.
The practical implication is that the category of quantum field theories 
does not depend on the choice of a small model for the orthogonal category of interest. 
\end{rem}

\subsection{\label{subsec:orbifold}Right adjoints and orbifoldization}
Given an orthogonal functor $F:\ovr{\CC}\to\ovr{\DD}$,
our focus so far was on the induced adjunction \eqref{eqn:generalorthadjunction}
where the pullback $\O_F^\ast$ is a right adjoint functor and ${\O_F}_!$ is its left adjoint.
Forgetting for the moment the orthogonality relations on our categories, this reduces to
the adjunction \eqref{eqn:Lanadjunction} obtained by left Kan extension.
Because the underlying base category $\MM$ is by hypothesis also complete,
there exists another adjunction (obtained by right Kan extension)
\begin{flalign}\label{eqn:Ranadjunction}
\xymatrix{
F^\ast \,:\, \Alg(\MM)^\DD~\ar@<0.5ex>[r]&\ar@<0.5ex>[l]  ~\Alg(\MM)^\CC \,:\, \Ran_F
}\quad,
\end{flalign}
where $F^\ast$ is the left adjoint. It is natural to ask whether also the 
functor $\O_F^\ast: \Alg_{\O_{\ovr{\DD}}}(\MM) \to \Alg_{\O_{\ovr{\CC}}}(\MM)$ 
admits a right adjoint. In general, this is not the case due to the following
\begin{ex}
Consider the category $\{\ast\}$ consisting of one object $\ast$ and its identity morphism $\id_\ast$.
On this category there exist two different orthogonality relations $\perp_{\mathrm{min}} = \emptyset$
and $\perp_{\mathrm{max}} = \{(\id_\ast,\id_\ast)\}$. 
By Theorem \ref{theo:AlgoverOCCperp}, we obtain
\begin{flalign}
\Alg_{\O_{(\{\ast\},\perp_{\mathrm{min}})}}(\MM)~\cong~\Alg(\MM)~~,\quad
\Alg_{\O_{(\{\ast\},\perp_{\mathrm{max}})}}(\MM)~\cong~\CAlg(\MM)\quad,
\end{flalign}
where $\CAlg(\MM)$ is the category of commutative, associative and unital algebras in $\MM$.
The orthogonal functor
$(\{\ast\},\perp_{\mathrm{min}}) \to (\{\ast\},\perp_{\mathrm{max}})$
induces the adjunction
\begin{flalign}
\xymatrix{
\mathrm{Ab} \,:\, \Alg(\MM)~\ar@<0.5ex>[r]&\ar@<0.5ex>[l]  ~\CAlg(\MM) \,:\, U
}\quad.
\end{flalign}
The right adjoint $U$ is the functor forgetting commutativity and the
left adjoint $\mathrm{Ab}$ is the abelianization of algebras in $\MM$.
Since $U$ fails to preserve coproducts, it can not be a left adjoint functor.
This implies that $\O_F^\ast : \Alg_{\O_{\ovr{\DD}}}(\MM)\to \Alg_{\O_{\ovr{\CC}}}(\MM)$ 
does not always admit a right adjoint.
\end{ex}

We now consider a special situation where it turns out that 
the functor  $\O_F^\ast$ does admit a right adjoint. The motivation for this scenario
comes from {\em orbifoldization}, which is the procedure
of assigning to quantum field theories with group (or groupoid) actions
their corresponding invariants \cite{Dijkgraaf}. 
Such constructions were studied by \cite{BSfiberedingroupoids,BSWhoRan} 
in the context of algebraic quantum field theory 
and by \cite{SW} in the context of topological quantum field theory. 
It is important to stress that the procedure of 
taking invariants is formalized by categorical limits
and hence is related to right adjoints of the functor  $\O_F^\ast$.
\sk

Our scenario is as follows: Let $\ovr{\DD} = (\DD,\perp)$ 
be an orthogonal category and $F : \CC\to\DD$ a category fibered in 
groupoids over $\DD$, see e.g.\  \cite{BSfiberedingroupoids} for a definition. 
We equip $\CC$ with the pullback orthogonality relation $F^\ast(\perp)$
and call the resulting orthogonal functor 
$F :  \ovr{\CC}:=(\CC,F^\ast(\perp)) \to\ovr{\DD}$ 
an {\em orthogonal category fibered in groupoids}.
\begin{propo}
Let $F: \ovr{\CC} \to \ovr{\DD}$ be an orthogonal category fibered in groupoids. 
Then the pullback functor 
$\O_F^\ast: \Alg_{\O_{\ovr{\DD}}}(\MM) \to \Alg_{\O_{\ovr{\CC}}}(\MM)$ 
has a right adjoint, i.e.\ there is an adjunction 
\begin{flalign}
\xymatrix{
\O_F^\ast \,:\, \Alg_{\O_{\ovr{\DD}}}(\MM) ~\ar@<0.5ex>[r] & \ar@<0.5ex>[l]~ \Alg_{\O_{\ovr{\CC}}}(\MM) \,:\, {\O_F}_\ast
}\quad.
\end{flalign}
We call the right adjoint $ {\O_F}_\ast$ the orbifoldization functor.
\end{propo}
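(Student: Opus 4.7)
The plan is to construct ${\O_F}_\ast$ as a right Kan extension of monoid-valued functors. Via the natural isomorphism $\Alg(\O_{\ovr{\CC}}) \cong \Mon_\MM^{\ovr{\CC}}$ of Theorem~\ref{theo:AlgoverOCCperp}, the pullback functor $\O_F^\ast$ corresponds under this identification to the restriction of $F^\ast : \Mon_\MM^\DD \to \Mon_\MM^\CC$ to the full subcategories of $\perp$-commutative functors. It therefore suffices to construct a right adjoint at this functorial level, and to verify that it lands in the $\perp$-commutative subcategory.

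Since $\MM$ is complete and the forgetful functor $\Mon_\MM \to \MM$ creates limits, $\Mon_\MM$ is complete, and the pointwise right Kan extension
\begin{flalign*}
(\Ran_F \mathfrak{B})(d) \,=\, \lim_{(c,\,g) \in (d \downarrow F)} \mathfrak{B}(c)
\end{flalign*}
defines a functor $\Ran_F : \Mon_\MM^\CC \to \Mon_\MM^\DD$ right adjoint to $F^\ast$ on the categories of monoid-valued functors, ignoring orthogonality. The fibration-in-groupoids hypothesis implies that every morphism $g : d \to F(c)$ admits an essentially unique cartesian lift, so the fiber inclusion $F^{-1}(d) \hookrightarrow (d \downarrow F)$, $c \mapsto (c, \mathrm{id}_d)$, has the property that the comma category of morphisms from fiber objects to any $(c',g)$ is connected and non-empty. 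This reduces the above limit to the more concrete ``categorical invariants'' expression $\lim_{c \in F^{-1}(d)} \mathfrak{B}(c)$, matching the physical picture of orbifoldization.

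The decisive step is to show that $\Ran_F$ preserves $\perp$-commutativity. Fix $\mathfrak{B} \in \Mon_\MM^{\ovr{\CC}}$ and an orthogonal pair $f_1 \perp_\DD f_2$ with $f_i : d_i \to d$. By the universal property of the limit $(\Ran_F \mathfrak{B})(d) \cong \lim_{c \in F^{-1}(d)} \mathfrak{B}(c)$, it suffices to verify the commutator identity of Definition~\ref{def:perpcommutative} after projection onto each $\mathfrak{B}(c)$ for $c \in F^{-1}(d)$. For such $c$ I would choose cartesian lifts $\tilde f_i : c_i \to c$ with $F(\tilde f_i) = f_i$ and $F(c_i) = d_i$; the hypothesis $F^\ast(\perp_\DD) = \perp_\CC$ then forces $\tilde f_1 \perp_\CC \tilde f_2$. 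Unwinding the definition of the structure morphisms $(\Ran_F \mathfrak{B})(f_i)$ shows that the projection at $c$ of the relevant commutator diagram is exactly the commutator diagram of $\mathfrak{B}$ associated to $(\tilde f_1, \tilde f_2)$, which vanishes by $\perp_\CC$-commutativity of $\mathfrak{B}$.

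The hard part will be the bookkeeping required to make the last step rigorous: one must describe $(\Ran_F \mathfrak{B})(f_i)$ explicitly under the simplification $\lim_{c \in F^{-1}(d)} \mathfrak{B}(c)$, and check that its projection at $c$ factors through $\mathfrak{B}(\tilde f_i)$ independently of the chosen cartesian lift, which is where the groupoid structure of the fibers enters crucially. Once this compatibility is pinned down, $\perp_\DD$-commutativity of $\Ran_F(\mathfrak{B})$ is a straightforward fiberwise consequence of $\perp_\CC$-commutativity of $\mathfrak{B}$, and the desired adjunction $\O_F^\ast \dashv {\O_F}_\ast$ is obtained by restriction of $F^\ast \dashv \Ran_F$ along the full embeddings into $\perp$-commutative functors.
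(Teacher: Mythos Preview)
Your approach is correct and essentially identical to the paper's: both construct ${\O_F}_\ast$ by restricting the right Kan extension $\Ran_F : \Mon_\MM^\CC \to \Mon_\MM^\DD$ to the full subcategories of $\perp$-commutative functors, with the key technical point being that $\Ran_F$ preserves $\perp$-commutativity. The only difference is that the paper outsources this preservation result to \cite[Theorem~4.3]{BSfiberedingroupoids}, whereas you sketch the argument directly via cartesian lifts and the simplification of the comma-category limit to a fiber limit --- which is precisely the content of that cited theorem.
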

\begin{proof}
In \cite[Theorem~4.3]{BSfiberedingroupoids} it was shown that under our hypotheses 
the right Kan extension $\Ran_F : \Alg(\MM)^\CC \to \Alg(\MM)^\DD$
preserves $\perp$-commutativity. Using Lemma \ref{lem:pbarCreflective},
this implies the existence of a unique functor 
${\O_F}_\ast: \Alg_{\O_{\ovr{\CC}}}(\MM) \to \Alg_{\O_{\ovr{\DD}}}(\MM)$ 
such that $\Ran_F \, p_{\ovr{\CC}}^\ast = p_{\ovr{\DD}}^\ast\,  {\O_F}_\ast$. 
It remains to show that ${\O_F}_\ast$ is the right adjoint of $\O_F^\ast$.
Given any $A \in \Alg_{\O_{\ovr{\CC}}}(\MM)$ 
and $B \in \Alg_{\O_{\ovr{\DD}}}(\MM)$, there is the
following chain of natural bijections of Hom-sets
\begin{flalign}
\nn \Alg_{\O_{\ovr{\DD}}}(\MM) \big(B, {\O_F}_\ast (A)\big) 
&~\cong~ \Alg(\MM)^\DD \big(p_{\ovr{\DD}}^\ast (B) , p_{\ovr{\DD}}^\ast \, {\O_F}_\ast (A)\big) \\
\nn &~\cong~ \Alg(\MM)^\DD \big( p_{\ovr{\DD}}^\ast (B), \Ran_F\, p_{\ovr{\CC}}^\ast (A)\big) \\
\nn &~\cong~ \Alg(\MM)^\CC \big( F^\ast\, p_{\ovr{\DD}}^\ast (B), p_{\ovr{\CC}}^\ast (A)\big) \\
\nn &~\cong~ \Alg(\MM)^\CC \big( p_{\ovr{\CC}}^\ast\, \O_F^\ast (B), p_{\ovr{\CC}}^\ast (A)\big) \\ 
&~\cong~ \Alg_{\O_{\ovr{\CC}}}(\MM) \big( \O_F^\ast (B),  A \big) \quad.
\end{flalign} 
In the first and last step we used Lemma \ref{lem:pbarCreflective}
and in step four we used that the square formed by the right adjoints in 
\eqref{eqn:LanopLandiagram} commutes. This proves that ${\O_F}_\ast$ 
is the right adjoint of $\O_F^\ast$.
\end{proof}


\section{\label{sec:QFTconstructions}Comparison to Fredenhagen's universal algebra}
In \cite{Fre1,Fre2,Fre3}, Fredenhagen studied extensions of
quantum field theories that are defined only on certain open subsets 
$U\subseteq M$ of a spacetime manifold $M$ to the whole of $M$.
It was later recognized by Lang in his PhD thesis \cite{Lang} that this extension
may be formalized as a left Kan extension of the functor underlying the quantum field theory.
In our notation and language, Fredenhagen's universal algebra construction
can be formalized as follows: Let $\ovr{\DD} = (\DD,\perp)\in\OCat$ be an orthogonal category.
Let $\ovr{\CC} = (\CC,j^\ast(\perp))$ be a full orthogonal subcategory
(cf.\ Section \ref{subsec:subcats}) with orthogonal embedding functor denoted by 
$j:\ovr{\CC}\to\ovr{\DD}$. Fredenhagen's universal algebra construction \cite{Fre1,Fre2,Fre3,Lang} 
assigns to a quantum field theory $A \in \Alg_{\O_{\ovr{\CC}}}(\MM)$ on $\ovr{\CC}$ the 
algebra-valued functor
\begin{flalign}\label{eqn:Fredenhagen}
\Lan_j \, p_{\ovr{\CC}}^\ast (A) \in \Alg(\MM)^\DD
\end{flalign}
on the category $\DD$. Notice that the construction \eqref{eqn:Fredenhagen} 
consists of two steps: First, one applies the functor 
$p_{\ovr{\CC}}^\ast: \Alg_{\O_{\ovr{\CC}}}(\MM) \to \Alg(\MM)^\CC$
that forgets $\perp$-commutativity, assigning to the quantum field theory
$A \in \Alg_{\O_{\ovr{\CC}}}(\MM)$ its underlying algebra-valued functor $p_{\ovr{\CC}}^\ast (A) \in \Alg(\MM)^\CC$ 
on $\CC$. In the second step this underlying functor is extended from $\CC$ to $\DD$ via left Kan extension
along the embedding functor $j:\CC\to\DD$.
\sk

A potential weakness of this construction is that it is unclear whether the extended
functor \eqref{eqn:Fredenhagen} satisfies the $\perp$-commutativity axiom on 
$\ovr{\DD}$, i.e.\ whether it is a bona fide quantum field theory in the sense of
an object in $\Alg_{\O_{\ovr{\DD}}}(\MM)$. This weakness
does not appear in our operadic construction explained in Section \ref{subsec:subcats}.
Concretely, instead of using \eqref{eqn:Fredenhagen} to
extend the quantum field theory $A \in \Alg_{\O_{\ovr{\CC}}}(\MM)$ from $\ovr{\CC}$ to $\ovr{\DD}$,
we use the left adjoint in \eqref{eqn:fullorthsubcatadjunction} to assign
\begin{flalign}\label{eqn:Our}
{\O_{j}}_!(A) \in \Alg_{\O_{\ovr{\DD}}}(\MM) \quad.
\end{flalign}
By construction, our extended theory satisfies the $\perp$-commutativity axiom
on $\ovr{\DD}$, i.e.\ it is a bona fide quantum field theory. 
The aim of this section is to compare our construction \eqref{eqn:Our}
to the construction \eqref{eqn:Fredenhagen} of Fredenhagen and Lang.
Our first result is that whenever \eqref{eqn:Fredenhagen} satisfies 
the $\perp$-commutativity axiom on $\ovr{\DD}$, then it agrees
with our construction \eqref{eqn:Our}.
\begin{propo}\label{propo:comparison1}
Let $A \in \Alg_{\O_{\ovr{\CC}}}(\MM)$ be such that 
$\Lan_j \, p_{\ovr{\CC}}^\ast (A) \in \Alg(\MM)^\DD$
is $\perp$-commutative on $\ovr{\DD}$. Then there exists an isomorphism
\begin{flalign}
\Lan_j \, p_{\ovr{\CC}}^\ast (A) ~\cong~p_{\ovr{\DD}}^\ast \,{\O_j}_!(A)
\end{flalign}
in $\Alg(\MM)^\DD$.
\end{propo}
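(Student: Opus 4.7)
The strategy is essentially a one-line manipulation on top of Proposition \ref{propo:causalcorrect}, together with the reflective-subcategory structure established in Lemma \ref{lem:pbarCreflective}. I plan to begin by invoking Proposition \ref{propo:causalcorrect} to rewrite ${\O_j}_! \cong p_{\ovr{\DD}}{}_!\,\Lan_j\,p_{\ovr{\CC}}^\ast$, and then apply the functor $p_{\ovr{\DD}}^\ast$ to the object $A\in\Alg(\O_{\ovr{\CC}})$ to get
\[
p_{\ovr{\DD}}^\ast\,{\O_j}_!(A) \;\cong\; p_{\ovr{\DD}}^\ast\, p_{\ovr{\DD}}{}_!\,\Lan_j\,p_{\ovr{\CC}}^\ast(A)
\]
in $\Mon_\MM^\DD$. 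Hence it suffices to prove that the unit $\eta_B\colon B \to p_{\ovr{\DD}}^\ast\,p_{\ovr{\DD}}{}_!(B)$ of the adjunction $p_{\ovr{\DD}}{}_! \dashv p_{\ovr{\DD}}^\ast$ is an isomorphism at the particular object $B := \Lan_j\,p_{\ovr{\CC}}^\ast(A)$.

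Next I would use that, under the natural isomorphism $\Alg(\O_{\ovr{\DD}}) \cong \Mon_\MM^{\ovr{\DD}}$ supplied by Theorem \ref{theo:AlgoverOCCperp}, the right adjoint $p_{\ovr{\DD}}^\ast$ is identified with the fully faithful forgetful embedding $U\colon \Mon_\MM^{\ovr{\DD}} \hookrightarrow \Mon_\MM^\DD$ whose essential image consists precisely of the $\perp$-commutative monoid-valued functors; this is exactly the identification spelled out in the discussion preceding Lemma \ref{lem:pbarCreflective}. Since by hypothesis $B = \Lan_j\,p_{\ovr{\CC}}^\ast(A)$ is $\perp$-commutative on $\ovr{\DD}$, it lies in this essential image.

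Finally, for any reflective subcategory with fully faithful right adjoint, the unit at an object of the essential image is automatically an isomorphism. Concretely, writing $B \cong p_{\ovr{\DD}}^\ast(A')$ for some $A'\in\Alg(\O_{\ovr{\DD}})$, the triangle identity reads $p_{\ovr{\DD}}^\ast(\epsilon_{A'})\circ\eta_{p_{\ovr{\DD}}^\ast(A')} = \id_{p_{\ovr{\DD}}^\ast(A')}$, and Lemma \ref{lem:pbarCreflective} ensures that the counit $\epsilon$ is a natural isomorphism, so $\eta_B$ must be one as well. Combining this with the displayed isomorphism above yields the desired isomorphism $\Lan_j\,p_{\ovr{\CC}}^\ast(A) \cong p_{\ovr{\DD}}^\ast\,{\O_j}_!(A)$ in $\Mon_\MM^\DD$. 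There is no real obstacle here; the only subtlety is bookkeeping, namely tracking the identifications between $\Alg(\O_{\ovr{\DD}})$, $\Mon_\MM^{\ovr{\DD}}$ and the reflective subcategory of $\Mon_\MM^\DD$ so that the $\perp$-commutativity hypothesis is correctly translated into membership in the essential image of $p_{\ovr{\DD}}^\ast$.
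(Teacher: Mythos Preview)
Your proposal is correct and follows essentially the same approach as the paper: both arguments combine Proposition \ref{propo:causalcorrect} with the reflective-subcategory structure of Lemma \ref{lem:pbarCreflective}, using the hypothesis to place $\Lan_j\,p_{\ovr{\CC}}^\ast(A)$ in the essential image of $p_{\ovr{\DD}}^\ast$. The only cosmetic difference is that the paper first picks $\widehat{A}$ with $p_{\ovr{\DD}}^\ast(\widehat{A})\cong\Lan_j\,p_{\ovr{\CC}}^\ast(A)$ and applies ${p_{\ovr{\DD}}}_!$ together with the counit isomorphism directly, whereas you phrase the same step as showing that the unit is an isomorphism at this object via the triangle identity.
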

\begin{proof}
By Lemma \ref{lem:pbarCreflective}, we know that $\Alg_{\O_{\ovr{\DD}}}(\MM)$ is a full reflective
subcategory of $\Alg(\MM)^\DD$. Because $\Lan_j \, p_{\ovr{\CC}}^\ast (A) \in \Alg(\MM)^\DD$
satisfies by hypothesis the $\perp$-commutativity axiom, it follows that there exists
$\widehat{A} \in \Alg_{\O_{\ovr{\CC}}}(\MM)$ such that $ p_{\ovr{\DD}}^\ast(\widehat{A})\cong \Lan_j \, p_{\ovr{\CC}}^\ast (A) $.
Applying ${p_{\ovr{\DD}}}_!$ we obtain
\begin{flalign}
\widehat{A} ~\cong~{p_{\ovr{\DD}}}_!\, p_{\ovr{\DD}}^\ast(\widehat{A})~\cong~
 {p_{\ovr{\DD}}}_!\, \Lan_j \, p_{\ovr{\CC}}^\ast (A) ~\cong~{\O_{j}}_!(A) 
\end{flalign}
in $\Alg_{\O_{\ovr{\DD}}}(\MM)$, where in the first step we used that the counit 
of the adjunction ${p_{\ovr{\DD}}}_! \dashv p_{\ovr{\DD}}^\ast$ 
is a natural isomorphism (cf.\ Lemma \ref{lem:pbarCreflective}) 
and in the last step we used Proposition \ref{propo:causalcorrect}. 
\end{proof}

It thus remains to understand whether
\eqref{eqn:Fredenhagen} does
satisfy the $\perp$-commutativity axiom.
Our strategy to address this question is to compute
explicitly the functor \eqref{eqn:Fredenhagen}
by using the operadic techniques from Section \ref{subsec:Alg}.
To simplify the presentation, we assume that the underlying base category $\MM$
is concrete and that the monoidal unit $\1 \not\cong \emptyset$ is not isomorphic
to the initial object. For example, 
we could take $\MM=\Vec_\bbK$. 
This allows us to think of the objects in $\MM$ as sets with additional 
structures and of the morphisms as structure preserving functions.
In particular, we can perform element-wise computations.
(Using the concept of generalized elements, there is no need to assume that
$\MM$ is concrete. However, we decided to add this reasonable assumption
to make our presentation more transparent.)
\sk

The problem of computing left Kan extensions can be addressed within 
our operadic formalism. Recalling from Remark \ref{rem:AlgoverOCC}
that $\Alg(\MM)^\EE \cong \Alg_{\O_{\EE}}(\MM)$ for any small category $\EE$,
we may describe algebra-valued functors in terms of algebras
over our auxiliary operad $\O_{\EE}$, see Definition \ref{def:OCC}. 
The left Kan extension $\Lan_j : \Alg(\MM)^\CC\to \Alg(\MM)^\DD$
is then identified with the left adjoint of the adjunction
\begin{flalign}
\xymatrix{
{\O_{j}}_! \,:\, \Alg_{\O_{\CC}}(\MM) ~\ar@<0.5ex>[r] & \ar@<0.5ex>[l]~ \Alg_{\O_{\DD}}(\MM) \,:\, \O_{j}^\ast
}\quad,
\end{flalign}
which is induced by applying Theorem \ref{theo:coloradjunctionalgebras}
to the $\Op$-morphism $\O_j : \O_{\CC }\to \O_{\DD }$ 
between our auxiliary operads (cf.\ Proposition \ref{propo:functoriality}).
Using Propositions \ref{propo:changeofcolorcoeq} 
and \ref{propo:colimitsinAlg}, we obtain that the 
left Kan extension of an algebra-valued functor $B\in \Alg(\MM)^\CC$
can be computed by the point-wise reflexive coequalizer
\begin{flalign}\label{eqn:LanjBd}
\Lan_j(B)_d ~=~\colim\Big(\xymatrix{
F_{\O_\DD}\, j_!\, F_{\O_\CC}\,(B)_d \ar@<0.5ex>[r]^-{\partial_0}  \ar@<-0.5ex>[r]_-{\partial_1} ~&~ F_{\O_\DD}\, j_!(B)_d
}\Big)
\end{flalign}
in $\MM$. Explicitly, using \eqref{eqn:freealgformula} 
and that $j:\CC\to\DD$ is an inclusion on the sets of objects, we obtain
\begin{flalign}
F_{\O_\DD}\, j_!(B)_d ~\cong~ \int^{\und{c}}\O_\DD\big(\substack{d\\\und{c}}\big)\otimes B_{\und{c}} ~\cong~ \coprod\limits_{\und{c}} \DD(\und{c},d)\otimes B_{\und{c}}\quad,
\end{flalign}
where we denote the objects of the subcategory $\CC\subseteq \DD$ by $c$'s
and generic objects of $\DD$ by $d$'s 
and $\DD(\und{c},d) := \prod_i \DD(c_i,d)$. 
Notice that in the last step we used that the equivalence classes 
induced by the coend admit unique representatives, that we denote by  
\begin{flalign}
\und{g}\otimes\und{b} ~\in~ F_{\O_\DD}\, j_!(B)_d \quad,
\end{flalign}
for $\und{g} \in \DD(\und{c},d)$ and $\und{b} \in B_{\und{c}}$. 
The algebra structure on $F_{\O_\DD}\, j_!(B)_d$ is given by
\begin{flalign}\label{eqn:monoidLan}
\mu_d\big((\und{g}\otimes\und{b})\otimes (\und{g}^\prime \otimes\und{b}^\prime)\big)\, =\, (\und{g},\und{g}^\prime)\otimes \und{b}\otimes\und{b}^\prime~~,\quad 1_d \, = \, \ast\in \DD(\emptyset,d)\quad,
\end{flalign}
where $(\und{g},\und{g}^\prime) := (g_1,\dots,g_n , g_1^\prime, \dots, g_m^\prime)$ is the concatenation
of $\und{g}$ and $\und{g}^\prime$. 
The other object $F_{\O_\DD}\, j_!\, F_{\O_\CC}\, (B)_d$ 
and the morphisms $\partial_0$, $\partial_1$ in \eqref{eqn:LanjBd}
can be computed similarly. The result is
\begin{lem}\label{lem:Lanexplicit}
The functor $\Lan_j(B) : \DD\to\Alg(\MM)$ has the following explicit description:
To an object $d\in\DD$, it assigns the algebra 
\begin{subequations}\label{eqn:Lanexplicitrelations}
\begin{flalign}
\Lan_j(B)_d ~=~ F_{\O_\DD}\, j_!(B)_d\big/\!\sim
\end{flalign}
given by implementing the equivalence relation
\begin{flalign}
\und{g}(\und{f}_1,\dots,\und{f}_n)\otimes\und{b}_1\otimes\cdots\otimes \und{b}_n~\sim~
\und{g}\otimes B(\und{f}_1)(\und{b}_1)\otimes \cdots\otimes B(\und{f}_n)(\und{b}_n)\quad,
\end{flalign}
\end{subequations}
for all $\und{g} \in \DD(\und{c},d)$ with $\und{c} = (c_1,\ldots,c_n)$ 
and $\und{f}_i\in \CC(\und{c}_i,c_i)$, $\und{b}_i \in B_{\und{c}_i}$, 
for  $i=1,\dots,n$.
Here  $\und{g}(\und{f}_1,\dots,\und{f}_n)$ is defined in \eqref{eqn:CCcomp}
and the $n$-fold product $B(\und{f})(\und{b}) := B(f_1)(b_1)\, \cdots\, B(f_n)(b_n)$
is defined by the algebra-valued functor $B: \CC\to\Alg(\MM)$. 
The algebra structure in \eqref{eqn:monoidLan} descends to \eqref{eqn:Lanexplicitrelations}.
Furthermore, to a $\DD$-morphism $h : d\to d^\prime$, 
$\Lan_j(B)$ assigns the $\Alg(\MM)$-morphism 
\begin{flalign}
\Lan_j(B)(h) \, :\, \Lan_j(B)_d  ~\longrightarrow~\Lan_j(B)_{d^\prime} 
~~,\quad \big[\und{g}\otimes\und{b}\big]~\longmapsto ~\big[h(\und{g})\otimes \und{b}\big]\quad.
\end{flalign}
\end{lem}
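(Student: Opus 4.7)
The plan is to unpack the point-wise reflexive coequalizer \eqref{eqn:LanjBd} by making explicit both endpoints and both parallel morphisms, using the structure of the operads $\O_\CC^{}$ and $\O_\DD^{}$ from Definition~\ref{def:OCC} together with Proposition~\ref{propo:changeofcolorcoeq}. The right-hand endpoint $(\O_\DD^{}\circ j_!(B))_d$ has already been identified in the text as $\coprod_{\und{c}\in\Sigma_{\CC_0}}\DD(\und{c},d)\otimes B_{\und{c}}$ with elements written as $\und{g}\otimes\und{b}$, and equipped with the monoid structure \eqref{eqn:monoidLan}. The first step is to perform the analogous computation for $(\O_\DD^{}\circ j_!(\O_\CC^{}\circ B))_d$: expanding the circle products and the left adjoint $j_!$ via Corollary~\ref{cor:freealgebraoperad} and Proposition~\ref{propo:colobadjunction}, and choosing unique representatives with trivial permutation part in each coend, one obtains
\begin{flalign*}
\big(\O_\DD^{}\circ j_!(\O_\CC^{}\circ B)\big)_d
~\cong~
\coprod_{\und{c}\in\Sigma_{\CC_0}}
\DD(\und{c},d)\otimes \bigotimes_{i=1}^{n}
\Big(\coprod_{\und{c}_i\in\Sigma_{\CC_0}} \CC(\und{c}_i,c_i)\otimes B_{\und{c}_i}\Big)\quad,
\end{flalign*}
whose elements may be written as $\und{g}\otimes(\und{f}_1\otimes\und{b}_1)\otimes\cdots\otimes(\und{f}_n\otimes\und{b}_n)$.

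Next I would identify the two parallel morphisms on these concrete representatives. For $\partial_0=\id\circ j_!(\alpha)$ as in \eqref{eqn:partial0}, the structure map $\alpha:\O_\CC^{}\circ B\to B$ of the $\O_\CC^{}$-algebra corresponding to $B\in\Mon_\MM^\CC$ is by Theorem~\ref{theo:AlgoverOCC} assembled from the functor action $B(f)$ and the monoid multiplications $\mu_c$, so that on a representative with trivial permutation one has $\alpha\big((e,\und{f}_i)\otimes\und{b}_i\big)=B(f_{i,1})(b_{i,1})\,\cdots\, B(f_{i,k_i})(b_{i,k_i})$, which gives the right-hand side of the proposed relation in \eqref{eqn:Lanexplicitrelations}. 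For $\partial_1$, I would trace through diagram \eqref{eqn:partial1}: the operad morphism $\O_j=\O_{(j)}$ of Proposition~\ref{propo:functorialityofOCC} maps $(e,\und{f}_i)\in\O_\CC^{}\big(\substack{c_i\\\und{c}_i}\big)$ to $(e,\und{f}_i)\in\O_\DD^{}\big(\substack{c_i\\\und{c}_i}\big)$ via the inclusion $j$ of Hom-sets; the composition $\gamma_{f_!(A)}$ followed by the counit of $f_!\dashv f^\ast$ then applies the operadic multiplication \eqref{eqn:OCCcomp} of $\O_\DD^{}$ to produce $\big(e,\und{g}(\und{f}_1,\dots,\und{f}_n)\big)\otimes\und{b}_1\otimes\cdots\otimes\und{b}_n$, i.e.\ the left-hand side of \eqref{eqn:Lanexplicitrelations}. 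Hence the coequalizer \eqref{eqn:LanjBd} in $\MM$ is exactly the quotient of $(\O_\DD^{}\circ j_!(B))_d$ by the equivalence relation in the statement, and since the reflexive coequalizer is created in $\Mon_\MM$ by the forgetful functor (cf.\ Proposition~\ref{propo:colimitsinAlg} combined with Theorem~\ref{theo:AlgoverOCC}), the monoid structure \eqref{eqn:monoidLan} descends to this quotient.

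Finally, functoriality of the assignment $d\mapsto\Lan_j(B)_d$ is inherited from the functoriality in $d$ of $(\O_\DD^{}\circ j_!(B))_d$; on representatives $\und{g}\otimes\und{b}$ the action of a $\DD$-morphism $h:d\to d^\prime$ is given by post-composition $\und{g}\mapsto h(\und{g})=(h\,g_1,\dots,h\,g_n)$, which descends to the quotient because the relation \eqref{eqn:Lanexplicitrelations} only involves pre-composition. The main obstacle I anticipate is purely bookkeeping: correctly tracking the permutation actions so as to justify the passage to trivial-permutation representatives in both coends, and verifying that the formula for $\partial_1$ obtained from diagram \eqref{eqn:partial1} genuinely reduces to the operadic composition \eqref{eqn:OCCcomp} of $\O_\DD^{}$ via the (co)units of the adjunctions in \eqref{eqn:adjointliftingdiagram}. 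Everything else is routine unpacking of definitions already established in Sections \ref{sec:seqopalg} and \ref{sec:QFToperads}.
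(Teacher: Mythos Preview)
Your proposal is correct and follows exactly the approach the paper indicates: the paper states just before the lemma that ``the other object $\big( \O_\DD^{} \circ j_!\big(\O_\CC^{} \circ B\big)\big)_d$ and the morphisms $\partial_0$, $\partial_1$ \dots\ can be computed similarly'' and then simply records the result, whereas you have written out precisely those computations. The only point to be careful about is the bookkeeping you already flagged, namely justifying the trivial-permutation representatives in the double coend and chasing $\partial_1$ through diagram~\eqref{eqn:partial1}; both are routine given Definition~\ref{def:OCC} and the explicit (co)units in Proposition~\ref{propo:colobadjunction} and Corollary~\ref{cor:freealgebraoperad}.
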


We can now answer the question when \eqref{eqn:Fredenhagen} 
satisfies the $\perp$-commutativity axiom. For this it will be 
useful to introduce the following terminology.
\begin{defi}\label{defi:jclosed}
We say that an object $d\in\DD$ is {\em $j$-closed} if, for every
pair of orthogonal morphisms $(g_1 : c_1\to d)\perp (g_2: c_2\to d)$
with target $d$ and sources $c_1,c_2\in\CC$, 
there exist a morphism $g:c\to d$ with $c\in\CC$ and 
$(f_1:c_1\to c)\perp(f_2:c_2\to c)$ such that the diagram
\begin{flalign}
\xymatrix@C=3em@R=2em{
& d &\\
c_1 \ar[ru]^-{g_1} \ar@{-->}[r]_-{f_1} & c \ar@{-->}[u]^-{g} & c_2 \ar[lu]_-{g_2} \ar@{-->}[l]^-{f_2}
}
\end{flalign}
in $\DD$ commutes.
\end{defi}
\begin{theo}\label{theo:jclosed}
$\Lan_j \, p_{\ovr{\CC}}^\ast (A)\in \Alg(\MM)^\DD$ is $\perp$-commutative over $d\in\DD$,
for all $A \in \Alg_{\O_{\ovr{\CC}}}(\MM)$, if and only if $d$ is $j$-closed.
\end{theo}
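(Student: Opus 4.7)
The plan is to use the explicit point-wise description of the left Kan extension provided by Lemma~\ref{lem:Lanexplicit} and to reduce $\perp$-commutativity over $d$ to a statement about swapping adjacent factors in profiles of morphisms with target $d$. Throughout, write $B := p_{\ovr{\CC}}^\ast(A)$, so $B_c = A_c$ as objects of $\MM$ and $B$ is $\perp$-commutative on $\CC$. A generic pair of elements of $\Lan_j(B)_{d_i}$ has the form $x_i = [\und{h}_i \otimes \und{a}_i]$ with $\und{h}_i \in \DD(\und{c}_i, d_i)$ and $\und{a}_i \in A_{\und{c}_i}$, and by \eqref{eqn:monoidLan} the commutator of $\mathfrak{A}(g_1)(x_1)$ and $\mathfrak{A}(g_2)(x_2)$ amounts to comparing
\[
\big[(g_1(\und{h}_1),\, g_2(\und{h}_2)) \otimes \und{a}_1 \otimes \und{a}_2\big]
\quad\text{and}\quad
\big[(g_2(\und{h}_2),\, g_1(\und{h}_1)) \otimes \und{a}_2 \otimes \und{a}_1\big]
\]
in $\Lan_j(B)_d$. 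Since the second representative is obtained from the first by a block transposition (which on representatives with identity permutation becomes a sequence of transpositions of adjacent factors), it suffices to show that any two adjacent factors of the form $(g_1 h_{1,k}, g_2 h_{2,l})$ can be swapped. Note that $g_1 h_{1,k} \perp g_2 h_{2,l}$ by stability under pre-composition (Definition~\ref{def:orthcat}), and the sources $c_{1,k}, c_{2,l}$ lie in $\CC$.

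For the implication $(\Leftarrow)$, assume $d$ is $j$-closed. For each such pair, pick a factorization $g_1 h_{1,k} = g^{k,l}\, f_1^{k,l}$ and $g_2 h_{2,l} = g^{k,l}\, f_2^{k,l}$ with $c^{k,l} \in \CC$ and $f_1^{k,l} \perp_\CC f_2^{k,l}$. Applying the second relation of Lemma~\ref{lem:Lanexplicit} absorbs the morphisms $f_i^{k,l}$ into $A$, producing an expression of the form $g^{k,l} \otimes \big(A(f_1^{k,l})(a_{1,k}) \cdot A(f_2^{k,l})(a_{2,l})\big)$ in the relevant tensor slot. The $\perp$-commutativity of $A$ on $\CC$ now exchanges the two factors in this product inside $A_{c^{k,l}}$, and unwinding the relation reverses their order in the profile. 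Iterating this local swap over all $(k,l)$ yields the desired equality in $\Lan_j(B)_d$.

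For the converse $(\Rightarrow)$, proceed by contrapositive: assume $d$ is not $j$-closed and exhibit one $A \in \Alg(\O_{\ovr{\CC}})$ for which $\perp$-commutativity over $d$ fails. Fix a witnessing orthogonal pair $(g_1 \colon c_1 \to d) \perp (g_2 \colon c_2 \to d)$ with $c_1, c_2 \in \CC$ admitting no factorization as in Definition~\ref{defi:jclosed}. I would take $A$ to be the free $\O_{\ovr{\CC}}$-algebra $\O_{\ovr{\CC}} \circ X$ on the colored object $X \in \MM^{\CC_0}$ with $X_{c_1} = X_{c_2} = \1$ and $X_c = \emptyset$ otherwise, and consider the canonical generators $a_i \in A_{c_i}$ coming from $\1 \to A_{c_i}$. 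The claim to verify is that $[(g_1,g_2) \otimes a_1 \otimes a_2]$ and $[(g_2,g_1) \otimes a_2 \otimes a_1]$ are distinct in $\Lan_j(B)_d$. The strategy is to construct a separating cocone out of the reflexive coequalizer \eqref{eqn:LanjBd}: explicitly, build an $\MM$-morphism from $\big(\O_\DD^{} \circ j_!(B)\big)_d$ to a suitable test object that equalizes the pair $\partial_0, \partial_1$ but takes different values on the two representatives. Such a cocone corresponds, via the universal property, to a family of $\MM$-morphisms indexed by profiles $\und{c} \in \Sigma_{\CC_0}$ and tuples $\und{g} \in \DD(\und{c}, d)$, compatible with the two relations in Lemma~\ref{lem:Lanexplicit}. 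One can arrange this separating map to be insensitive to any identification that does not go through a common factorization in $\CC$, thereby obstructing the swap precisely because no such factorization exists for $(g_1, g_2)$.

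The main obstacle is the only-if direction, specifically the construction and verification of the separating cocone. The subtlety is that the equivalence relation generating $\Lan_j(B)_d$ is highly non-trivial: a priori, a long chain of factor-absorbing relations together with permutation rearrangements could produce the swap even in the absence of a direct common factorization of $(g_1,g_2)$. To control this, I would track a combinatorial invariant of representatives (for instance, a normal-form description of equivalence classes in $\Lan_j(B)_d$ for the free $A$ above, obtained by pushing all morphisms as far to the right as possible into $A$) and show that two normal forms can coincide only when the swap is mediated by an orthogonal factorization in $\CC$; equivalently, I would define the test object as a free monoid quotient designed so that its structure records precisely which pairs admit such factorizations, and verify that the resulting map factors through $\Lan_j(B)_d$.
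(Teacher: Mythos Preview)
Your $(\Leftarrow)$ direction is correct and matches the paper's argument: reduce to length-$1$ factors, use $j$-closedness to factor through a common $c\in\CC$, and invoke $\perp$-commutativity of $A$ there.

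For $(\Rightarrow)$, there are two issues. First, your colored object $X$ is problematic when $c_1=c_2$: setting $X_{c_1}=X_{c_2}=\1$ gives only a single generator, so $a_1=a_2$ and the two classes you want to separate coincide trivially. The paper fixes this by taking $X_c = (\Sigma_{\CC_0}(c,c_1)\otimes\1)\sqcup(\Sigma_{\CC_0}(c,c_2)\otimes\1)$, which yields two genuinely distinct generators $x_1,x_2$ even when $c_1=c_2$.

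Second, and more substantially, your separating-cocone strategy is left unfinished, and it is harder than necessary. The paper avoids this construction entirely by arguing directly rather than by contrapositive: given the orthogonal pair $(g_1,g_2)$, form the free algebra $A=\O_{\ovr{\CC}}\circ X$ as above and \emph{use} the assumed $\perp$-commutativity over $d$ to obtain the equality
\[
\big[(g_1,g_2)\otimes x_1\otimes x_2\big]=\big[(g_2,g_1)\otimes \tau(x_1\otimes x_2)\big].
\]
Because $x_1,x_2$ are distinct free generators, the only relations available in \eqref{eqn:Lanexplicitrelations} are absorptions of $\CC$-morphisms into $A$; chasing the equivalence then forces a common length-$1$ representative, i.e.\ a factorization $(g_1,g_2)=(g f_1,g f_2)$ through some $c\in\CC$. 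A second application of the same reasoning to the resulting equality inside $A_c$ produces a further factorization with $(f_1',f_2')\in j^\ast(\perp)$. This direct analysis of the equivalence relation for a free algebra replaces your external test-object construction and sidesteps the normal-form bookkeeping you anticipated.
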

\begin{proof}
Let us first prove the direction ``$\Leftarrow$'':
We have to show that $\widetilde A := \Lan_j \, p_{\ovr{\CC}}^\ast (A)\in \Alg(\MM)^\DD$
is $\perp$-commutative over $d$, i.e.\ that, 
for all $(h_1: d_1\to d)\perp (h_2:d_2\to d)$ with target $d$,  
\begin{flalign}
\mu_d\Big( \widetilde A(h_1)\big(\tilde a_1\big)\otimes  
\widetilde A(h_2)\big(\tilde a_2\big) \Big)
~=~\mu_d^\op\Big( \widetilde A(h_1)\big(\tilde a_1\big)\otimes  
\widetilde A(h_2)\big(\tilde a_2\big) \Big)\quad,
\end{flalign}
for all $\tilde a_i\in \widetilde A_{d_i}$, $i=1,2$.
Using Lemma \ref{lem:Lanexplicit}, this condition explicitly reads as
\begin{flalign}\label{eqn:tmpperpcomelements}
\Big[\big(h_1(\und{g}_1) , h_2(\und{g}_2)\big)\otimes \und{a}_1\otimes\und{a}_2\Big]
~=~\Big[\big(h_2(\und{g}_2) , h_1(\und{g}_1)\big)\otimes \tau(\und{a}_1\otimes\und{a}_2)\Big]\quad,
\end{flalign}
for all $\und{g}_i\otimes \und{a}_i \in \coprod_{\und{c}} \DD(\und{c},d_i)\otimes A_{\und{c}}$, 
$i=1,2$,
where $\tau$ is the symmetric braiding of $\MM$. It is sufficient to prove \eqref{eqn:tmpperpcomelements}
for elements $g_i\otimes a_i\in \DD(c_i,d_i)\otimes A_{c_i}$, $i=1,2$,
of length $1$; the general case follows from this by iteration.
Because by assumption $h_1\perp h_2$, it follows from composition stability
of $\perp$ that $(h_1\,g_1: c_1\to d) \perp (h_2\,g_2:c_2\to d)$.
Using further that $d$ is by hypothesis $j$-closed, we find
$g: c\to d$ in $\DD$ and $(f_1:c_1\to c) \perp (f_2:c_2\to c)$ in $\CC$,
such that $(h_1\,g_1,h_2\,g_2) = (g\,f_1,g\,f_2)$. Using the relations in \eqref{eqn:Lanexplicitrelations},
we obtain
\begin{flalign}
\nn\Big[\big(h_1\,g_1 , h_2\, g_2\big)\otimes a_1\otimes a_2\Big]~&=~
\Big[g (f_1 ,f_2)\otimes a_1\otimes a_2\Big]~=~
\Big[g \otimes \mu_c \big( A(f_1)(a_1)\otimes A(f_2)(a_2)\big)\Big]\\
\nn ~&=~\Big[g \otimes \mu_c^\op \big(A(f_1)(a_1)\otimes A(f_2)(a_2)\big)\Big]~=~
\Big[g (f_2 ,f_1)\otimes \tau(a_1\otimes a_2)\Big]\\
~&=~ \Big[\big(h_2\,g_2 , h_1\, g_1\big)\otimes \tau(a_1\otimes a_2)\Big]\quad,
\end{flalign}
where in the third step we used that $A$ is $\perp$-commutative on $\ovr{\CC}$.
\sk

We now prove ``$\Rightarrow$'': Let $(g_1: c_1 \to d) \perp (g_2: c_2 \to d)$ be any
orthogonal pair of morphisms. Our strategy is to construct 
an object $A \in \Alg_{\O_{\ovr{\CC}}}(\MM)$ such that $\perp$-commutativity
of $\Lan_j \, p_{\ovr{\CC}}^\ast (A)$ 
implies the existence of a factorization as in Definition \ref{defi:jclosed}.
We define $X \in \MM^{\CC_0}$ as follows: 
If $c_1=c_2$, we set $X_{c_1} := I \sqcup I$ and $X_c := \emptyset$, 
for $c \in \CC_0 \setminus \{c_1\}$; otherwise, we set $X_{c_1} := I$, 
$X_{c_2} := I$ and $X_c := \emptyset$, for $c \in \CC_0 \setminus \{c_1,c_2\}$.
Because we assume that $\1 \not\cong \emptyset$, 
there exists an element $x_1\in X_{c_1}$ and an 
element $x_2\in X_{c_2}$, such that $x_1$ and $x_2$ are different in the case of $c_1=c_2$.
Let us consider the free $\O_{\ovr{\CC}}$-algebra 
$A := F_{\O_{\ovr{\CC}}}(X) \in \Alg_{\O_{\ovr{\CC}}}(\MM)$
and note that $x_1\in A_{c_1}$ and $x_2\in A_{c_2}$ are generators.
Because $\Lan_j\, p_{\ovr{\CC}}^\ast (A)$ is by hypothesis 
$\perp$-commutative over $d\in \DD$, it follows that
\begin{flalign}
\Big[ (g_1,g_2) \otimes x_1 \otimes x_2 \Big] 
~=~ \Big[ (g_2,g_1) \otimes \tau(x_1 \otimes x_2) \Big] \quad,
\end{flalign}
where $(g_1: c_1 \to d) \perp (g_2: c_2 \to d)$ is the given orthogonal pair of morphisms.
Using the equivalence relation \eqref{eqn:Lanexplicitrelations}
and that $x_1$, $x_2$ are two distinct generators of the free $\O_{\ovr{\CC}}$-algebra $A$, 
one observes that this equality of equivalence classes can only hold true if
both sides admit a representative of length $1$, i.e.\ with only one tensor factor in $A$. Hence,
there must exist a factorization $(g_1,g_2) = (g\,f_1,g\,f_2)$
with $f_i : c_i\to c$ in $\CC$, for $i=1,2$, and $g : c\to d$ in $\DD$.
This yields the equality
\begin{flalign}
\Big[ g \otimes \mu_c \big( A(f_1)(x_1) \otimes A(f_2)(x_2) \big) \Big] 
~=~ \Big[ g \otimes \mu_c^\op \big( A(f_1)(x_1) \otimes A(f_2)(x_2) \big) \Big] \quad,
\end{flalign}
which ensures the existence of a further factorization $g = g^\prime\,f^\prime$
with $f^\prime : c\to c^\prime$ in $\CC$ and $g^\prime : c^\prime\to d$
in $\DD$ such that $(f^\prime\, f_1 , f^\prime\, f_2\big)\in j^\ast(\perp)$.
\end{proof}

Let us denote by $\DD^\prime\subseteq \DD$ the full subcategory
of $j$-closed objects. Notice that every object $c\in\CC$ is $j$-closed, 
hence $\CC\subseteq \DD^\prime$.
Equipping $\DD^\prime$ with the pullback orthogonality relation,
we obtain a factorization of $j : \ovr{\CC}\to\ovr{\DD}$ into two
full orthogonal subcategory embeddings $j^\prime : \ovr{\CC}\to\ovr{\DD}^\prime$
and $j^{\prime\prime} : \ovr{\DD}^\prime\to\ovr{\DD}$.
Combining Theorem \ref{theo:jclosed} and Proposition \ref{propo:comparison1},
we obtain
\begin{cor}
Let $\ovr{\DD}^\prime$ be the full orthogonal subcategory of $j$-closed objects in $\ovr{\DD}$
and consider the full orthogonal subcategory embedding $j^\prime : \ovr{\CC}\to\ovr{\DD}^\prime$.
Then there exists a natural isomorphism
\begin{flalign}
\Lan_{j^\prime}\,p_{\ovr{\CC}}^\ast ~\cong~ p_{\ovr{\DD}^\prime}^\ast\, {\O_{j^\prime}}_!\quad.
\end{flalign}
\end{cor}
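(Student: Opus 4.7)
The plan is to reduce the statement to a direct application of Proposition~\ref{propo:comparison1}, applied not to $j$ but to the factored embedding $j^\prime : \ovr{\CC}\to\ovr{\DD}^\prime$. That proposition hands us a natural isomorphism $\Lan_{j^\prime}\,p_{\ovr{\CC}}^\ast(A)\cong p_{\ovr{\DD}^\prime}^\ast\,{\O_{j^\prime}}_!(A)$ for every $A\in\Alg(\O_{\ovr{\CC}})$ \emph{provided} that $\Lan_{j^\prime}\,p_{\ovr{\CC}}^\ast(A)\in\Mon_\MM^{\DD^\prime}$ is $\perp_{\DD^\prime}$-commutative on $\ovr{\DD}^\prime$. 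So the whole task is to establish this $\perp$-commutativity for every $A$, and then to read off naturality in $A$.

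To obtain $\perp$-commutativity for every $A$, I would invoke Theorem~\ref{theo:jclosed}, applied to the orthogonal functor $j^\prime: \ovr{\CC}\to\ovr{\DD}^\prime$ rather than $j: \ovr{\CC}\to\ovr{\DD}$. That theorem reduces the claim to showing that every object $d^\prime\in\DD^\prime$ is $j^\prime$-closed in the sense of Definition~\ref{defi:jclosed}. The key step is to verify that $j^\prime$-closedness of $d^\prime\in\DD^\prime$ coincides with $j$-closedness of $d^\prime$ regarded as an object of $\DD$, so that the defining property of $\DD^\prime$ delivers the required condition tautologically.

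This comparison of closedness notions is an easy but important unwinding of the definitions. I would argue as follows. Because $j^{\prime\prime} : \ovr{\DD}^\prime\to\ovr{\DD}$ is a full orthogonal subcategory embedding, the orthogonality relation on $\ovr{\DD}^\prime$ is the restriction $\perp_{\DD^\prime}={j^{\prime\prime}}^\ast(\perp_\DD)$, and ${j^\prime}^\ast(\perp_{\DD^\prime}) = j^\ast(\perp_\DD)$. Hence an orthogonal pair $(g_1:c_1\to d^\prime)\perp_{\DD^\prime}(g_2:c_2\to d^\prime)$ with $c_1,c_2\in\CC$ is the same data as an orthogonal pair with target $d^\prime$ in $\ovr{\DD}$, and a factorization through $c\in\CC$ with $(f_1,f_2)\in j^\ast(\perp)$ lives entirely inside $\DD^\prime$ because $c\in\CC\subseteq\DD^\prime$ and $\DD^\prime\subseteq\DD$ is full. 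Thus $d^\prime$ is $j^\prime$-closed if and only if it is $j$-closed, and by construction every object of $\DD^\prime$ is the latter.

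The only real obstacle I anticipate is this bookkeeping with the two orthogonality relations; the rest is routine. Once every $d^\prime\in\DD^\prime$ is shown to be $j^\prime$-closed, Theorem~\ref{theo:jclosed} yields $\perp$-commutativity of $\Lan_{j^\prime}\,p_{\ovr{\CC}}^\ast(A)$ on $\ovr{\DD}^\prime$ for all $A$, and Proposition~\ref{propo:comparison1} then produces the desired isomorphism. Naturality in $A$ is inherited from the proof of Proposition~\ref{propo:comparison1}, which realizes this isomorphism as the component of a natural transformation arising from the unit and counit of the adjunction ${p_{\ovr{\DD}^\prime}}_!\dashv p_{\ovr{\DD}^\prime}^\ast$ and from the natural isomorphism ${\O_{j^\prime}}_!\cong{p_{\ovr{\DD}^\prime}}_!\,\Lan_{j^\prime}\,p_{\ovr{\CC}}^\ast$ of Proposition~\ref{propo:causalcorrect}.
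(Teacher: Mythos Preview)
Your proposal is correct and matches the paper's approach exactly: the paper simply states that the corollary follows by combining Theorem~\ref{theo:jclosed} and Proposition~\ref{propo:comparison1}, and you have spelled out precisely how that combination works, including the observation that $j^\prime$-closedness for objects of $\DD^\prime$ coincides with $j$-closedness. The only addition you make beyond the paper is the explicit bookkeeping with the pulled-back orthogonality relations and the remark on naturality, both of which are straightforward and correct.
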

We conclude this section by providing examples and counterexamples
of $j$-closed objects in the context of the examples discussed in Section 
\ref{subsec:examples}.
\begin{ex}
In the context of Example \ref{ex:LCQFTnoTS}, consider the
full orthogonal subcategory $j : \ovr{\Loc}_\diamond\to\ovr{\Loc}$ 
of diamond spacetimes $\mathbb{U}$, whose underlying manifold
is diffeomorphic to $\bbR^m$. This scenario has been studied in \cite{Lang}.
We first notice that every disconnected spacetime $\mathbb{M}\in\Loc$
is {\em not} $j$-closed: Two embeddings $g_1 : \mathbb{U}_1\to \mathbb{M}$
and $g_2 : \mathbb{U}_2\to \mathbb{M}$ of diamonds into different connected components of $\mathbb{M}$
are clearly orthogonal, $g_1\perp g_2$, however they do not factorize through
a common diamond $g: \mathbb{U} \to \mathbb{M}$. 
Hence, Fredenhagen's universal algebra \eqref{eqn:Fredenhagen}
in general fails to produce functors that are $\perp$-commutative over disconnected
spacetimes and our construction \eqref{eqn:Our} solves this issue.
On the other hand, objects $\mathbb{M}\in\Loc$ whose underlying manifold
is diffeomorphic to $\bbR\times\mathbb{S}^{m-1}$ are $j$-closed:
Given two causally disjoint embeddings $g_1 : \mathbb{U}_1\to \mathbb{M}$ 
and $g_2 : \mathbb{U}_2\to \mathbb{M}$ of diamonds, $g_1\perp g_2$,
they factorize through the diamond inclusion $g : \mathbb{U}\to\mathbb{M}$,
where $\mathbb{U}$ is defined by restricting $\mathbb{M}$
to the globally hyperbolic open subset $U = M\setminus J_{\mathbb{M}}(\{x\}) \cong \bbR^m$
for some $x\in M \setminus J_{\mathbb{M}}(g_1(U_1)\cup g_{2}(U_2))$.
A complete characterization of the $j$-closed objects in $\ovr{\Loc}$ seems to be
rather complicated and is beyond the scope of this article. 
Similar conclusions can be obtained also in the context 
of chiral conformal and Euclidean field theories. 
\end{ex}


\section*{Acknowledgments}
We are very grateful to Ulrich Bunke for encouraging us to
develop an operadic framework for algebraic quantum field theory. 
We also would like to thank Claudia Scheimbauer and 
Christoph Schweigert for useful technical comments and suggestions.
The work of M.B.\ is supported by a research grant funded by 
the Deutsche Forschungsgemeinschaft (DFG, Germany). 
A.S.\ gratefully acknowledges the financial support of 
the Royal Society (UK) through a Royal Society University 
Research Fellowship, a Research Grant and an Enhancement Award. 
L.W.\ is supported by the RTG 1670 ``Mathematics inspired 
by String Theory and Quantum Field Theory''.


\end{document}